\newtheoremstyle{owntheorem}{}
  {}
  {\itshape}
  {}
  {}
  {:}
  {1em}
  {\textbf{\thmname{#1}\thmnumber{ #2}}\thmnote{ [#3]}}
\newtheoremstyle{owndefinition}%
  {}{}%
  {}{}%
  {}{:}%
  {1em}%
  {\textbf{\thmname{#1}\thmnumber{ #2}}\thmnote{ [#3]}}
\theoremstyle{owntheorem}
\newtheorem{theorem}{Theorem}[section]
\newtheorem{lemma}[theorem]{Lemma}
\newtheorem{cor}[theorem]{Corollary}
\theoremstyle{owndefinition}
\newtheorem{example}{Example}[section]
\newtheorem*{remark}{Remark}
\newtheorem*{acknowledgements}{Acknowledgements}
\numberwithin{equation}{section} 
\newcommand{\diag}{\operatorname{diag}}
\newcommand{\ch}{\operatorname{char}}
\newcommand{\Arg}{\operatorname{Arg}}
\renewcommand\Re{\operatorname{Re}} 		
\newcommand{\pd}[2]{\frac{\partial #1}{\partial #2}}
\newcommand{\comma}{\quad \textrm{ ,}} 
\newcommand{\point}{\quad \textrm{ .}} 
\newcommand{\monthword}[1]{\ifcase#1\or \or \or \or April\fi}
\newcommand{\vol}{\operatorname{vol}}
\newcommand{\rank}{\operatorname{rank}}
\newcommand{\Aa}{\mathcal{A}}
\newcommand{\nuu}{{\underline{\nu}}}
\newcommand{\Conv}{\operatorname{Conv}}
\newcommand{\relint}{\operatorname{relint}}
\newcommand{\Adj}{\operatorname{Adj}}
\newcommand{\Newt}{\operatorname{Newt}}
\newcommand{\Cc}{\overline{\mathcal C}_f^{\mathrm c}}
\newcommand{\Aff}{\operatorname{Aff}}
\newcommand{\AnK}{\mathbb{A}^n_{\mathbb{K}}}
\newcommand{\PnK}{\mathbb{P}^n_{\mathbb{K}}}
\newcommand{\Sing}{\operatorname{Sing}}
\newcommand{\codim}{\operatorname{codim}}
\newcommand{\Uu}{\mathcal{U}}
\newcommand{\Ff}{\mathcal{F}}
\newcommand{\Gg}{\mathcal{G}}
\newcommand{\Csn}{(\mathbb C^*)^n}
\newcommand{\CNV}{completely non-vanishing }
\newcommand{\HKP}{Horn-Kapranov-parametrization }
\newcommand{\HKp}{Horn-Kapranov-parametrization}
\title{Kinematic singularities of Feynman integrals and principal $A$-determinants}
\author{René Pascal Klausen\footnote{Department of Physics at Johannes-Gutenberg University of Mainz and Humboldt University of Berlin (klausen@physik.hu-berlin.de)}}
\date{\today}
\begin{document}
  \maketitle
  
  \begin{abstract}
    We consider the analytic properties of Feynman integrals from the perspective of general $A$-discriminants and $\Aa$-hypergeometric functions introduced by Gelfand, Kapranov and Zelevinsky (GKZ). This enables us, to give a clear and mathematically rigour description of the singular locus, also known as Landau variety, via principal $A$-determinants. We also comprise a description of the various second type singularities. Moreover, by the \HKP we give a very efficient way to calculate a parametrization of Landau varieties. We furthermore present a new approach to study the sheet structure of multivalued Feynman integrals by use of coamoebas.
  \end{abstract}

  \section{Introduction}
  
    Feynman integrals are usually understood as functions depending on various observables and parameters. Even though the physical observables do not take complex values in measurements, these Feynman integrals can only be thought consistently in complex domains. By considering Feynman integrals as complex functions and examining their analytic properties, surprising connections were found, for example dispersion relations and Cutkosky's rules \cite{CutkoskySingularitiesDiscontinuitiesFeynman1960, BlochCutkoskyRulesOuter2015}. As conjectured for the first time by T. Regge, it seems that these connections are not just arbitrary and indicate a more fundamental relation between the monodromy group and the fundamental group for Feynman integrals in the context of Picard-Lefschetz theory (see e.g. \cite{SpeerGenericFeynmanAmplitudes1971, PonzanoMonodromyRingsClass1969} for Regge's perspective). Apart from these conceptual questions, the analytic structure plays also an important role in many practical approaches, for example the analytic continuation, Steinman relations \cite{CahillOpticalTheoremsSteinmann1975}, sector decomposition \cite{BinothNumericalEvaluationMultiloop2004, AnastasiouEvaluatingMultiloopFeynman2007, BorowkaNumericalEvaluationMultiloop2013} or certain methods in QCD \cite{LibbyMassDivergencesTwoparticle1978}.

    Thus, there are many good reasons to assume that one needs to understand the analytic structure in order to obtain a profound insight of perturbative scattering amplitudes. However, compared with the importance of this subject, little is known about the analytic structure of Feynman integrals in general. A main reason for this gap seems to be the sophisticated nature of the mathematical framework. Furthermore, the analytic structure is also a hard problem from an algorithmical point of view: only for a few simple Feynman graphs the analytic structure is known.
    
    The structural investigation of the analytic properties of Feynman integrals was started in 1959 independently by Bjorken \cite{BjorkenExperimentalTestsQuantum1959}, Nakanishi \cite{NakanishiOrdinaryAnomalousThresholds1959} and Landau \cite{LandauAnalyticPropertiesVertex1959} and is also known as Landau analysis. Since a comprehensive historical review on Landau analysis can be found in \cite{MizeraCrossingSymmetryPlanar2021}, we restrict ourselves to a very short historical overview. For a summary of the first steps of this subject from the 1960s we refer to the monograph of Eden et al. \cite{EdenAnalyticSmatrix1966}. A much more mathematically profound investigation was carried out by Pham et al. in terms of homology theory \cite{HwaHomologyFeynmanIntegrals1966, PhamSingularitiesIntegrals2011}. Pham's techniques have recently brought back into focus by S. Bloch and D. Kreimer \cite{BlochCutkoskyRulesOuter2015}. An alternative approach avoiding the introduction of homology theory was initiated by Regge, Ponzano, Speer and Westwater \cite{PonzanoMonodromyRingsClass1969, SpeerGenericFeynmanAmplitudes1971}. Their work was also the starting point for a mathematical treatment due to Kashiwara and Kawai \cite{KashiwaraConjectureReggeSato1976}, Sato \cite{SatoRecentDevolpmentHyperfunction1975} and Sato et al. \cite{SatoHolonomyStructureLandau1977}, which are all heavily based on partial differential equation systems. Currently, there is a renewed interested in Landau varieties and we refer to \cite{CollinsNewCompleteProof2020, MuhlbauerMomentumSpaceLandau2020} for a selection of modern approaches as well as \cite{BonischAnalyticStructureAll2021, BonischFeynmanIntegralsDimensional2021}, where the analytic structure of specific Feynman integrals was studied in the context of differential equations by methods of topological string theory on Calabi-Yau manifolds. \\
    
    In this work we will also use partial differential equations in order to analyze the singular locus of Feynman integrals and we will do this in the framework of $\Aa$-hypergeometric functions, which were introduced by Gelfand, Kapranov, Zelevinsky (GKZ) and collaborators \cite{GelfandEquationsHypergeometricType1988, GelfandHypergeometricFunctionsToral1989, GelfandGeneralizedEulerIntegrals1990, GelfandHypergeometricFunctionsToric1991, GelfandGeneralHypergeometricSystems1992}. These $\Aa$-hypergeometric functions, where Feynman integrals are a special case, are defined as solutions of partial differential equation systems and their singular locus can be obtained by the characteristic ideal of that system. This approach gives us very familiar equations known from Landau analysis. However, as the $\Aa$-hypergeometric theory is well understood from a conceptual point of view, we can benefit from it for Feynman integrals, inter alia we get very useful tools as the \HKp.   
    
    We will start this article with a short review of $A$-resultants and $A$-discriminants, which state multivariate generalizations of the usual univariate resultants and discriminants. In particular we will also introduce the principal $A$-determinant, which is a specific $A$-resultant. In the following section \ref{sec:AHyp} we will define the $\Aa$-hypergeometric systems and we will reproduce the connection between the singular locus of $\Aa$-hypergeometric functions and convenient principal $A$-determinants. Afterwards we will apply these methods to Feynman integrals in order to obtain a rigorous description of the singular locus, i.e. the Landau variety, by means of principal $A$-determinants. Those principal $A$-determinants factorize in several $A$-discriminants and we can detect factors which correspond to second-type singularities as well as parts corresponding to normal, anomalous and pseudo thresholds. In section \ref{sec:coamoebas} we will introduce the concept of coamoebas, which allow a more detailed examination of the structure of the singular locus. In particular the coamoeba provides a condition to detect pseudo thresholds. In order to demonstrate the advantage of the \HKP we will finish this article by giving a parametrization of the leading Landau singularity of the double-edged triangle graph (also known as dunce's cap) in section \ref{sec:example} as an example.

    \begin{acknowledgements}
      I would like to express my special thanks towards Christian Bogner for his personal support on all questions. My gratitude also goes to Dirk Kreimer and his group for hospitality and helpful discussions. I also benefit from correspondence with Erik Panzer, Marko Berghoff and Max Mühlbauer. This research is supported by the cluster of excellence ``Precision Physics, Fundamental Interactions and Structure of Matter'' (PRISMA+) at Johannes-Gutenberg university of Mainz.
    \end{acknowledgements}

  \section{$A$-resultants and $A$-discriminants} \label{sec:AResultantsDiscriminants}
  
    We are often interested in the question whether a system of simultaneous polynomial equations 
    \begin{align}
      f_0  (x_1,\ldots,x_n) = \ldots = f_n  (x_1,\ldots,x_n) = 0 \label{eq:polysys}
    \end{align}
    has a solution in a given (algebraically closed) field $\mathbb K$ or if it is inconsistent. That question could be answered in general by calculating the Groebner basis of the ideal generated by $f_0,\ldots,f_n$. Thus, a system of polynomial equations is inconsistent if and only if the corresponding reduced Groebner basis is equal to $1$. Unfortunately, the calculation of Groebner bases can be hopeless complicated and computers fail even in simpler examples. Resultants, instead, can answer this question much more efficiently. In general a resultant is a polynomial in the coefficients of the polynomials $f_0,\ldots,f_n$, which vanishes whenever the system (\ref{eq:polysys}) has a common solution. 
    
    However, the theory of multivariate resultants comes with several subtleties. We have to distinguish between \textit{classical multivariate resultants} (also known as \textit{dense resultants}) and \textit{(mixed) $A$-resultants} (or \textit{sparse resultants}). The classical multivariate resultant will be applied to $n$ homogeneous polynomials in $n$ variables, where every polynomial consists in all possible monomials of a given degree, and detects common solutions in projective space $\mathbb P^{n-1}(\mathbb K)$. In contrast, the $A$-resultant is usually used, if the polynomials do not consist in all monomials of a given degree. For a system of $n+1$ polynomials in $n$ variables, the $A$-resultant is a custom-made polynomial and reveals common solutions, which are mostly located in the affine space $(\mathbb C^*)^n$. However, what we accept as a ``solution'' in the latter case is slightly subtle and we will give a precise definition below. Note, that the classical multivariate resultant is a special case of the $A$-resultant \cite{CoxUsingAlgebraicGeometry2005}.
    Furthermore, we want to distinguish between the case where all polynomials $f_0,\ldots,f_n$ having the same monomial structure, which will be defined by a single set $A$ and the mixed case where the polynomials $f_0,\ldots,f_n$ having different monomial structure defined by several sets $A_0,\ldots,A_n$. 
   
    Closely related to resultants are discriminants, which determine whether a polynomial $f$ has a multiple root. This is equivalent to ask if there is a solution such that the polynomial $f$ and its first derivatives vanish. Hence, discriminants play also an important role for identifying singular points of algebraic varieties.
    
    In the following we will sketch several key features of the theory of $A$-resultants and $A$-discriminants, which were mainly introduced in a series of articles by Gelfand, Kapranov and Zelevinsky \cite{GelfandAdiscriminantsCayleyKoszulComplexes1990, GelfandDiscriminantsPolynomialsMany1990, GelfandNewtonPolytopesClassical1990, GelfandDiscriminantsPolynomialsSeveral1991} in the study of $A$-hypergeometric functions \cite{GelfandHypergeometricFunctionsToral1989, GelfandGeneralizedEulerIntegrals1990, GelfandGeneralHypergeometricSystems1992} and were collected in \cite{GelfandDiscriminantsResultantsMultidimensional1994}. For an introduction to $A$-resultants as well as the classical multivariate resultants we refer to \cite{CoxUsingAlgebraicGeometry2005} and \cite{SturmfelsSolvingSystemsPolynomial2002}.
    
    \subsection{Basic notions}
    
      Before introducing the generalized multivariate resultants and discriminants, let us first recall some basic notions in the language of polynomials and algebraic geometry, which can also be found in the most textbooks e.g. \cite{HartshorneAlgebraicGeometry1977, HolmeRoyalRoadAlgebraic2012}. \\
      
      Let $\mathbb K$ be an algebraically closed field, typically the complex numbers $\mathbb C$ or a convenient subfield. The \textit{affine $n$-space} $\AnK$ over $\mathbb K$ is the set of $n$-tuples of elements from $\mathbb K$, where we adopt some structure from the vector space $\mathbb K^n$ in order to treat relative positions of points in $\AnK$. Thus, we assume the existence of a map $S: \AnK \times \AnK \rightarrow \mathbb K^n$, which relates two points from $\AnK$ with a vector in $\mathbb K^n$ representing the relation between these points. For points $a,b,c\in \AnK$ we claim that $S(a,b)+S(b,c) = S(a,c)$ and $S(a,a)=0$. Furthermore we want $S(a,\cdot)$ to be bijective for every $a\in \AnK$. Hence, an affine $n$-space over a field $\mathbb K$ is the vector space $\mathbb K^n$ but with different morphisms.\\
      
      By $\mathbb K[x_1,\ldots,x_n]$ we denote the coordinate ring of $\AnK$, i.e. the ring of polynomials in the variables $x_1,\ldots,x_n$. Furthermore, we can introduce a topology on $\AnK$, the so-called \textit{Zariski topology}, by defining the zero locus of polynomials to be the closed sets. For a set of polynomials $f_1,\ldots,f_k\in\mathbb K[x_1,\ldots,x_n]$ we call
      \begin{align}
        \mathbf V(f_1,\ldots,f_k) := \left\{ x\in \AnK \, \rvert\, f_1(x) = \ldots = f_k(x) = 0\right\} \subseteq \AnK
      \end{align}
      an \textit{affine variety} generated by $f_1,\ldots,f_k$. This definition extends naturally to an ideal of polynomials. An affine variety is called \textit{irreducible}, if it can not be written as the union of two proper subvarieties. \\
      
      Laurent polynomials can be treated in an analogous way. We call the affine variety $\Csn$ the \textit{algebraic torus} and \textit{Laurent monomials} in the variables $x_1,\ldots,x_n$ are nothing else than the characters of $\Csn$
      \begin{align}
        (\mathbb C^*)^n \rightarrow \mathbb C^*, \qquad x \mapsto x^a := x_1^{a_1} \cdots x_n^{a_n} \label{eq:LaurentMonomial}
      \end{align}
      with the exponent $a=(a_1,\ldots,a_n)\in\mathbb Z^n$. Here and in the following we will make use of a multiindex notation as indicated in (\ref{eq:LaurentMonomial}). A \textit{Laurent polynomial} is a finite linear combination of these monomials and can be uniquely written as  
      \begin{align}
        f(x)=f(x_1,\ldots,x_n) = \sum_{a\in A} z_a x^a \in \mathbb C[x_1^{\pm 1},\ldots,x_n^{\pm 1}]
      \end{align}
      where $A\subset \mathbb Z^n$ is a finite set of non-repeating points and we consider the coefficients $z_a\in\mathbb C$ to be complex numbers not identically zero. We will call $A$ the \textit{support} of $f$. The space of all Laurent polynomials with fixed monomials from $A$ but with indeterminate coefficients $z_a$ will be denoted as $\mathbb C^A$. Thus, the set of coefficients $\{z_a\}_{a\in A}$ are coordinates of $\mathbb C^A$.\\
    
      For elements $a^{(1)},\ldots,a^{(k)}\in \AnK$ of an affine space we call
      \begin{align}
        \lambda_1 a^{(1)} + \ldots + \lambda_k a^{(k)} \qquad \text{with}\qquad \sum_{i=1}^k \lambda_i = 1, \quad \lambda_i\in\mathbb K
      \end{align}
      an \textit{affine combination}. Similarly, for a subset $S\subseteq \mathbb K$ we denote by
      \begin{align}
        \Aff_S \left(a^{(1)},\ldots,a^{(k)}\right) := \left\{\lambda_1 a^{(1)} + \ldots + \lambda_k a^{(k)} \,\Big\rvert\, \lambda_i \in  S, \sum_{i=1}^k \lambda_i = 1\right\}
      \end{align}
      the \textit{affine span} generated by the elements $a^{(1)},\ldots,a^{(k)}\in\AnK$ over $S$.  A discrete subgroup of an affine space $\AnK$ is called an \textit{affine lattice} if the subgroup spans the full space $\AnK$. Further, a map $f: \AnK \rightarrow \mathbb A_{\mathbb K}^m$ between two affine spaces is called \textit{affine transformation}, if it preserves all affine combinations.\\
      
      A  minimal generating set of elements $a^{(0)},\ldots,a^{(n)}$ which spans the whole affine space $\AnK$ over $\mathbb K$, is called a \textit{basis} of an affine space $\AnK$ (or a \textit{barycentric frame}). Thus, for a given basis $a^{(0)},\ldots,a^{(n)}$ we can write every element $a\in \AnK$ as an affine combination of that basis and we will call the corresponding tuple $(\lambda_0,\ldots,\lambda_{n})$ the \textit{barycentric coordinates} of $a$. \\
      
      These barycentric coordinates indicate that we can naturally identify the affine space $\AnK$ as a hyperplane in the vector space $\mathbb K^{n+1}$. Thus, we consider the vector space $\mathbb K^{n+1} = \mathbb K^n \cup (\mathbb K^* \times \AnK)$ consisting in the slice of the vector space $\mathbb K^n$, containing the origin, and the remaining slices, each corresponding to an affine spaces $\AnK$. Since all slices, which do not contain the origin behave the same, we can identify w.l.o.g. $\AnK$ as the slice $1\times \AnK$ of $\mathbb K^{n+1}$. This will enable us to treat affine objects with the methods of linear algebra. Therefore, we can accomplish the embedding, by adding an extra coordinate
      \begin{align}
        a = (a_1,\ldots,a_n) \mapsto (1,a_1,\ldots,a_n) \label{eq:homogenization} \point
      \end{align}
      Since points lying on a common hyperplane of $\mathbb K^{n+1}$, correspond to exponents of quasi-homogeneous polynomials, we will call the map of (\ref{eq:homogenization}) \textit{homogenization}. For a finite subset of lattice points $A=\{a^{(1)},\ldots,a^{(N)}\}\subset \mathbb Z^n$, we will write $\mathcal A = \{(1,a^{(1)}),\ldots,(1,a^{(N)})\} \subset \mathbb Z^{n+1}$ as its homogenized version. Whenever it is convenient, we will denote by $\Aa$ also the $(n+1)\times N$ integer matrix collecting the elements of the subset $\Aa\subset\mathbb Z^{n+1}$ as columns.\\
      
      Furthermore, by $\mathbb L := \ker_{\mathbb Z}(\Aa)\subseteq\mathbb Z^N$ we denote the integer kernel of the homogenized matrix $\Aa\in\mathbb Z^{(n+1)\times N}$. The lattice $\mathbb L$ has rank $r:=N-n-1$ and every basis $\mathcal B=\{b_1,\ldots,b_r\}\subset\mathbb Z^N$ of that lattice will be called a \textit{Gale dual} of $\Aa$. Analogue to $\Aa$, we will consider $\mathcal B$ also as an $N\times r$ integer matrix whenever it is convenient.\\
       
      Closely related to the affine space is the projective space. The \textit{projective space} $\PnK$ is the set of equivalence classes in $\mathbb K^{n+1}\setminus\{0\}$, where two elements $a,b\in \mathbb K^{n+1}\setminus\{0\}$ are equivalent if there exists a number $\lambda\in\mathbb K^*$ such that $a=\lambda b$. Thus, points of the projective space can be described by homogeneous coordinates. A point $a\in\PnK$ is associated to the homogeneous coordinates $[s_0:\ldots :s_n]$ if an arbitrary element of the equivalence class of $a$ is described in the vector space $\mathbb K^{n+1}$ by the coordinates $(s_0,\ldots,s_n)$. Note that the homogeneous coordinates are not unique, as they can be multiplied by any element $\lambda\in\mathbb K^*$. 
    
      Furthermore, we can decompose the projective space $\PnK= \AnK \cup \mathbb P^{n-1}_{\mathbb K}$ into an affine space and a projective space of lower dimension. In coordinates that decomposition means, that in case of $s_0\neq 0$, we can choose w.l.o.g. $s_0=1$, which defines the aforementioned affine hyperplane in $\mathbb K^{n+1}$. For $s_0=0$, sometimes referred as the ``points at infinity'' due to $\left[1 : \frac{s_1}{s_0} : \ldots : \frac{s_n}{s_0} \right]$, we obtain the projective space of lower dimension by the other remaining coordinates $[s_1 : \ldots : s_n]$. \\
     
      For a finite subset of points $A=\{a^{(1)},\ldots,a^{(N)}\}\subset\mathbb Z^n$ we define a \textit{(lattice) polytope} $P$ as the convex hull of these points
      \begin{align}
        P := \Conv (A) = \left\{ \lambda_1 a^{(1)} + \ldots + \lambda_N a^{(N)} \,\Big\rvert\, \lambda_i \in\mathbb R, \lambda_i \geq 0, \sum_{i=1}^N \lambda_i = 1 \right\} \subset \mathbb R^n \point
      \end{align}
      Alternatively, we can describe every polytope uniquely as a bounded, minimal intersection of half-spaces
      \begin{align}
        P := P(M,b) = \left\{ \mu\in\mathbb R^n \,\rvert\, m_j^T \cdot \mu \leq b_j, 1 \leq j \leq k \right\} \label{eq:HPolytope}
      \end{align}
      where $m_j \in \mathbb Z^N$ and $b_j\in\mathbb Z$ are relatively prime for $j = 1,\ldots,k$. 
      
      Especially, for polynomials $f(x) = \sum_{a\in A} z_a x^a\in\mathbb C[x_1^{\pm 1},\ldots,x_n^{\pm 1}]$ we define the \textit{Newton polytope}
      \begin{align}
        \Newt (f) := \Conv \left(\left\{a\in A \,\rvert\, z_a\not\equiv 0\right\}\right) \subset \mathbb R^n
      \end{align}
      as the convex hull of the exponent vectors. A subset $\tau\subseteq P$ of a polytope for which there exists a linear map $\phi : \mathbb R^n \rightarrow \mathbb R$, such that
      \begin{align}
        \tau = \left\{ p \in P | \phi(p) \geq \phi(q) \quad\text{for all } q\in P \right\} \subseteq P \label{eq:facedef}
      \end{align}
      is called a \textit{face} of $P$. Every face $\tau$ is itself a polytope, generated by a subset of $A$. Whenever it is convenient we will identify with $\tau$ also this subset of $A$, as well as the subset of $\{1,\ldots,N\}$ indexing the elements of $A$ corresponding to this subset. For a face $\tau\subseteq\Newt(f)$ of a Newton polytope, we define the \textit{truncated polynomial} with respect to $\tau$ as
      \begin{align}
        f_\tau (x) := \sum_{a\in A\cap \tau} z_a x^a
      \end{align}
      consisting only in the monomials corresponding to the face $\tau$.

      Further, we want to introduce a \textit{volume} $\vol(P)\in\mathbb Z_{\geq 0}$ for these lattice polytopes $P$, which is normalized such that the standard $n$-simplex (i.e. the convex hull of the standard unit vectors of $\mathbb R^{n}$ and the origin) has a volume equal to $1$.

    \subsection{Mixed $(A_0,\ldots,A_{n})$-resultants}
    
      After introducing the basic terms, we can now turn towards the multivariate resultants and discriminants. The key idea of resultants is to specify coefficients and variables in a system of polynomial equations and eliminate the variables from it. We will summarise the basic definitions and several properties of the multivariate resultants, which can be found in \cite{PedersenProductFormulasResultants1993, SturmfelsNewtonPolytopeResultant1994, SturmfelsIntroductionResultants1997, SturmfelsSolvingSystemsPolynomial2002, CoxUsingAlgebraicGeometry2005, GelfandDiscriminantsResultantsMultidimensional1994}. \\      
      
      Let $A_0,\ldots, A_{n} \subset \mathbb Z^n$ be finite subsets of the affine lattice $\mathbb Z^n$ and for every set $A_i$ we will consider the corresponding Laurent polynomial 
      \begin{align}
        f_i(x)=f_i(x_1,\ldots,x_n) = \sum_{a\in A_i} z^{(i)}_a x^a \in \mathbb C [x_1^{\pm 1}, \ldots, x_n^{\pm 1}]\point
      \end{align}
      For simplicity we will assume that the supports $A_0,\ldots,A_n$ jointly generate the affine lattice $\mathbb Z^n$. Furthermore, by $P_i := \Conv(A_i)$ we denote the Newton polytope of $f_i$. According to \cite{SturmfelsNewtonPolytopeResultant1994, SturmfelsIntroductionResultants1997, SturmfelsSolvingSystemsPolynomial2002}, we will call a configuration $A_0,\ldots,A_n$ \textit{essential} if
      \begin{align}
        \dim \left( \sum_{j=0}^n P_j \right) = n \qquad \text{and} \qquad  \dim \left(\sum_{j\in J} P_j\right) \geq  |J| \quad \text{for every } J \subsetneq\{0,\ldots,n\} \comma \label{eq:essential}
      \end{align}
      where the sum denotes the Minkowski sum and $|J|$ is the cardinality of a proper subset $J$. If all polytopes $P_i$ are $n$-dimensional, equations (\ref{eq:essential}) is trivially satisfied.
      
      In order to define the general resultants, we are interested in the set of coefficients $z_a^{(i)}$ for which there exists a solution of $f_0(x) = \ldots = f_n(x)=0$ for $x\in\Csn$. In other words we consider the following set in $\prod_{i=0}^n \mathbb C^{A_i}$
      \begin{align}
        \mathscr Z = \left\{ (f_0,\ldots,f_{n}) \in \prod_i \mathbb C^{A_i}| \mathbf V(f_0,\ldots,f_{n})\neq\emptyset \text{ in } (\mathbb C^*)^{n} \right\} \subseteq \mathbb \prod_{i=0}^n \mathbb C^{A_i} \point
      \end{align}
      Furthermore, by $\overline{\mathscr Z}$ we will denote the Zariski closure of $\mathscr Z$. The \textit{mixed $(A_0,\ldots,A_n)$-resultant} $R_{A_0,\ldots,A_n}(f_0,\ldots,f_n)\in\mathbb Z[\{z^{(i)}_a\}_{a\in A, i = 0,\ldots,n}]$ is an irreducible polynomial in the coefficients of the polynomials $f_0,\ldots,f_n$. In case where $\overline{\mathscr Z}$ describes a hypersurface in $\prod_i \mathbb C^{A_i}$ we will define $R_{A_0,\ldots,A_n}(f_0,\ldots,f_n)$ to be the (minimal) defining polynomial of this hypersurface $\overline{\mathscr Z}$. Otherwise, so if $\codim \overline{\mathscr Z} \geq 2$, we will set $R_{A_0,\ldots,A_n}(f_0,\ldots,f_n)=1$. The mixed $(A_0,\ldots,A_n)$-resultant always exists and is uniquely defined up to a sign, which was shown in \cite{GelfandDiscriminantsResultantsMultidimensional1994}.
      
      Further, we have $\codim \overline{\mathscr Z} = 1$ if and only if there exists a unique subset of $A_0,\ldots,A_n$ which is essential \cite{SturmfelsNewtonPolytopeResultant1994}. In that case the mixed $(A_0,\ldots,A_n)$-resultant coincides with the resultant of that essential subset.\\
      
      One has to remark as a warning, that the mixed $(A_0,\ldots,A_n)$-resultants not only detect common solutions of $f_0 = \ldots = f_n = 0$ inside $x\in\Csn$. Due to the Zariski closure in the definition of the resultants, the mixed $(A_0,\ldots,A_n)$-resultants may also describe solutions outside of $x\in\Csn$, e.g. ``roots at infinity''.\\
    
      If all polynomials $f_0,\ldots,f_n$ having the same monomial structure, i.e. $A_0 = \ldots = A_{n} =: A$, we will call $R_A(f_0,\ldots,f_{n}):=R_{A,A,\ldots,A}(f_0,\ldots,f_{n})$ simply the \textit{$A$-resultant}. The $A$-resultants satisfy a natural transformation law.
      
      \begin{lemma}[Transformation of $A$-resultants \cite{GelfandDiscriminantsResultantsMultidimensional1994}]
        Consider the polynomials $f_0,\ldots,f_n\in\mathbb C^A$ and let $D$ be an invertible $(n+1)\times (n+1)$ matrix. For the transformation $g_i = \sum_{j=0}^{n} D_{ij} f_j$ for $i=0,\ldots,n$ we have
        \begin{align}
          R_A(g_0,\ldots,g_n) = \det (D)^{\vol (P)} R_A (f_0,\ldots,f_n)
        \end{align}
        where $P = \Conv (A)$.
      \end{lemma}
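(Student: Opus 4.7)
The plan is to observe that, since $D$ is invertible, the substitution $f_i \mapsto g_i = \sum_{j=0}^n D_{ij} f_j$ leaves the set of coefficient tuples admitting a common zero in $\Csn$ invariant. Hence, viewed as polynomials in the indeterminate coefficients $\{z^{(i)}_a\}_{a\in A,\, i=0,\ldots,n}$ (with the entries $D_{ij}$ regarded as further parameters), both $R_A(g_0,\ldots,g_n)$ and $R_A(f_0,\ldots,f_n)$ cut out the same hypersurface $\overline{\mathscr{Z}} \subseteq \prod_i \mathbb{C}^A$. Because $R_A(f_0,\ldots,f_n)$ is irreducible, this forces
\begin{align*}
  R_A(g_0,\ldots,g_n) = c(D)\, R_A(f_0,\ldots,f_n)
\end{align*}
for some cofactor $c(D)$ that depends only on the entries of $D$. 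The reason the cofactor cannot involve the $z^{(i)}_a$ is that the substitution is linear in those coefficients, so the total multidegree of $R_A(g_0,\ldots,g_n)$ in the $z^{(i)}_a$ cannot exceed that of $R_A(f_0,\ldots,f_n)$.

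Next, I would establish that $c$ is a polynomial character of $GL_{n+1}(\mathbb{C})$. Performing the substitution once with a matrix $D$ and then again with a matrix $E$ corresponds to a single substitution with the product $ED$, so comparing the two computations yields $c(ED) = c(E)\,c(D)$. As $c$ is polynomial in the entries of $D$, the classification of polynomial characters of $GL_{n+1}$ gives $c(D) = \det(D)^k$ for a nonnegative integer $k$.

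Finally, $k$ is fixed by homogeneity. A standard feature of $A$-resultants is that $R_A$ is homogeneous of degree $\vol(P)$ in the coefficients of each single polynomial $f_i$ (with $P=\Conv(A)$). Specializing to $D = t I$ therefore gives
\begin{align*}
  R_A(t f_0,\ldots, t f_n) = t^{(n+1)\vol(P)}\, R_A(f_0,\ldots,f_n),
\end{align*}
while the formula to be proved contributes $\det(tI)^k = t^{(n+1)k}$. Hence $k = \vol(P)$, which is exactly the claim.

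The only delicate point is the initial step: verifying that after the substitution $z^{(i)}_a \mapsto \sum_j D_{ij} z^{(j)}_a$ the resulting polynomial in the $z^{(i)}_a$ is divisible by the irreducible $R_A(f_0,\ldots,f_n)$ with a quotient independent of the $z^{(i)}_a$. This is where the multidegree information on $R_A$ (degree $\vol(P)$ in each coefficient block) plays the crucial role and cannot be avoided; the rest of the argument is a pure character-theoretic bookkeeping that is quite robust.
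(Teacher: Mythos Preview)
The paper does not prove this lemma; it merely states it with a citation to \cite{GelfandDiscriminantsResultantsMultidimensional1994}. So there is no in-paper proof to compare against.

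Your argument is correct and is essentially the standard one. The three ingredients you isolate---(i) invariance of the common-zero locus under an invertible linear change $f_i\mapsto \sum_j D_{ij}f_j$, forcing $R_A(g_\bullet)=c(D)\,R_A(f_\bullet)$ by irreducibility and preservation of total degree; (ii) multiplicativity $c(ED)=c(E)c(D)$, whence $c(D)=\det(D)^k$ as the only polynomial characters of $GL_{n+1}$; (iii) the homogeneity degree $\vol(P)$ of $R_A$ in each coefficient block to pin down $k$ via $D=tI$---are exactly the pieces used in the GKZ treatment. One small sharpening: in step (i) the total degree in the $z^{(i)}_a$ is \emph{preserved}, not merely bounded, by the linear substitution, which immediately forces the quotient to have degree zero in the $z$'s; and polynomiality of $c(D)$ follows by comparing the coefficient of any fixed $z$-monomial with nonzero integer coefficient in $R_A(f_\bullet)$. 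With that, nothing is missing.
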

      
      Especially, for linear functions $g_0,\ldots,g_n$ with $g_i:= \sum_{j=0}^{n} D_{ij}x_j$ in homogenization, that lemma implies $R_A(g_0,\ldots,g_n) = \det(D)$. This result extends also to all cases, where $A$ forms a simplex \cite{GelfandDiscriminantsResultantsMultidimensional1994}.
  
    \subsection{$A$-discriminants} \label{sec:Adisc}
      
      The $A$-discriminant is closely related to the $A$-resultant and describes for a given polynomial $f\in\mathbb C[x_1^{\pm 1},\ldots,x_n^{\pm 1}]$ when the hypersurface $\{f=0\}$ is singular. Equivalently, the $A$-discriminant determines whether $f$ has multiple roots. Let $A\subset \mathbb Z^n$ be the support of the polynomial $f(x) = \sum_{a\in A} z_a x^a$ and consider
      \begin{align}
        \nabla_0 = \left\{ f\in\mathbb C^A | \mathbf V \left(f,\frac{\partial f}{\partial x_1},\ldots,\frac{\partial f}{\partial x_n}\right) \neq\emptyset \text{ in } (\mathbb C^*)^n \right\} \subseteq \mathbb C^A \label{eq:Adisc}
      \end{align}
      the set of polynomials $f\in\mathbb C^A$ for which there exists a solution $x \in (\mathbb C^*)^n$ such that $f$ and its first derivatives vanish simultaneously. Analogue to the $A$-resultant, if the Zariski closure of $\nabla_0$ has codimension $1$, we will set the \textit{$A$-discriminant} $\Delta_A(f)\in\mathbb Z [\{z_a\}_{a\in A}]$ of $f$ as the defining polynomial of the hypersurface $\overline \nabla_0$. For higher codimensions $\operatorname{codim} (\overline{\nabla}_0) > 1$ we will fix the $A$-discriminant to be $1$. Configurations $A$, which having $\Delta_A(f)=1$ are called \textit{defective}. Combinatorial criteria of defective configurations can be found in \cite{EsterovNewtonPolyhedraDiscriminants2010, CurranRestrictionADiscriminantsDual2006, DickensteinTropicalDiscriminants2007}. By definition, the $A$-discriminant is an irreducible polynomial in the coefficients $z_a$, which is uniquely determined up to a sign \cite{GelfandDiscriminantsResultantsMultidimensional1994}. 
      \\
      
      \begin{example} \label{ex:cubicDisc}
        Consider the cubic polynomial in one variable $f = z_0 + z_1 x + z_2 x^2 + z_3 x^3$ with its support $A = \{0,1,2,3\}$. Its $A$-discriminant is given (up to a sign) by
        \begin{align}
          \Delta_A(f) = z_1^2 z_2^2 - 4 z_0 z_2^3 - 4 z_1^3 z_3 + 18 z_0 z_1 z_2 z_3 - 27 z_0^2 z_3^2
        \end{align}
        which can be calculated either by hand by eliminating $x$ from $f(x) = \pd{f(x)}{x} = 0$ or by a convenient mathematical software program e.g. Macaulay2 \cite{GraysonMacaulay2SoftwareSystem} with additional libraries \cite{StaglianoPackageComputationsClassical2018, StaglianoPackageComputationsSparse2020}. Thus, the equations $f(x) = \pd{f(x)}{x} = 0$ have a common solution for $x\neq 0$ if and only if $\Delta_A(f) = 0$.
      \end{example}
      
      However, many polynomials share the same $A$-discriminant. Consider two finite subsets $A\subset\mathbb Z^n$ and $A'\subset\mathbb Z^m$, which are related by an injective, affine transformation $T:\mathbb Z^n \rightarrow \mathbb Z^m$ with $T(A)=A'$. Then, the corresponding transformation of $T$ connects also $\Delta_A$ with $\Delta_{A'}$, which was shown in \cite{GelfandDiscriminantsResultantsMultidimensional1994}. Thus, the $A$-discriminant only depends on the affine geometry of $A$. For example the homogeneous polynomial $\tilde f\in\mathbb C [x_0,\ldots,x_n]$ with $\tilde f\in\mathbb C^{\Aa}$ can be dehomogenized by the map
      \begin{align}
        \mathbb C^{\Aa} \rightarrow \mathbb C^A, \qquad \tilde f(x_0,\ldots,x_n) \mapsto f(x_1,\ldots,x_n) = \tilde f(1,x_1,\ldots,x_n) \point
      \end{align}
      We can identify $\Delta_{\Aa}(\tilde f)=\Delta_A (f)$. Similarly we obtain for a finite subset $A\subset \mathbb Z^n$ and its homogenization $\Aa\subset\mathbb Z^{n+1}$ the same discriminants.\\
      
      By definition (\ref{eq:Adisc}) it can be seen, that $\Delta_A(f)$ has to be a homogeneous polynomial. Additionally, $\Delta_A(f)$ is quasi-homogeneous for any weight defined by a row of $A$ \cite{GelfandDiscriminantsResultantsMultidimensional1994}.  Removing these homogenities leads us to the \textit{reduced $A$-discriminant} $\Delta_B$. Let $\Aa\subset\mathbb Z^{n+1}$ be the homogenization of the support $A$ and $\mathcal B$ a Gale dual of $\Aa$. Then we can introduce ``effective'' variables
      \begin{align}
        y_j = \prod_{i=1}^N z_i^{b_{ij}} \qquad\text{for}\quad j=1,\ldots, r
      \end{align}
      where $b_{ij}$ denotes the elements of the Gale dual $\mathcal B$. Then, we can always rewrite the $A$-discriminant as $\Delta_A(f) = z^{\Lambda} \Delta_B(f)$, where the reduced $A$-discriminant $\Delta_B(f)$ is an inhomogeneous polynomial in the effective variables $y_1,\ldots,y_r$ and $\Lambda\in\mathbb Z^N$ defines a factor $z^\Lambda$. We will choose the smallest $\Lambda$ such that $\Delta_B$ is a polynomial.\\
      
      \begin{example}[Continuation of example \ref{ex:cubicDisc}] \label{ex:cubicGale}
        As a possible choice for the Gale dual of the homogenized $\Aa\subset\mathbb Z^2$ one obtains
        \begin{align}
          \mathcal B = \begin{pmatrix}
                         1 & 2 \\
                         -2 & -3\\
                         1 & 0\\
                         0 & 1
                       \end{pmatrix}\text{ ,} \qquad y_1= \frac{z_0 z_2}{z_1^2}, \qquad y_2 = \frac{z_0^2 z_3}{z_1^3} \point \label{eq:exEffective}
        \end{align}
        Thus, we can rewrite the $A$-discriminant 
        \begin{align}
          \Delta_A(f) = \frac{z_1^6}{z_0^2} \left(27 y_2^2 + 4 y_1^3 + 4 y_2 - y_1^2 - 18 y_1y_2\right) =  \frac{z_1^6}{z_0^2} \Delta_B(f)
        \end{align}
        as a reduced discriminant $\Delta_B(f)\in\mathbb Z[y_1,y_2]$.
      \end{example}
      
      Except for special cases, where $A$ forms a simplex or a circuit \cite{GelfandDiscriminantsResultantsMultidimensional1994}, the determination of the $A$-discriminant can be an extremely hard problem for bigger polynomials and calculations quickly get out of hand. Fortunately, there is an indirect description of $A$-discriminants which was given by Kapranov \cite{KapranovCharacterizationAdiscriminantalHypersurfaces1991} with slightly adjustments in \cite{CuetoResultsInhomogeneousDiscriminants2006}. This so called \textit{\HKP} states a very efficient way to study discriminants. Let $\mathcal S\subset(\mathbb C^*)^r$ be the hypersurface defined by $\Delta_B(f)=0$. Then this hypersurface $\mathcal S$ can be parametrized by $\psi :\mathbb P^{r-1}_{\mathbb C} \rightarrow (\mathbb C^*)^r$, where $\psi$ is given by
      \begin{align}
        \psi [t_1 : \ldots : t_r ] = \left( \prod_{i=1}^N \left(\sum_{j=1}^r b_{ij} t_j \right)^{b_{i1}}, \ldots, \prod_{i=1}^N \left(\sum_{j=1}^r  b_{ij} t_j \right)^{b_{ir}}\right)
      \end{align}
      where $b_{ij}$ are again the elements of a Gale dual $\mathcal B$ of $\Aa$. Hence, we can give an implicit representation of the $A$-discriminant very quickly only by knowing a Gale dual.
    
      \begin{example}[Continuation of example \ref{ex:cubicGale}] \label{ex:cubicParametrization}
        For the example of the cubic polynomial in one variable, we obtain
        \begin{align}
          \psi [t_1 : t_2] = \left( \frac{t_1 + 2 t_2}{(2t_1+3t_2)^2} t_1 , - \frac{(t_1 + 2 t_2)^2}{(2t_1+3t_2)^3} t_2\right) \point
        \end{align}
        Since $[t_1 : t_2]$ are homogeneous coordinates, only defined up to multiplication, we can set w.l.o.g. $t_2=1$. Hence, the statement of \HKP is, that for every $t_1\in\mathbb C$ we can identify
        \begin{align}
          y_1 = \frac{t_1 + 2 }{(2t_1+3)^2} t_1, \qquad y_2 = - \frac{(t_1 + 2)^2}{(2t_1+3)^3}
        \end{align}
        as the points characterizing the hypersurface $\Delta_B(f)(y_1,y_2) = 0$ or equivalently the hypersurface $\Delta_A(f)(z_0,z_1,z_2,z_3)=0$ by means of the relations (\ref{eq:exEffective}).
      \end{example}
    
      Moreover, Kapranov showed \cite{KapranovCharacterizationAdiscriminantalHypersurfaces1991}, that the map $\psi$ is the inverse of the (logarithmic) Gauss map, which is defined for an arbitrary hypersurface $\mathcal S_g = \{y\in(\mathbb C^*)^r | g(y)=0\}$ as $\gamma : (\mathbb C^*)^r \rightarrow \mathbb P^{r-1}_{\mathbb C}$, with $\gamma(y) = [ y_1 \partial_1 g(y) : \ldots : y_r \partial_r g(y)]$ for all regular points of $\mathcal S_g$. It is a remarkable fact, that all hypersurfaces $S_g$, which have a birational Gauss map are precisely those hypersurfaces defined by reduced $A$-discriminants \cite{KapranovCharacterizationAdiscriminantalHypersurfaces1991, CuetoResultsInhomogeneousDiscriminants2006}. \\
    
      To conclude this section, we want to mention the relation between $A$-discriminants and the mixed $(A_0,\ldots,A_n)$-resultants, which is also known as Cayley's trick.
      \begin{lemma}[Cayley's trick \cite{GelfandDiscriminantsResultantsMultidimensional1994}]
        Let $A_0,\ldots,A_{n}\subset \mathbb Z^n$ be finite subsets jointly generating $\mathbb Z^n$ as an affine lattice. By $f_i\in\mathbb C^{A_i}$ we denote the corresponding polynomials of the sets $A_i$. Then we have
        \begin{align}
          R_{A_0,\ldots,A_{n}} (f_0,\ldots,f_{n}) = \Delta_A \left(f_{0}(x) + \sum_{i=1}^n y_i f_i(x)\right)
        \end{align}
        where $A$ is the support of the polynomial $f_{0}(x) + \sum_{i=1}^n y_i f_i(x)\in \mathbb C[x_1^{\pm 1},\ldots,x_{n}^{\pm 1},y_1,\ldots,y_n]$.        
      \end{lemma}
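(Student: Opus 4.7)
The plan is to identify both sides of the claim as defining polynomials of the same hypersurface in $\prod_{i=0}^n \mathbb C^{A_i}$. Write $F(x,y) := f_0(x) + \sum_{i=1}^n y_i f_i(x)$ and observe that, with the convention $y_0 := 1$, each coefficient $z_a^{(i)}$ of $f_i$ occurs exactly once as the coefficient of the monomial $y_i x^a$ in $F$. Thus the coefficient space $\mathbb C^A$ of $F$ is canonically identified with $\prod_{i=0}^n \mathbb C^{A_i}$, and it suffices to show that the two polynomials cut out the same subvariety of this product.

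Next I would unravel definition \eqref{eq:Adisc} for $\Delta_A(F)$. Any point $(x^\ast, y^\ast) \in \Csn \times \Csn$ satisfying $F = \partial_{x_k} F = \partial_{y_i} F = 0$ must in particular have $f_i(x^\ast) = \partial_{y_i} F(x^\ast, y^\ast) = 0$ for $i = 1, \ldots, n$; inserting this back into $F(x^\ast, y^\ast) = 0$ forces $f_0(x^\ast) = 0$ as well. Hence $x^\ast$ is a common zero of $f_0, \ldots, f_n$ in $\Csn$, and the projection of $\nabla_0(F)$ onto $\prod_i \mathbb C^{A_i}$ is contained in $\mathscr Z$.

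For the reverse inclusion, given a generic $(f_0, \ldots, f_n) \in \mathscr Z$ together with a common zero $x^\ast \in \Csn$, I would solve the $n \times n$ linear system
\begin{align*}
\sum_{i=1}^n y_i^\ast \frac{\partial f_i}{\partial x_k}(x^\ast) = - \frac{\partial f_0}{\partial x_k}(x^\ast), \qquad k = 1, \ldots, n,
\end{align*}
for $y^\ast$. For a Zariski-dense subset of the data the coefficient matrix is invertible and the unique solution $y^\ast$ has all components nonzero, so that $(x^\ast, y^\ast) \in \Csn \times \Csn$ realises the defining condition of $\nabla_0(F)$. Consequently a Zariski-dense part of $\mathscr Z$ lies in the projection of $\nabla_0(F)$, and taking closures gives $\overline{\mathscr Z} = \overline{\nabla_0(F)}$ as subvarieties of $\prod_i \mathbb C^{A_i}$.

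Because $R_{A_0,\ldots,A_n}$ and $\Delta_A(F)$ are both irreducible integer polynomials with matching vanishing loci (or both set to $1$ when the common codimension is $\geq 2$), they must coincide up to a sign. I expect the main obstacle to be the genericity step in the converse direction: one must ensure that the auxiliary linear system admits a torus-valued solution $y^\ast$ over a Zariski-dense part of $\mathscr Z$, rather than being forced into a boundary locus. This is precisely where the essentiality hypothesis \eqref{eq:essential} does its work, controlling the dimension of the incidence variety $\{(f_0,\ldots,f_n,x) : f_i(x)=0\ \forall i\}$ and ensuring that its projection to $\prod_i \mathbb C^{A_i}$ is dominant onto a hypersurface; without it, both $R_{A_0,\ldots,A_n}$ and $\Delta_A(F)$ degenerate to $1$ or reduce to the resultant of the unique essential subconfiguration, and the identity continues to hold trivially.
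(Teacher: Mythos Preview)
The paper does not give its own proof of this lemma; it is stated with a citation to \cite{GelfandDiscriminantsResultantsMultidimensional1994} and then used. So there is nothing in the paper to compare against, and your proposal must be judged on its own.

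Your outline is the standard one and is essentially correct. Two small points of precision. First, $\nabla_0(F)$ is already a subset of $\mathbb C^A \cong \prod_i \mathbb C^{A_i}$ (it is the set of coefficient vectors, not the incidence variety), so there is no projection to perform: the forward direction is simply the inclusion $\nabla_0(F) \subseteq \mathscr Z$, which you establish correctly from $\partial_{y_i}F=0$ and $F=0$. Second, in the reverse direction you should be explicit that what you are showing is that for a Zariski-dense subset of $\mathscr Z$ the \emph{same} coefficient vector already lies in $\nabla_0(F)$; the $y^\ast$ you solve for is merely the witness exhibiting a singular point of $F$, not new data in the coefficient space. With these clarifications the closure argument goes through: $\overline{\nabla_0(F)} = \overline{\mathscr Z}$, and since both defining polynomials are irreducible over $\mathbb Z$ (or both equal to $1$ in the defective/non-essential case), they agree up to sign.

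Your caveat about the genericity step is well placed: showing that the Jacobian $(\partial f_i/\partial x_k)(x^\ast)$ is invertible and that the resulting $y^\ast$ avoids the coordinate hyperplanes, on a dense open subset of the incidence variety over $\mathscr Z$, is the one place where real work is needed. In GKZ this is handled by a dimension count on the incidence correspondence, exactly along the lines you indicate with the essentiality condition \eqref{eq:essential}. If you want a fully rigorous write-up, that is the paragraph to expand.
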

    
      Thus, we can interpret the mixed $(A_0,\ldots,A_n)$-resultants as a special case of $A$-discriminants.

    \subsection{principal $A$-determinants}
    
      The last object we want to introduce from the book of Gelfand, Kapranov and Zelevinsky \cite{GelfandDiscriminantsResultantsMultidimensional1994} is the principal $A$-determinant. However, before introducing the principal $A$-determinant, which will be the main object for the following approach, we want to recall first some constructions of polytopes. 
      
      Let $A\subset \mathbb Z^n$ be a finite subset of points with cardinality $N$ and $P=\Conv(A)$ their convex hull. A \textit{triangulation} $T$ of a polytope $(P,A)$ is a set of simplices $\Conv(\sigma)$ with all $\sigma\subseteq A$, such that the union of all these simplices is equal to the full polytope $P=\cup_{\sigma\in T} \Conv(\sigma)$ and the intersection of two simplices $\Conv(\sigma_i)\cap\Conv(\sigma_j)$ is either empty or a face of both simplices \cite{DeLoeraTriangulations2010}. For every triangulation $T$ of a polytope $(P,A)$ we can introduce the weight map $\omega_T:A\rightarrow \mathbb Z_{\geq 0}$
      \begin{align}
        \omega_T (a) := \sum_{\substack{\sigma\in T \,\,\text{s.t.}\\ a\in\operatorname{Vert}(\Conv(\sigma))}} \vol (\Conv(\sigma))
      \end{align}
      which is the sum of all simplex volumes, having $a$ as its vertex and we write $\omega_T(A) = (\omega_T(a^{(1)}),\ldots,\omega_T(a^{(N)}))$ for the image of $A$.
      
      The weights themselves define a further polytope of dimension $r:=N-n-1$, which is the so-called \textit{secondary polytope} $\Sigma(\mathrm A)$. It is the convex hull of all weights
      \begin{align}
        \Sigma (A) := \Conv\left(\omega_T(A) | T \text{ is a triangulation of } A\right) \subset\mathbb R^N \point
      \end{align}
      
      A triangulation $T$, which correspond to a vertex in the secondary polytope, is called a \textit{regular triangulation}.\\
      
      Let us now define the principal $A$-determinant, which is a special $A$-resultant. Once again, we consider a finite subset $A\subset \mathbb Z^n$ and we will assume for the sake of simplicity that $\Aff_{\mathbb Z} (A) = \mathbb Z^n$. By $f=\sum_{a\in A} z_a x^a\in\mathbb C^A$ we denote the corresponding polynomial to $A$. The \textit{principal $A$-determinant} is then defined as
      \begin{align}
        E_A(f) := R_A\left(f,x_1\frac{\partial f}{\partial x_1},\ldots,x_n \frac{\partial f}{\partial x_n}\right) \point \label{eq:principalAdetResultant}
      \end{align}
      Thus, the principal $A$-resultant is a polynomial with integer coefficients depending on $\{z_a\}_{a\in A}$ and is uniquely determined up to a sign \cite{GelfandDiscriminantsResultantsMultidimensional1994}.\\
      
      The principal $A$-determinant can be decomposed into a product of several $A$-discriminants:
      \begin{theorem}[Prime factorization of principal $A$-determinant \cite{GelfandDiscriminantsResultantsMultidimensional1994}] \label{thm:pAdet-factorization}
        The principal $A$-determinant can be written as a product of $A$-discriminants
        \begin{align}
          E_A(f) = \pm \prod_{\tau \subseteq \operatorname{Newt} (f)} \Delta_{A\cap\tau} (f_\tau)^{\mu(A,\tau)}
        \end{align}
        where the product is over all faces $\tau$ of the Newton polytope $\Newt (f)$ and $ \mu(A,\tau) \in\mathbb N_{> 0}$ are certain integers, called multiplicity of $A$ along $\tau$. For the following the exact definition of the multiplicities is not crucial, which is why we refer to \cite{GelfandDiscriminantsResultantsMultidimensional1994, ForsgardZariskiTheoremMonodromy2020} at this point.
      \end{theorem}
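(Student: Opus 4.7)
The plan is to read off the factorization from the geometry of the toric variety $X_A$ attached to the support $A$, whose torus-orbit stratification is indexed precisely by the faces $\tau \subseteq \Newt(f)$. The principal $A$-determinant is, by its defining formula $E_A(f) = R_A(f, x_1 \partial_1 f, \ldots, x_n \partial_n f)$, the $A$-resultant of a \emph{logarithmic} system: since the operators $x_i \partial_i$ extend to regular sections of the relevant line bundle on $X_A$, the simultaneous vanishing of $f$ and all $x_i \partial_i f$ on the open torus $\Csn$ detects a singular point of the hypersurface $\{f=0\}$, exactly the condition encoded by $\Delta_A(f)$. Because the $A$-resultant is defined via Zariski closure, $E_A(f)$ also records solutions that escape to the toric boundary, which decomposes into orbits $O_\tau$ in bijection with faces $\tau$ of $\Newt(f)$; on $O_\tau$ the section defined by $f$ restricts to the section defined by the truncated polynomial $f_\tau$, so such boundary solutions correspond to singular points of $\{f_\tau = 0\}$ on the torus of $O_\tau$, detected by $\Delta_{A\cap\tau}(f_\tau)$.

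Next I would prove divisibility of the right-hand side into $E_A(f)$. Fix a face $\tau$ and assume $\Delta_{A\cap\tau}(f_\tau)$ vanishes, so that there is a point $x_0$ in the corresponding torus where $f_\tau$ and its first derivatives vanish. Choosing an inner normal $v$ of $\tau$ and deforming along the one-parameter subgroup $x(t) = x_0 \cdot t^v$, the terms in $f$ outside $\tau$ carry strictly larger powers of $t$ than those in $f_\tau$, so $f(x(t))$ and each $x_i \partial_i f(x(t))$ vanish to leading order as $t \to 0$. This places $(f, x_1\partial_1 f, \ldots, x_n\partial_n f)$ into the Zariski closure $\overline{\mathscr Z}$, giving $\Delta_{A\cap\tau}(f_\tau) \mid E_A(f)$. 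Distinct faces produce discriminants involving different subsets of coefficients $\{z_a\}_{a \in A \cap \tau}$ and, where they share variables, remain coprime because the discriminants are irreducible polynomials depending only on the affine geometry of the face, so the product on the right-hand side divides $E_A(f)$.

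For the reverse divisibility I would match degrees. Using the multigrading of $E_A$ induced by the weights of $A$ and the classical GKZ degree formulas for each $\Delta_{A\cap\tau}(f_\tau)$ in terms of Euler obstructions of the dual varieties of the faces, one checks that $\deg E_A(f) = \sum_\tau \mu(A,\tau) \, \deg \Delta_{A\cap\tau}(f_\tau)$, with $\mu(A,\tau)$ determined combinatorially by the index of the sublattice $\Aff_{\mathbb Z}(A\cap\tau)$ inside $\mathbb Z \tau$ and the normalized volume contributions entering the secondary polytope. Once the total degrees on the two sides agree, no further irreducible factor of $E_A(f)$ can remain, and equality follows up to the overall sign inherent in the definition of the $A$-resultant.

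The main obstacle is this degree bookkeeping: one must correctly identify the multiplicities $\mu(A,\tau)$ as lattice indices rather than mere face counts, and show that the GKZ degree formula for $\Delta_{A\cap\tau}(f_\tau)$ glues across the face poset of $\Newt(f)$ to reproduce the global degree of $E_A(f)$. The geometric input --- namely that the logarithmic derivatives behave well under toric compactification and that boundary solutions of the logarithmic system are in exact correspondence with critical points of truncated polynomials --- is what makes the bookkeeping come out right, and it is the step one has to execute with care.
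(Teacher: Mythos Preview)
The paper does not prove this theorem at all: it is stated with a citation to \cite{GelfandDiscriminantsResultantsMultidimensional1994} (and \cite{ForsgardZariskiTheoremMonodromy2020} for the multiplicities), and the exposition moves on immediately. So there is no in-paper argument to compare your proposal against; the correct baseline here is the original GKZ proof.

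Relative to that baseline, your sketch captures the right toric picture --- the face stratification of $X_A$, the logarithmic nature of the operators $x_i\partial_i$, and the restriction of $f$ to $f_\tau$ on the boundary orbit $O_\tau$ --- but the divisibility step as written has a gap. You fix $f$ with $\Delta_{A\cap\tau}(f_\tau)=0$ and deform the \emph{point} $x(t)=x_0\cdot t^v$; observing that $f(x(t))$ and $x_i\partial_i f(x(t))$ vanish to leading order as $t\to 0$ does not produce an actual common zero in $\Csn$, nor a sequence of polynomials in $\mathscr Z$ converging to $f$. The closure in the definition of $R_A$ is taken in coefficient space $\prod_i \mathbb C^{A_i}$, not in the $x$-space, so ``the solution escapes to the boundary'' is not by itself enough to conclude $f\in\overline{\mathscr Z}$. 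What GKZ actually do is compactify the incidence variety inside $X_A$ and identify the $A$-resultant with a Chow form (equivalently, they work through the Cayley--Koszul complex); this is what makes the boundary contribution rigorous and, simultaneously, pins down the exponents $\mu(A,\tau)$ as intersection multiplicities along the strata rather than via an a posteriori degree count. Your final paragraph correctly flags this bookkeeping as the crux; be aware that it is not merely bookkeeping but the substantive part of the argument, and your description of $\mu(A,\tau)$ as a lattice index is only a first approximation to the actual definition in GKZ, Chapter~10.
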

      
      Most often we are only interested in the roots of the principal $A$-determinant. Therefore, we want to define a \textit{simple principal $A$-determinant} according to \cite{ForsgardZariskiTheoremMonodromy2020} where all multiplicities are set to $1$
      \begin{align}
        \widehat E_A(f) = \pm \prod_{\tau \subseteq \operatorname{Newt} (f)} \Delta_{A\cap\tau} (f_\tau)
      \end{align}
      which generates the same variety as $E_A(f)$. \\

      \begin{example}[principal $A$-determinants of homogeneous polynomials] 
        To illustrate the principal $A$-determinant we will recall an example from \cite{GelfandDiscriminantsResultantsMultidimensional1994}. Let $\tilde f\in\mathbb C[x_0,\ldots,x_n]$ be a homogeneous polynomial consisting in all monomials of a given degree $d\geq 1$. The support of $\tilde f$ will be called $\Aa$ and its Newton polytope $\Newt(\tilde f)\subset \mathbb R^{n+1}$ is an $n$-dimensional simplex, having $2^{n+1}-1$ faces. Moreover, the faces $\tau\subseteq\Newt(\tilde f)$ are generated by all non-empty subsets of the vertices $\{0,\ldots,n\}$ and one can show that all multiplicities $\mu(\Aa,\tau)$ are equal to one \cite{GelfandDiscriminantsResultantsMultidimensional1994}. Thus, we get
        \begin{align}
          E_\Aa(\tilde f) = \pm \prod_{\emptyset\neq \tau \subseteq \{0,\ldots,n\}} \Delta_{\Aa\cap \tau} (\tilde f_{\tau}) \label{eq:pAdetHom} \point
        \end{align}
        Note, that $\tilde f_{\tau} (x) = \tilde f(x)|_{x_i=0, i\notin \tau}$ is nothing else, than the polynomial $\tilde f$, where all variables $x_i=0$ set to be zero, which are corresponding to elements in the complement of $\tau$. That decomposition is exactly the behaviour we would naively expect in the polynomial equation system
        \begin{align}
          \tilde f (x) = x_0 \frac{\partial \tilde f(x)}{\partial x_0} = \ldots =  x_n \frac{\partial \tilde f(x)}{\partial x_n} = 0 \point
        \end{align}
        It is, that we can consider all combinations of several $x_i=0$ separately, except for the case, where all $x_i=0$ vanish. However, it should be remarked, that this behaviour is not true in general and we have rather to take the truncated polynomials into account as described in theorem \ref{thm:pAdet-factorization}. \\
      \end{example}

      Another noteworthy result of Gelfand, Kapranov and Zelevinsky is the connection between the principal $A$-determinant and the triangulations of $\Conv(A)$.
      \begin{theorem}[\cite{GelfandNewtonPolytopesClassical1990, SturmfelsNewtonPolytopeResultant1994, GelfandDiscriminantsResultantsMultidimensional1994}] \label{thm:NewtSec}
        The Newton polytope of the principal $A$-determinant and the secondary polytope coincide
        \begin{align}
          \Newt(E_A(f)) = \Sigma(A) \point
        \end{align}
        Further, if $T$ is a regular triangulation of $(\Conv(A),A)$, then the coefficient of the monomial $\prod_{a\in A} z_a^{\omega_T(a)}$ in $E_A(f)$ is equal to
        \begin{align}
          \pm \prod_{\sigma\in T} \vol(\sigma)^{\vol(\sigma)} \point
        \end{align}
        And also the relative signs of the coefficients in $E_A(f)$ can be fixed \cite[chapter 10.1G]{GelfandDiscriminantsResultantsMultidimensional1994}.
      \end{theorem}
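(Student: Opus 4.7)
The plan is to prove both assertions simultaneously by analysing the initial form $\operatorname{in}_\omega(E_A(f))$ with respect to a generic weight vector $\omega\in\mathbb R^A$. A vertex of $\Newt(E_A(f))$ is, by definition, the exponent of a monomial which is extremal with respect to some such $\omega$, while by standard results on secondary polytopes the vertices of $\Sigma(A)$ are exactly the weight vectors $\omega_T(A)$ for $T$ a regular triangulation of $(\Conv(A),A)$. Hence both the equality $\Newt(E_A(f))=\Sigma(A)$ and the coefficient formula follow if we can show that for every generic $\omega$, inducing the regular triangulation $T_\omega$, we have
\begin{align}
\operatorname{in}_\omega E_A(f) = \pm \prod_{\sigma\in T_\omega} \vol(\sigma)^{\vol(\sigma)} \cdot \prod_{a\in A} z_a^{\omega_{T_\omega}(a)} \point
\end{align}

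The first step is to make the geometric meaning of $\omega$ explicit: lifting $A$ to $\{(a,\omega(a))\}\subset \mathbb R^{n+1}$ and projecting the lower hull onto $\mathbb R^n$ produces the regular subdivision $T_\omega$ of $(\Conv(A),A)$, which is a triangulation precisely for generic $\omega$. Under such $\omega$, the truncation $f_\sigma$ of $f$ to a top-dimensional simplex $\sigma\in T_\omega$ is the initial form of $f$ on the corresponding cell, and the same holds for each toric derivative $x_i\partial_i f$.

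The second step is the local computation, which is the clean part of the argument. Each top-dimensional $\sigma\in T_\omega$ has exactly $n+1$ vertices, so $A\cap\sigma$ is an affine simplex. By the transformation lemma for $A$-resultants in the simplex case, the system $f_\sigma = x_1\partial_1 f_\sigma = \ldots = x_n\partial_n f_\sigma = 0$ reduces to a linear system in the monomials $x^{a}$ whose coefficient matrix is, after rescaling by the $z_a$, the homogenised matrix $\Aa\cap\sigma$; its determinant is the normalised volume $\vol(\sigma)$. The associated principal $A$-determinant on this simplex then evaluates to
\begin{align}
E_{A\cap\sigma}(f_\sigma) = \pm \vol(\sigma)^{\vol(\sigma)} \prod_{a\in\sigma} z_a^{\vol(\sigma)} \point
\end{align}
Multiplying over all $\sigma\in T_\omega$, the exponent of each $z_a$ becomes $\sum_{\sigma\ni a}\vol(\sigma) = \omega_{T_\omega}(a)$, which matches the prescribed secondary-polytope weight, and produces the asserted leading coefficient $\prod_\sigma \vol(\sigma)^{\vol(\sigma)}$.

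The third and hardest step — and the main obstacle — is the multiplicativity itself: establishing that $\operatorname{in}_\omega E_A(f)$ factors cleanly as $\prod_{\sigma\in T_\omega} E_{A\cap\sigma}(f_\sigma)$ (up to signs and lower-dimensional face contributions that do not affect the top-cell exponents). Theorem \ref{thm:pAdet-factorization} decomposes $E_A(f)$ along \emph{faces} of $\Newt(f)$, not along a triangulation, so one cannot simply plug it in. The right tool is the Cayley–Koszul / Poisson product formula for sparse resultants under a coherent polyhedral subdivision, together with the observation that for generic $\omega$ only the top-dimensional cells of $T_\omega$ survive in the initial form while discriminantal contributions from lower-dimensional faces are pushed to strictly lower order. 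Verifying the vanishing of cross terms and fixing signs coherently is where the substantial combinatorial work sits; once it is carried out, comparison with the already-computed local formula closes both statements of the theorem.
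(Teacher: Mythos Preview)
The paper does not supply its own proof of this theorem: it is quoted with citations to \cite{GelfandNewtonPolytopesClassical1990, SturmfelsNewtonPolytopeResultant1994, GelfandDiscriminantsResultantsMultidimensional1994} and used as a black box, so there is no in-paper argument to compare your proposal against.

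As an independent assessment: your outline is essentially the Gelfand--Kapranov--Zelevinsky strategy. The reduction to computing $\operatorname{in}_\omega E_A(f)$ for generic $\omega$, the identification of generic $\omega$ with regular triangulations $T_\omega$, and the simplex computation $E_{A\cap\sigma}(f_\sigma)=\pm\vol(\sigma)^{\vol(\sigma)}\prod_{a\in\sigma} z_a^{\vol(\sigma)}$ are all correct and standard. You are also right that the substantive content lies entirely in the third step, and you correctly flag that Theorem~\ref{thm:pAdet-factorization} is the wrong tool for it. The product formula you invoke is indeed what GKZ use (in \cite{GelfandDiscriminantsResultantsMultidimensional1994} it appears as the behaviour of $E_A$ under regular polyhedral subdivisions, proved via the Cayley--Koszul complex), and your description of what has to be checked --- that lower-dimensional contributions drop to strictly lower $\omega$-order and that signs can be fixed coherently --- is accurate. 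What you have written is therefore an honest sketch rather than a proof: the third step is acknowledged but not carried out, and it is by far the bulk of the work. If you intend this as a proof you would need to either reproduce the subdivision/product formula argument or cite it precisely; as a roadmap pointing to the GKZ proof it is sound.
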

      
      Thus, by the knowledge of all regular triangulations we can approximate the form of the principal $A$-determinant. As theorem \ref{thm:NewtSec} gives us the extreme monomials we can make a suitable ansatz for the principal $A$-determinant. The unknown coefficients of the monomials corresponding to potential interior points of the Newton polytope $\Newt(E_A(f))$ can be determined then by \HKP.
       
      Further, the theorem explains the typical appearing coefficients\footnote{Based on that theorem, Gelfand, Kapranov and Zelevinsky \cite{GelfandDiscriminantsResultantsMultidimensional1994} speculate on a connection between discriminants and probabilistic theory. Their starting point for this consideration are the ``entropy-like'' formulas as $\prod v_i^{v_i}=e^{\sum v_i \log v_i}$ for the coefficients of the principal $A$-determinant, as well as in the \HKP. In the latter case we can define an ``entropy'' $S := \ln \left (\prod_{l=1}^r \psi_l^{t_l}\right) = \sum_{i=1}^N \rho_i(t) \ln(\rho_i(t))$ where $\rho_i(t):= \sum_{j=1}^r b_{ij} t_j$. To the author's knowledge, more rigorous results about such a potential relation are missing. However, there are further connections known between tropical toric geometry and statistical thermodynamics as presented in \cite{KapranovThermodynamicsMomentMap2011, PassareAmoebasComplexHypersurfaces2013}. In any case, a fundamental understanding of such a relation could be very inspiring for a physical point of view.} in principal $A$-determinants and Landau varieties, like $1=1^1$, $4=2^2$, $27=3^3$, $4^4=256$.

      \begin{example}
        We will continue the example from section \ref{sec:Adisc}. From the set $A=(0,1,2,3)$ we can determine $4$ triangulations with weights $\omega_1 = (3,0,0,3)$, $\omega_2=(1,3,0,2)$, $\omega_3=(2,0,3,1)$ and $\omega_4=(1,2,2,1)$. Adding the only possible interior point $(2,1,1,2)$, we obtain the following ansatz
        \begin{align}
          E_A(f) = 27 z_0^3 z_3^3 + 4 z_0 z_1^3 z_3^2 + 4 z_0^2 z_2^3 z_3 - z_0 z_1^2 z_3 + r z_0^2 z_1 z_2 z_3^2
        \end{align}
        where we have to determine $r\in\mathbb Z$. By considering the faces of $\Newt(f)$ we can split $E_A(f)$ into discriminants
        \begin{align}
          E_A(f) = z_0 z_3 \Delta_A(f) = z_0 z_3 \frac{z_1^6}{z_0^2} \Delta_B(f)
        \end{align}
        with the reduced $A$-discriminant $\Delta_B(f) = 27 y_2^2 + 4y_1^3 +4y_2 -y_1^2 + r y_1 y_2$ with the same conventions as in example \ref{ex:cubicGale}.  By the \HKP (see example \ref{ex:cubicParametrization}) we can calculate those points $(y_1,y_2)$ which satisfy $\Delta_B(f)=0$. Choosing for example $t_1=-1$, we obtain the point $(y_1,y_2)=(-1,-1)$, which leads to $r=-18$.
      \end{example}
      
      In general we can replace $E_A(f)$ by means of Cayley's trick by one single $A$-discriminant in order to simplify the usage of Horn-Kapranov-parametrization. However, we have to mention that the number of vertices of the secondary polytope -- or equivalently the number of regular triangulations -- grows very fast. In the application of Feynman integrals (see below) the $2$-point, $3$-loop Feynman diagram (also known as $3$-loop banana) has $79.884$ possible triangulations. The $2$-loop double-edged triangle graph (or dunce's cap) generates even $889.044$ triangulations. Nevertheless, this approach could be faster, than the direct calculation of principal $A$-determinants by standard algorithms, since there are very efficient methods for triangulations known \cite{DeLoeraTriangulations2010}.
 
%
%

  \section{$\Aa$-hypergeometric systems} \label{sec:AHyp}
 
    In this section we will establish the link between $\Aa$-hypergeometric systems and the $A$-resultants. This connection was developed in a series of articles by Gelfand, Kapranov and Zelevinsky, mainly in \cite{GelfandEquationsHypergeometricType1988, GelfandAdiscriminantsCayleyKoszulComplexes1990, GelfandHypergeometricFunctionsToric1991, GelfandDiscriminantsResultantsMultidimensional1994}. A major part of this correspondence was also shown in \cite{AdolphsonHypergeometricFunctionsRings1994}, where the following deduction is mostly based on. A generalization of this relation can be found in \cite{BerkeschZamaereSingularitiesHolonomicityBinomial2014} and \cite{SchulzeIrregularityHypergeometricSystems2006}. \\
    
    Let $\Aa=\{a^{(1)},\ldots,a^{(N)}\}\subset \mathbb Z^{n+1}$ be a finite subset of lattice points, spanning $\mathbb R^{n+1}$ as a vector space $\operatorname{span}_{\mathbb R}(\Aa) = \mathbb R^{n+1}$. Equivalently, we can demand $\Aa$ to have full rank. We will usually consider the case $n+1\leq N$. Further, we will assume that there exists a linear map $h:\mathbb Z^{n+1}\rightarrow \mathbb Z$, such that $h(a)=1$ for any $a\in\Aa$. The latter means, that all elements of $\Aa$ lie on a common hyperplane off the origin, which will allow us to consider $\Aa$ as elements of an affine space in homogenization. Alternatively, we can also demand $f=\sum_{a\in\Aa} z_a x^a$ to be quasi-homogeneous. \\
    
    The $\Aa$-hypergeometric system is a left ideal in the Weyl algebra $\mathcal D_N := \langle z_1,\ldots,z_N,\allowbreak\partial_1,\ldots,\partial_N\rangle$ and will be generated by two types of differential operators
    \begin{align}
      \square_l &= \prod_{l_j>0} \partial_j^{l_j} - \prod_{l_j<0} \partial_j^{-l_j} \qquad\text{for}\quad l\in\mathbb L \label{eq:toricOperators}\\
      E_i(\beta) &= \sum_{j=1}^N a_i^{(j)} z_j \partial_j + \beta_i \qquad\text{for}\quad i=0,\ldots,n \label{eq:homogeneousOperators}
    \end{align}
    where $\mathbb L = \ker_{\mathbb Z}(\Aa)$ is the integer kernel of $\Aa$ and $\beta\in\mathbb C^{n+1}$ is an arbitrary complex number. Thus, the full set of differential operators is given by
    \begin{align}
      H_\Aa(\beta) = \sum_{i=0}^{n} \mathcal D_N E_i(\beta) + \sum_{l\in\mathbb L} \mathcal D_N \square_l \point
    \end{align}
    Holomorphic solutions on convenient domains in $\mathbb C^N$ of these differential equation systems will be called \textit{$\Aa$-hypergeometric functions} and we denote the solution space of those functions by
    \begin{align}
      \operatorname{Sol}(H_\Aa(\beta)) = \{ F\in\mathcal O | P \bullet F = 0 \quad \forall\, P\in H_\Aa(\beta)\}
    \end{align}
    where $\mathcal O$ is the $\mathcal D_N$-module of holomorphic functions on a convenient domain in $\mathbb C^N$. We refer to the $\mathcal D_N$-module of equivalence classes $\mathcal M_\Aa(\beta) = \mathcal D_N / H_\Aa(\beta)$ as the \textit{$\Aa$-hypergeometric module} and we have the isomorphism $ \operatorname{Sol}(H_\Aa(\beta)) \cong \operatorname{Hom}_{\mathcal D_N} \left(\mathcal M_\Aa(\beta), \mathcal O\right)$ \cite{BjorkAnalyticDModulesApplications1993}.\\
    
    We should next like to turn to the existence of such solutions as well as the singularities of their analytic continuation to the whole complex domain $\mathbb C^N$. We will mostly follow \cite{OakuComputationCharacteristicVariety1994} and \cite{SaitoGrobnerDeformationsHypergeometric2000}. Let us first recall the situation in the well-known one-dimensional case. Let $P=c_m(z) \partial^m + \ldots +c_1(z) \partial + c_0(z)$ be a differential operator in a single variable $z\in\mathbb C$ with polynomials $c_i(z)\in\mathbb C[z]$ as coefficients and $c_m(z) \not\equiv 0$. We will call the roots of the leading coefficient $c_m(z)$ the singular points of $P$ and the set of all these roots will be called the singular locus $\Sing(\mathcal D_1 P)$. Standard existence theorems state then, that for simply connected domains $U\subseteq \mathbb C\setminus \Sing(\mathcal D_1 P)$ outside of the singular locus, there are always holomorphic solutions $F$ of the ordinary linear differential equation $P \bullet F =0$ and the dimension of the solution space is equal to $m$.
    
    In the multivariate case we will consider differential operators $P=\sum_{u,v\in\mathbb N^N} c_{uv} z^u \partial^v \in \mathcal D_N$ with the order $\nu(P):=\max\{|v| : c_{uv}\neq 0\}$ instead of ordinary linear differential equations. For the differential operators $P$ we will define their \textit{principal symbols} as
    \begin{align}
      \sigma(P) = \sum_{u\in\mathbb N^N,v=\nu(P)} c_{uv} z^u \xi^v \in \mathbb C [z_1,\ldots,z_N,\xi_1,\ldots,\xi_N] \point
    \end{align}
    It is a polynomial in the commuting variables $z,\xi$ in the so-called associated graded ring of $\mathcal D_N$. Principal symbols are special cases of initial forms. For every proper left ideal $I\subset\mathcal D_N$ we will call $\{ \sigma(P) | P\in I\}$ the characteristic ideal and by 
    \begin{align}
      \ch(I) = \mathbf V \big(\sigma(I)\big) = \left\{(z,\xi)\in\mathbb C^{2N} \,\rvert\, \sigma(P)(z,\xi) = 0 \quad\text{for all}\quad P\in I\right\}
    \end{align}
    we denote the \textit{characteristic variety} of the ideal $I$, which is equal to the characteristic variety $\ch(\mathcal D_N / \mathcal D_N I)$ due to \cite{OakuComputationCharacteristicVariety1994}. We will call $I$ \textit{holonomic} if its characteristic variety $\ch(I)$ has the minimal Krull dimension $N$. In the multivariate case we define the \textit{singular locus} of the $\mathcal D$-module $\mathcal D_N / \mathcal D_N I$ to be the Zariski-closed projection of the characteristic variety $\ch(\mathcal D_N / \mathcal D_N I)\subset\mathbb C^{2N}$ without the trivial solution $\xi_1=\ldots=\xi_N=0$ to the $z$-space $\mathbb C^N$, i.e.
    \begin{align}
      \operatorname{Sing}(I) = \overline{\left\{ \hat z\in\mathbb C^N | (\hat z,\hat \xi)\in\ch(\mathcal D_N / \mathcal D_N I) \setminus \{\hat\xi_1=\ldots=\hat\xi_N=0\} \right\}} \point
    \end{align}
    If $I$ is a holonomic ideal, the singular locus is always a proper subset of $\mathbb C^N$. The existence and uniqueness of solutions of systems of partial differential equations is guaranteed by the Cauchy-Kovalevskaya-Kashiwara theorem, which we will recall in the version of \cite{SaitoGrobnerDeformationsHypergeometric2000}.
    \begin{theorem}[Cauchy-Kovalevskaya-Kashiwara theorem \cite{SaitoGrobnerDeformationsHypergeometric2000}] \label{thm:CKKT}
      Let $I$ be a holonomic $\mathcal D$-ideal and $U\subseteq \mathbb C^N\setminus\Sing(I)$ be a simply connected domain. The dimension of the solution space of the equation systems $P\bullet F = 0$ for all $P\in I$ with holomorphic functions $F$ on $U$ is always finite and equal to the holonomic rank of $I$.
    \end{theorem}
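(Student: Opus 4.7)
The plan is to establish the theorem in three stages: first, translate the non-singularity hypothesis into the classical notion of a non-characteristic point; second, invoke a Cauchy-Kovalevskaya type existence and uniqueness result locally at each point of $U$; third, globalize using simple connectedness. The first stage is essentially immediate from the definition: for any $z_0\in U\subseteq\mathbb C^N\setminus\Sing(I)$, the fibre $\pi^{-1}(z_0)$ of the natural projection $\pi:\ch(I)\to\mathbb C^N$ lies entirely in the zero section $\{\hat\xi=0\}$, so $z_0$ is a non-characteristic point for $I$ in the sense used in $\mathcal D$-module theory.

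For the second stage, at such a non-characteristic $z_0$ the module $\mathcal M:=\mathcal D_N/I$ becomes a coherent $\mathcal O$-module in a neighbourhood of $z_0$ and carries a natural integrable connection. Concretely, one would pick generators of $I$ whose principal symbols are jointly non-degenerate at $z_0$; this recasts the system in a first-order form to which the classical Cauchy-Kovalevskaya theorem applies, producing a finite-dimensional local solution space whose dimension equals the $\mathcal O$-rank of $\mathcal M$ near $z_0$. Holonomicity of $I$ guarantees that this rank is finite, and it is constant on every connected component of $\mathbb C^N\setminus\Sing(I)$; by a standard comparison, for instance via a Gr\"obner deformation in a generic weight direction, this constant coincides with the algebraically defined holonomic rank of $I$.

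The locally constructed solution spaces thereby assemble into a locally constant sheaf, i.e.\ a local system, on $U$. Since $U$ is simply connected the associated monodromy representation is trivial, so every local section extends uniquely to a global holomorphic solution on all of $U$, and the space of these global sections is isomorphic to any single stalk. This yields the desired equality with the holonomic rank. The main obstacle is the second stage: verifying that at a non-characteristic point $\mathcal M$ truly becomes a coherent $\mathcal O$-module whose analytic rank matches the algebraic holonomic rank requires a careful analysis of initial ideals in the filtered Weyl algebra, which is the essential technical content of the Cauchy-Kovalevskaya-Kashiwara theorem. Once this local statement is in place, the gluing via simple connectedness and the identification with the holonomic rank are essentially formal.
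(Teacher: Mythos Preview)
The paper does not prove this theorem at all; it is stated as a quotation from the reference \cite{SaitoGrobnerDeformationsHypergeometric2000} and used as a black box. There is therefore no ``paper's own proof'' to compare your proposal against.

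That said, your outline is the standard route to the result and is structurally sound: (i) points outside $\Sing(I)$ are precisely the non-characteristic points, (ii) at such points $\mathcal M=\mathcal D_N/I$ is $\mathcal O$-coherent with an integrable connection and the local solution sheaf is a local system of rank equal to the holonomic rank, (iii) on a simply connected $U$ this local system is trivial. You correctly identify that the substantive work lies in step (ii), in particular the equality of the analytic $\mathcal O$-rank with the algebraically defined holonomic rank. Your sketch of that step (``pick generators whose principal symbols are jointly non-degenerate'' and ``Gr\"obner deformation in a generic weight direction'') is accurate in spirit but is not a proof: this is exactly the machinery developed at length in \cite{SaitoGrobnerDeformationsHypergeometric2000}, and your proposal essentially defers to it. If the intention is to supply an independent proof rather than a roadmap, you would need to fill in this comparison explicitly; otherwise, what you have written is a correct high-level explanation of why the cited theorem holds, which is more than the paper itself provides.
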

    
    Especially, for $\Aa$-hypergeometric systems we can simply determine the holonomic rank.
    \begin{theorem}[\cite{GelfandHypergeometricFunctionsToric1991, GelfandGeneralizedEulerIntegrals1990, CattaniThreeLecturesHypergeometric2006}]
      Let $H_\Aa(\beta)$ be an $\Aa$-hypergeometric system. $H_\Aa(\beta)$ is always holonomic and for generic $\beta\in\mathbb C^{n+1}$ we have
      \begin{align}
        \rank (H_\Aa(\beta)) = \vol (\Conv (\Aa)) \point
      \end{align}
    \end{theorem}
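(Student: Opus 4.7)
The plan is to exploit the explicit structure of the characteristic ideal of $H_\Aa(\beta)$ and reduce both holonomicity and the rank formula to classical facts about the toric ideal attached to $\Aa$.

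First I would compute the principal symbols of the two kinds of generators. The operator $\square_l$ for $l\in\mathbb L$ is a difference of differential monomials of the same total order $|l_+|=|l_-|$, hence $\sigma(\square_l) = \xi^{l_+}-\xi^{l_-}$. The Euler operator $E_i(\beta)$ has order $1$ with principal symbol $\sigma(E_i(\beta)) = \sum_{j=1}^N a_i^{(j)} z_j \xi_j$, the constant $\beta_i$ being of lower order. A standard fact about toric ideals then says that $\{\xi^{l_+}-\xi^{l_-} : l\in\mathbb L\}$ generates the toric ideal $I_\Aa\subset\mathbb C[\xi_1,\ldots,\xi_N]$, whose vanishing locus is the affine toric variety $X_\Aa\subseteq\mathbb C^N$ of dimension $n+1$.

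Next I would describe the characteristic variety as
\begin{align}
  \ch(H_\Aa(\beta)) = \left\{ (z,\xi)\in\mathbb C^{2N} \,\Big\rvert\, \xi\in X_\Aa,\ \sum_{j=1}^N a_i^{(j)} z_j \xi_j = 0 \text{ for } i=0,\ldots,n \right\}
\end{align}
and count its dimension. On the component $\{\xi=0\}$ every $z\in\mathbb C^N$ is admitted, contributing dimension $N$. Off this component, at a generic point of the dense torus in $X_\Aa$ the matrix $\big(a_i^{(j)}\xi_j\big)_{ij}$ has maximal rank $n+1$ (since $\Aa$ has full rank and the coordinates of $\xi$ are non-zero there), so the linear slice in $z$ has dimension $N-(n+1)$; combined with $\dim X_\Aa = n+1$ this again gives $N$. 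Thus every irreducible component of $\ch(H_\Aa(\beta))$ has Krull dimension exactly $N$, which is the defining property of a holonomic ideal, and theorem \ref{thm:CKKT} applies to guarantee a finite-dimensional solution space.

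For the rank formula I would employ the Gröbner deformation theory of Saito-Sturmfels-Takayama inside the Weyl algebra $\mathcal D_N$. Choosing a generic weight $w\in\mathbb R^N$, one passes to the initial ideal $\operatorname{in}_{(-w,w)}(H_\Aa(\beta))$; for generic $\beta$ the holonomic rank is preserved under this deformation. The initial ideal decomposes into the initial toric ideal $\operatorname{in}_w(I_\Aa)$ together with the Euler operators, and its rank can be evaluated as the multiplicity of $\mathcal D_N / \mathcal D_N\,\operatorname{in}_{(-w,w)}(H_\Aa(\beta))$ along the zero section of $T^*\mathbb C^N$. Using the classical result that the degree of the projective toric variety associated to $\Aa$ equals the normalized volume $\vol(\Conv(\Aa))$, one concludes $\rank(H_\Aa(\beta)) = \vol(\Conv(\Aa))$.

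The main obstacle is this second step: whereas holonomicity follows from a soft dimension count, identifying the rank with the volume requires the full Gröbner deformation machinery together with the non-trivial degree formula for toric varieties. Moreover, the genericity condition on $\beta$ has to be pinned down with care, since for exceptional (resonant) parameters the rank can strictly exceed $\vol(\Conv(\Aa))$; isolating this thin exceptional locus is where most of the technical work in the original references is concentrated.
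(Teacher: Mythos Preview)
The paper does not give its own proof of this theorem; it is simply quoted from the cited references, so there is no in-paper argument to compare against. Your sketch follows the standard route taken in those references (Adolphson for holonomicity, Saito--Sturmfels--Takayama for the rank via Gr\"obner deformation), and the overall strategy is sound.

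One genuine gap in the holonomicity step: your dimension count treats only the zero section $\{\xi=0\}$ and the dense torus orbit of $X_\Aa$. The affine toric variety $X_\Aa$ is stratified by torus orbits indexed by the faces $\tau\subseteq\Conv(\Aa)$, and on a lower-dimensional orbit some coordinates $\xi_j$ vanish, so the matrix $\big(a_i^{(j)}\xi_j\big)$ can lose rank. You must check that on each such stratum the linear conditions in $z$ still cut the fibre down enough to keep the total dimension $\leq N$; this is exactly the face-by-face analysis carried out in the paper's lemmas \ref{lem:face-kernel}--\ref{lem:characteristic-disc} for the closely related purpose of describing $\Sing(\mathcal M_\Aa(\beta))$. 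A second omission is that the ideal generated by the principal symbols of your chosen generators is in general only contained in the characteristic ideal, so what you have bounded is a variety containing $\ch(H_\Aa(\beta))$; to conclude that every component has dimension \emph{exactly} $N$ you still need to invoke Bernstein's inequality for the lower bound. With these two points made explicit, the holonomicity argument is complete. The rank discussion is an accurate summary of the SST approach and correctly flags the genericity of $\beta$ as the delicate point.
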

    
    Thus, the existence and uniqueness of $\Aa$-hypergeometric functions is guaranteed. In the remaining section we want to analyze the structure of the singular locus further. Let us first remark, that we can restrict us to the codimension $1$ part of the singular locus, since all singularities in higher codimensions are removable singularities due to Riemann's second removable theorem \cite{KaupHolomorphicFunctionsSeveral1983, BerkeschZamaereSingularitiesHolonomicityBinomial2014}. \\

    In order to connect the geometry of $\Conv(\Aa)$ with the structure of $\Sing(H_\Aa(\beta))$ we will state the following lemma.
    \begin{lemma} \label{lem:face-kernel}
      Let $l\in\mathbb L=\ker_{\mathbb Z}(\Aa)$ be an arbitrary element of the kernel of $\Aa$ and $\tau\subsetneq\Conv(\Aa)$ be an arbitrary face. Then $\{l_j\}_{j\notin\tau}$ consists either only in zeros or it contains both, elements with positive and with negative integer.
    \end{lemma}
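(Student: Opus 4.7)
The plan is to exploit the definition of a face via a supporting linear functional, together with the affine hyperplane condition $h(a)=1$ imposed on $\Aa$, to reduce the statement to an elementary sign argument on a linear combination whose coefficients all share one sign.

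First I would construct a well-normalized supporting linear form. By the defining property \eqref{eq:facedef} of a face, there is a linear map $\phi:\mathbb{R}^{n+1}\to\mathbb{R}$ attaining its maximum value $c$ on $\tau$, so that $\phi(a^{(j)})=c$ for $j\in\tau$ and $\phi(a^{(j)})<c$ for $j\notin\tau$ (the strict inequality is where it matters that $\tau$ is a proper face). Using the hypothesis that $h(a)=1$ for all $a\in\Aa$, I define $\psi := \phi - c\,h$, which is still linear on $\mathbb{R}^{n+1}$ and now satisfies $\psi(a^{(j)}) = 0$ for $j\in\tau$ and $\psi(a^{(j)}) < 0$ for $j\notin\tau$.

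Next I would apply $\psi$ to the identity $\Aa\, l = 0\in\mathbb{Z}^{n+1}$. Since $\psi$ is linear, this yields
\begin{align}
  0 \;=\; \sum_{j=1}^N \psi(a^{(j)})\, l_j \;=\; \sum_{j\notin\tau} \psi(a^{(j)})\, l_j \comma
\end{align}
because the $j\in\tau$ contributions vanish by construction.

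Finally, the sign argument: suppose $\{l_j\}_{j\notin\tau}$ is not identically zero. If all nonzero $l_j$ with $j\notin\tau$ had the same sign, then every summand $\psi(a^{(j)})\,l_j$ would have the same (strict) sign, as $\psi(a^{(j)})<0$ for each such index. The sum would then be strictly nonzero, contradicting the displayed equality. Hence both strictly positive and strictly negative entries must appear among $\{l_j\}_{j\notin\tau}$, which is the conclusion.

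I do not expect any serious obstacle here; the only point requiring a bit of care is the normalization step producing $\psi$ from $\phi$, which genuinely uses the quasi-homogeneity assumption $h(a)=1$, and the observation that $\tau\subsetneq\Conv(\Aa)$ is proper (so that there is at least one index $j\notin\tau$ and the inequality $\psi(a^{(j)})<0$ is strict there).
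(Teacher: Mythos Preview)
Your proof is correct and follows essentially the same approach as the paper's: both apply the supporting linear functional $\phi$ of $\tau$ to the relation $\sum_j l_j a^{(j)}=0$ and use the affine hyperplane condition to cancel the $\tau$-contributions, leaving a sum $\sum_{j\notin\tau}(\phi(a^{(j)})-c)\,l_j=0$ with strictly negative weights. The only cosmetic difference is that you package the normalization as a new functional $\psi=\phi-c\,h$, whereas the paper performs the equivalent step inline via $\sum_j l_j=0$.
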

    \begin{proof}
      Let $\phi:\mathbb R^{n+1}\rightarrow\mathbb R$ be the linear map which characterizes the face $\tau$, i.e. the linear function which is maximized exactly for all points in $\tau$. Thus, we have
      \begin{align}
        0 = \phi\left(\sum_{j=1}^N l_ja^{(j)}\right)= r \sum_{j\in \tau} l_j + \sum_{j\notin\tau} l_j \phi(a^{(j)}) = \sum_{j\notin\tau} l_j \left[ \phi(a^{(j)}) - r \right]
      \end{align}
      where we denote by $r=\max_{j}\left(\phi(a^{(j)})\right)$ the value, which $\phi$ reaches for all $j\in\tau$. Moreover, we make use of the homogenity of $\Aa$ which implies $\sum_{j}l_j = \sum_{j\in\tau} l_j + \sum_{j\notin\tau} l_j= 0$. On the other hand we have $\phi(a^{(j)})-r<0$ for all $j\notin\tau$. This shows the assertion.
    \end{proof}
    
    Furthermore, we can establish a connection between the faces of $\Conv(\Aa)$ and the characteristic variety of $\mathcal M_\Aa(\beta)$. The two following lemmata are inspired by \cite{AdolphsonHypergeometricFunctionsRings1994} with some slightly adjustments.
    
    \begin{lemma} \label{lem:face-characteristic}
      For every point $(\hat z,\hat \xi)\in\ch(\mathcal M_\Aa(\beta))$, there exists a unique face $\tau\subseteq\Conv(\Aa)$ such that $\hat \xi_j\neq 0$ if and only if $j\in\tau$.
    \end{lemma}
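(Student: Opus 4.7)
The plan is to exploit the toric operators $\square_l$ with $l \in \mathbb L$. Because the homogeneity assumption $h(a) = 1$ forces $\sum_j l_j = 0$ for every $l \in \mathbb L$, the two monomials in $\square_l$ have the same total order, so the principal symbol is the full binomial
\begin{align}
\sigma(\square_l) = \prod_{l_j>0} \xi_j^{l_j} - \prod_{l_j<0} \xi_j^{-l_j}.
\end{align}
Hence any characteristic point $(\hat z, \hat\xi) \in \ch(\mathcal M_\Aa(\beta))$ must satisfy $\prod_{l_j>0} \hat\xi_j^{l_j} = \prod_{l_j<0} \hat\xi_j^{-l_j}$ for every $l \in \mathbb L$. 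The symbols $\sigma(E_i(\beta)) = \sum_j a_i^{(j)} z_j \xi_j$ of the Euler operators impose additional relations between $\hat z$ and $\hat\xi$, but they do not further constrain which components $\hat\xi_j$ vanish, so the support $\tau := \{j : \hat\xi_j \neq 0\}$ is governed entirely by the binomial identities above.

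To verify that $\tau$ corresponds to a face of $\Conv(\Aa)$, I would argue by contradiction. Suppose $\tau$ is \emph{not} a face. Then a Farkas-type alternative applied to $\Aa$ (combined with the homogeneity functional $h$) produces an $l \in \mathbb L$ with $l_k \geq 0$ for all $k \notin \tau$ and $l_k > 0$ for at least one such $k$; this is precisely the converse direction of lemma \ref{lem:face-kernel}. For this $l$, the product $\prod_{l_j>0}\hat\xi_j^{l_j}$ contains the factor $\hat\xi_k^{l_k} = 0$ and thus vanishes, whereas every index with $l_j<0$ lies in $\tau$, so $\prod_{l_j<0}\hat\xi_j^{-l_j}$ is a product of nonzero quantities. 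This contradicts the required binomial identity and forces $\tau$ to be a face. Conversely, the direct statement in lemma \ref{lem:face-kernel} ensures that any face support is automatically consistent with all binomial relations, closing the equivalence.

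Uniqueness of $\tau$ is automatic, since the index set $\{j : \hat\xi_j \neq 0\}$ is fixed by $\hat\xi$ and the identification of a face with the subset of $\Aa$ it contains (set up around equation \eqref{eq:facedef}) is injective. The degenerate case $\hat\xi = 0$ is covered by taking $\tau$ to be the empty face.

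The main technical obstacle is the converse of lemma \ref{lem:face-kernel}: producing the required $l \in \mathbb L$ from the failure of $\tau$ to be a face. The cleanest route is via Farkas' lemma after rationalizing and clearing denominators, or equivalently by invoking the classical fact that the affine toric variety cut out by the binomial relations $\square_l$ decomposes into torus orbits indexed by the faces of $\Conv(\Aa)$, cf.\ \cite{GelfandDiscriminantsResultantsMultidimensional1994}. Once this combinatorial input is available, the remainder of the argument is pure bookkeeping with vanishing of monomials.
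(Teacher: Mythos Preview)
Your argument is correct and rests on the same initial observation as the paper's proof: the homogeneity condition forces the principal symbol of every $\square_l$ to be the full binomial $\prod_{l_j>0}\xi_j^{l_j}-\prod_{l_j<0}\xi_j^{-l_j}$, so any characteristic point must satisfy all these binomial identities. From there, however, the two proofs diverge. You take $J=\{j:\hat\xi_j\neq 0\}$ and argue by contradiction that $J$ must already be a face, invoking the converse of lemma~\ref{lem:face-kernel} (which, as you note, is a Motzkin/Farkas alternative: $J$ fails to be a face iff some $l\in\mathbb L$ has $l_k\geq 0$ for all $k\notin J$ with at least one strict inequality). The paper instead works constructively: it lets $\tau$ be the \emph{carrier face} of $J$ (the smallest face containing $J$), first shows $\dim\Conv(J)=\dim\tau$ by producing an explicit $l\in\mathbb L$ from an intersecting line segment when the dimensions differ, and then shows $J=\tau$ by writing any $a^{(k)}$ with $k\in\tau$ as an affine combination of points in $J$ and reading off another explicit $l\in\mathbb L$. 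In effect the paper \emph{proves} the converse of lemma~\ref{lem:face-kernel} by hand in these two steps, whereas you import it wholesale from LP duality or the toric orbit decomposition. Your route is shorter and more conceptual, but is only self-contained once the Farkas step is spelled out; the paper's route is more explicit and avoids any external input beyond elementary convexity. One small remark: your sentence about lemma~\ref{lem:face-kernel} ``closing the equivalence'' is unnecessary---the lemma only asserts that every characteristic support is a face, not that every face occurs.
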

    \begin{proof}
     The case $\hat\xi = (0,\ldots,0)$ is trivially satisfied by $\tau=\emptyset$ and we will exclude this case in the following. Denote by $\emptyset\neq J\subseteq\{1,\ldots,N\}$ the index set for which $\hat\xi_j \neq 0$ for all $j\in J$.  Let $\tau$ be the carrier of $J$, i.e. the smallest face of $\Conv(\Aa)$ containing the points with labels in $J$. We want to show first, that $J$ spans affinely the supporting hyperplane of $\tau$, i.e. that $\Conv(J)$ and $\tau$ having the same dimension.
      
      Suppose that $\dim(\tau) > \dim (\Conv(J))$. Then we can find two points $\alpha,\beta\in\tau\setminus J$ with $\hat\xi_\alpha = \hat\xi_\beta=0$, such that the line segment from $\alpha$ to $\beta$ has an intersection point with $\Conv(J)$. Thus, there exist a rational number $0<\gamma<1$ and rational numbers $\lambda_j\geq 0$ describing this intersection point
      \begin{align}
        \gamma a^{(\alpha)} + (1-\gamma) a^{(\beta)} = \sum_{j\in J} \lambda_j a^{(j)}
      \end{align}
      with $\sum_{j\in J} \lambda_j = 1$. Denote by $m\in\mathbb Z_{>0}$ the least common multiple of all denominators of $\gamma$ and $\lambda_j$ for $j\in J$. Then we can generate an element in $\mathbb L$ or in $H_\Aa(\beta)$, respectively
      \begin{align}
        \square = \partial_\alpha^{m\gamma} \partial_\beta^{m(1-\gamma)} - \prod_{j\in J} \partial_j^{m\lambda_j} \in H_\Aa(\beta) \point
      \end{align}
     Since its principal symbol
      \begin{align}
        \sigma(\square) = \hat\xi_\alpha^{m\gamma} \hat\xi_\beta^{m(1-\gamma)} - \prod_{j\in J} \hat\xi_j^{m\lambda_j}
      \end{align}
      has to vanish for all values $(\hat z,\hat \xi)\in\ch(\mathcal M_\Aa(\beta))$ we get a contradiction, since $\hat\xi_\alpha=\hat\xi_\beta=0$ and $\hat\xi_j\neq 0$ for all $j\in J$.
      
      Thus, $\Conv(J)$ and $\tau$ having the same dimension. The second step will be to show, that $J=\tau$. Let $k\in\tau$ be an arbitrary point of the face $\tau$. We then have to prove that $\hat\xi_k\neq 0$. Since $\tau$ lies in the affine span of $J$, we will find some rational numbers $\lambda_j$ such that
      \begin{align}
        a^{(k)} = \sum_{j\in J} \lambda_j a^{(j)} = \sum_{\substack{j\in J \\ \lambda_j<0}} \lambda_j a^{(j)} + \sum_{\substack{j\in J \\ \lambda_j\geq 0}} \lambda_j a^{(j)}
      \end{align}      
      with $\sum_{j\in J} \lambda_j = 1$. Again, let $m\in\mathbb Z_{>0}$ the least common multiple of all $\lambda_j$ with $j\in J$, which will generate an element in $\mathbb L$ and we obtain
      \begin{align}
        \square = \partial_{k}^m \prod_{\substack{j\in J \\ \lambda_j<0}} \partial_j^{-m \lambda_j} - \prod_{\substack{j\in J \\ \lambda_j\geq 0}} \partial_j^{m\lambda_j} \in H_\Aa(\beta) \point
      \end{align}
      Both terms having the same order, since $1-\sum_{\lambda_j<0} \lambda_j = \sum_{\lambda_j\geq 0} \lambda_j$, which results in
      \begin{align}
        \sigma(\square) = \xi_{k}^m \prod_{\substack{j\in J \\ \lambda_j<0}} \xi_j^{-m \lambda_j} - \prod_{\substack{j\in J \\ \lambda_j\geq 0}} \xi_j^{m\lambda_j} \point
      \end{align}
      Thus, it is $\hat\xi_{k}^m \prod_{\lambda_j<0} \hat\xi_j^{-m \lambda_j} = \prod_{\lambda_j\geq 0} \hat\xi_j^{m\lambda_j}$. By assumption it is $\hat\xi_{j}\neq 0$ for all $j\in J$ and therefore it follows also $\hat\xi_{k}\neq 0$. 
    \end{proof}
    
    In order to give a relation between $A$-discriminants and the characteristic varieties, we will associate to every finite subset $\Aa\subset\mathbb Z^{n+1}$ a multivariate polynomial 
    \begin{align}
      f_z(x) = \sum_{a^{(j)}\in\Aa} z_j x^{a^{(j)}} \in \mathbb C[x_0,\ldots,x_n] \point
    \end{align}
    Recall, that for every face $\tau\subseteq\Aa$ we understand by 
    \begin{align}
      f_{\tau,z}(x) = \sum_{j\in\tau} z_j x^{a^{(j)}} \in \mathbb C[x_0,\ldots,x_n]
    \end{align}
    the truncated polynomial with respect to the face $\tau$.
    
    \begin{lemma} \label{lem:characteristic-disc}
      Let $\emptyset\neq\tau\in\Conv(\Aa)$ be an arbitrary face. Then the following two statements are equivalent:
      \begin{enumerate}[i)]
        \item the point $(\hat z,\hat \xi)\in\ch(\mathcal M_\Aa(\beta))$ is a point of the characteristic variety and $\tau$ is the face corresponding to this point according to lemma \ref{lem:face-characteristic}, i.e. $\hat \xi_j \neq 0$ if and only if $j\in\tau$
        \item the polynomials $\frac{\partial f_{\tau,\hat z}}{\partial x_0},\ldots,\frac{\partial f_{\tau,\hat z}}{\partial x_n}$ have a common zero in $x\in (\mathbb C^\star)^{n+1}$
      \end{enumerate}        
    \end{lemma}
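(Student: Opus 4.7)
The plan is to translate membership in $\ch(\mathcal M_\Aa(\beta))$ into the vanishing of the principal symbols of the generators \eqref{eq:toricOperators} and \eqref{eq:homogeneousOperators}, and then to read those vanishing conditions as a character parametrization of $\hat\xi$ restricted to $\tau$. Because all $a^{(j)}\in\Aa$ lie on the affine hyperplane $\{h=1\}$, every $l\in\mathbb L$ satisfies $\sum_j l_j=0$, so both monomials of $\square_l$ have the same order and $\sigma(\square_l)=\xi^{l^+}-\xi^{l^-}$; similarly $\sigma(E_i(\beta))=\sum_{j=1}^N a_i^{(j)} z_j\xi_j$ (the constant $\beta_i$ is lower order and drops out). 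These symbols generate the characteristic ideal, so $(\hat z,\hat\xi)\in\ch(\mathcal M_\Aa(\beta))$ is equivalent to the simultaneous vanishing of all $\sigma(\square_l)$ and all $\sigma(E_i(\beta))$.

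For (ii)$\Rightarrow$(i), given $x\in(\mathbb C^*)^{n+1}$ with $\partial f_{\tau,\hat z}/\partial x_i(x)=0$ for all $i$, I define $\hat\xi_j := x^{a^{(j)}}$ for $j\in\tau$ and $\hat\xi_j:=0$ otherwise, so $\hat\xi_j\neq 0$ precisely on $\tau$. The Euler symbols vanish by the identity $\sum_{j} a_i^{(j)}\hat z_j \hat\xi_j = \sum_{j\in\tau} a_i^{(j)}\hat z_j x^{a^{(j)}} = x_i\,\partial f_{\tau,\hat z}/\partial x_i(x) = 0$. For the binomial symbols I split $\mathbb L$ according to Lemma~\ref{lem:face-kernel}: if $l_j=0$ for every $j\notin\tau$, then $\sum_{j\in\tau} l_j a^{(j)}=0$ and $\hat\xi^{l^+}$, $\hat\xi^{l^-}$ both evaluate to the same monomial $x^{\sum_{l_j>0} l_j a^{(j)}}$, so their difference vanishes; otherwise both $l^+$ and $l^-$ have support meeting the complement of $\tau$, so each of the two products contains a factor $\hat\xi_{j'}=0$ and again the symbol vanishes.

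For (i)$\Rightarrow$(ii), I take $(\hat z,\hat\xi)\in\ch(\mathcal M_\Aa(\beta))$ with $\hat\xi$ supported precisely on $\tau$ and must invert this construction. By Lemma~\ref{lem:face-kernel}, the only binomial conditions that are not already automatically satisfied are those with $\operatorname{supp}(l)\subseteq\tau$, i.e.\ $l\in\mathbb L_\tau := \ker_\mathbb Z(\Aa_\tau)$ where $\Aa_\tau$ is the submatrix with columns indexed by $\tau$. Thus $(\hat\xi_j)_{j\in\tau}\in(\mathbb C^*)^{|\tau|}$ lies in the subtorus cut out by the toric ideal of $\mathbb L_\tau$. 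Classical toric geometry identifies this subtorus with the image of the character parametrization $\varphi_\tau:(\mathbb C^*)^{n+1}\to(\mathbb C^*)^{|\tau|}$, $x\mapsto(x^{a^{(j)}})_{j\in\tau}$, so there exists $x\in(\mathbb C^*)^{n+1}$ with $\hat\xi_j=x^{a^{(j)}}$ for every $j\in\tau$. Substituting into the Euler symbol equations gives $x_i\,\partial f_{\tau,\hat z}/\partial x_i(x)=0$, and dividing by $x_i\neq 0$ yields (ii).

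The principal obstacle is the toric inversion in the previous paragraph: namely, that every torus-valued solution of the binomial system determined by $\mathbb L_\tau$ actually arises from a character $x\mapsto x^{a^{(j)}}$. The inclusion in one direction is immediate since $l\in\mathbb L_\tau$ gives $\sum_j l_j a^{(j)}=0$; the reverse is the standard description of the dense torus orbit of an affine toric variety and relies on surjectivity of $\Aa_\tau^T$ onto its image lattice together with saturation of $\mathbb L_\tau$, both of which follow from the full-rank assumption on $\Aa$. Everything else is bookkeeping with principal symbols plus the combinatorial dichotomy furnished by Lemma~\ref{lem:face-kernel}.
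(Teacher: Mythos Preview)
Your argument is correct and follows essentially the same route as the paper: for (ii)$\Rightarrow$(i) you make the identical choice $\hat\xi_j=x^{a^{(j)}}$ on $\tau$ and zero off $\tau$, verify the Euler symbols via $x_i\,\partial f_{\tau,\hat z}/\partial x_i$, and use the dichotomy of Lemma~\ref{lem:face-kernel} for the binomial symbols; this matches the paper verbatim. The only substantive difference is in (i)$\Rightarrow$(ii): the paper argues elementarily that the system $\hat x^{a^{(j)}}=\hat\xi_j$ ($j\in\tau$) is consistent by observing that any integer relation $\sum_{j\in\tau} l_j a^{(j)}=0$ extends to an element of $\mathbb L$ and hence forces $\prod_{j\in\tau}\hat\xi_j^{l_j}=1$ via $\sigma(\square_l)=0$, whereas you package the same step as the standard identification of the dense torus of the affine toric variety of $\Aa_\tau$ with the image of the monomial map $\varphi_\tau$. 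These are two phrasings of the same fact (the paper's ``no contradicting equation'' is exactly the statement that $(\hat\xi_j)_{j\in\tau}$ lies in the relation subtorus, and surjectivity of $\varphi_\tau$ onto that subtorus follows from divisibility of $\mathbb C^*$). One small correction: saturation of $\mathbb L_\tau$ holds simply because it is an integer kernel, and surjectivity of $\Aa_\tau^T$ onto its image is tautological; neither actually requires the full-rank hypothesis on $\Aa$, so your attribution there is slightly misplaced, though the conclusion stands.
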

    \begin{proof}
      ``$ii)\Rightarrow i)$'': Let $\hat x\in (\mathbb C^\star)^{n+1}$ be a common solution of $\frac{\partial f_{\tau,\hat z}}{\partial x_0}=\ldots=\frac{\partial f_{\tau,\hat z}}{\partial x_n}=0$ which implies
      \begin{align}
        \hat x_i \frac{f_{\tau,\hat z}(\hat x)}{\partial x_i} = \sum_{j\in \tau} a_i^{(j)} \hat z_j \hat x^{a^{(j)}} = 0 \point
      \end{align}
      Consider the principal symbol of the homogeneous operators $E_i(\beta)\in H_\Aa(\beta)$ from equation (\ref{eq:homogeneousOperators}). By setting all $\hat\xi_j=0$ for $j\notin\tau$ and $\hat\xi_j = x^{a^{(j)}}$ for all $j\in\tau$ we obtain
      \begin{align}
        \sigma(E_i(\beta))(\hat z,\hat\xi) = \sum_{a^{(j)}\in\Aa} a_i^{(j)} \hat z_j \hat\xi_j = 0\point
      \end{align}
      It remains to prove that $\sigma(\square_l) (\hat z,\hat \xi) = 0$ for all $l\in\mathbb L$, where $\square_l$ was defined in equation (\ref{eq:toricOperators}). Since all points of $\Aa$ lying on a hyperplane off the origin, all monomials in $\square_l$ having the same order. Therefore, we have to show
      \begin{align}
        \prod_{l_j>0} \hat\xi_j^{l_j} = \prod_{l_j<0} \hat\xi_j^{-l_j} \quad\text{for all}\quad l\in\mathbb L \point \label{eq:xi=xi}
      \end{align}
      According to lemma \ref{lem:face-kernel} there are only two possible cases. In the first case we will have all $l_j=0$ with $j\notin\tau$. Thus, we insert $\hat\xi_j=\hat x^{a^{(j)}}$ for all $j\in\tau$
      \begin{align}
        \hat x^{\sum_{l_j>0} l_j a^{(j)}} = \hat x^{-\sum_{l_j<0} l_j a^{(j)}}
      \end{align}
      which is true, since all $l\in\mathbb L$ satisfy $\sum_j l_j a^{(j)} = \sum_{l_j>0} l_j a^{(j)} + \sum_{l_j<0} l_j a^{(j)} = 0$. In the second case, there are elements with $l_j<0$ as well as with $l_j>0$ corresponding to points outside of $\tau$ and equation (\ref{eq:xi=xi}) is trivially satisfied by $0=0$.\\
      
      ``$i)\Rightarrow ii)$'': If $(\hat z,\hat\xi)\in\ch(\mathcal M_\Aa(\beta))$ that states
      \begin{align}
        \sigma(E_i(\beta)) = \sum_{a^{(j)}\in\Aa} a^{(j)} \hat z_j \hat\xi_j = \sum_{j\in\tau} a^{(j)} \hat z_j \hat\xi_j = 0 \point
      \end{align}
      Thus, $\frac{\partial f_{\tau,\hat z}}{\partial x_0},\ldots,\frac{\partial f_{\tau,\hat z}}{\partial x_n}$ have a common zero in $\hat x\in(\mathbb C^*)^{n+1}$ if the system of equations
      \begin{align}
        \hat x^{a^{(j)}} = \hat \xi_j \qquad\text{for all}\quad j\in\tau \label{eq:xxi}
      \end{align}
      has a solution in $\hat x\in(\mathbb C^*)^{n+1}$. Hence, we have to show that it is impossible to construct a contradicting equation by combining the equations of (\ref{eq:xxi}). In other words for all integers $l_j\in \mathbb Z$ satisfying 
      \begin{align}
        \sum_{j\in\tau} l_j a^{(j)} = 0 \label{eq:lcomb}
      \end{align}
      we have to show that $\prod_{j\in\tau} (\hat\xi_j)^{l_j} = 1$. Note, that (\ref{eq:lcomb}) directly give rise to an element in $\mathbb L$, by setting the remaining $l_j=0$ for all $j\notin\tau$. Therefore, we can construct
      \begin{align}
        \square = \prod_{l_j>0} \partial_j^{l_j} - \prod_{l_j<0} \partial_j^{-l_j} \in H_\Aa(\beta) \point \label{eq:squarel}
      \end{align}
      Again, by the fact that all points of $\Aa$ lie on a common hyperplane off the origin, both terms in (\ref{eq:squarel}) have the same order. Thus,
      \begin{align}
        \sigma(\square) = \prod_{l_j>0} \xi_j^{l_j} - \prod_{l_j<0} \xi_j^{-l_j} 
      \end{align}
      which completes the proof since $\sigma(\square)(\hat z,\hat \xi) = 0$.
    \end{proof}
   
    By the previous lemma, we can conclude directly:
    \begin{cor}
      Let $A\subset\mathbb Z^n$ be a finite subset, $\Aa\subset\mathbb Z^{n+1}$ its homogenization and $f=\sum_{a^{(j)}\in A} z_j x^{a^{(j)}}\in\mathbb C[x_1,\ldots,x_n]$ the corresponding polynomial. Then we have the equality
      \begin{align}
        \operatorname{Sing}(\mathcal M_\Aa(\beta)) = \mathbf V (E_A(f)) \point
      \end{align}
    \end{cor}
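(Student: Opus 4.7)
The plan is to combine the prime factorization of the principal $A$-determinant (theorem \ref{thm:pAdet-factorization}) with the face-wise description of the characteristic variety provided by lemmata \ref{lem:face-characteristic} and \ref{lem:characteristic-disc}. The key observation is that both sides of the claimed equality decompose face by face over the Newton polytope $\Newt(f) = \Conv(A)$: the factor $\Delta_{A\cap\tau}(f_\tau)$ on the right governs exactly those $(\hat z,\hat\xi)\in\ch(\mathcal M_\Aa(\beta))$ whose $\hat\xi$ has support $\tau$ (under the identification of faces of $\Conv(A)$ with faces of $\Conv(\Aa)$). Because faces of $\Conv(\Aa)$ of the homogenization correspond bijectively to faces of $\Conv(A)$, and because $\Delta_{\Aa\cap\tau}(\tilde f_\tau) = \Delta_{A\cap\tau}(f_\tau)$, we may pass freely between the affine and homogenized viewpoints.

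For the inclusion $\operatorname{Sing}(\mathcal M_\Aa(\beta)) \subseteq \mathbf V(E_A(f))$, it suffices (since the right-hand side is Zariski-closed) to show the inclusion for the raw projection before taking closures. Given $\hat z$ with some $(\hat z,\hat\xi)\in\ch(\mathcal M_\Aa(\beta))$ and $\hat\xi\neq 0$, lemma \ref{lem:face-characteristic} produces a unique nonempty face $\tau$ carrying the support of $\hat\xi$, and lemma \ref{lem:characteristic-disc} then yields a common zero $\hat x\in(\mathbb C^\ast)^{n+1}$ of $\partial_0 f_{\tau,\hat z},\ldots,\partial_n f_{\tau,\hat z}$. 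Since $\Aa$ lies on a hyperplane off the origin, $f_{\tau,\hat z}$ is quasi-homogeneous, so Euler's identity forces $f_{\tau,\hat z}(\hat x)=0$ as well. Hence $\hat z$ lies in the vanishing locus of $\Delta_{A\cap\tau}(f_\tau)$, and by theorem \ref{thm:pAdet-factorization} it lies in $\mathbf V(E_A(f))$.

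For the reverse inclusion $\mathbf V(E_A(f)) \subseteq \operatorname{Sing}(\mathcal M_\Aa(\beta))$, take $\hat z$ with $E_A(f)(\hat z)=0$. By the factorization some factor $\Delta_{A\cap\tau}(f_\tau)$ vanishes at $\hat z$, so $\hat z$ lies in the Zariski closure $\overline{\nabla_0}$ of the set of polynomials admitting a common zero of $f_\tau$ and its partials in $(\mathbb C^\ast)^{n+1}$. Approximating $\hat z$ by a sequence $\hat z^{(k)}\in\nabla_0$ with witnesses $\hat x^{(k)}\in(\mathbb C^\ast)^{n+1}$, we define $\hat\xi^{(k)}_j := (\hat x^{(k)})^{a^{(j)}}$ for $j\in\tau$ and $\hat\xi^{(k)}_j:=0$ otherwise; the direction $ii)\Rightarrow i)$ of lemma \ref{lem:characteristic-disc} shows that $(\hat z^{(k)},\hat\xi^{(k)})\in\ch(\mathcal M_\Aa(\beta))$ with $\hat\xi^{(k)}\neq 0$. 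Therefore each $\hat z^{(k)}$ lies in the pre-closure projection defining $\operatorname{Sing}(\mathcal M_\Aa(\beta))$, and $\hat z$ lies in its closure.

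The only minor subtlety I expect is bookkeeping between the homogenized and non-homogenized versions of the discriminant and the faces: one must keep track of the fact that the face $\tau$ produced by lemma \ref{lem:face-characteristic} is a face of $\Conv(\Aa)$ in $\mathbb R^{n+1}$, while theorem \ref{thm:pAdet-factorization} indexes the product by faces of $\Newt(f)\subset\mathbb R^n$; once the natural bijection between these faces is invoked and Euler's identity is used to reconcile the ``missing'' vanishing of $f_\tau$ in condition (ii) of lemma \ref{lem:characteristic-disc} with the defining condition of $\Delta_{A\cap\tau}$, the two directions are essentially immediate. No new calculations are required beyond citing the prime factorization and the two preceding lemmata.
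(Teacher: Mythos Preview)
Your proposal is correct and follows essentially the same route as the paper: both combine lemmata \ref{lem:face-characteristic} and \ref{lem:characteristic-disc} with the prime factorization of theorem \ref{thm:pAdet-factorization}, identifying the singular locus as $\bigcup_{\emptyset\neq\tau\subseteq\Conv(A)}\mathbf V(\Delta_{A\cap\tau}(f_\tau))$. The paper is somewhat terser---it packages both inclusions into the single observation that condition (ii) of lemma \ref{lem:characteristic-disc}, after dehomogenizing via Euler's relation, is precisely the defining condition of $\nabla_0$ for $f_\tau$---whereas you spell out the two inclusions and the Zariski-closure approximation explicitly; but the underlying argument is the same.
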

    \begin{proof}
      Suppose that $\Aa$ has the form $\{(1,a^{(1)}),\ldots,(1,a^{(N)})\}$ with $A=\{a^{(1)},\ldots,a^{(N)}\}$. Then the statement ii) in lemma \ref{lem:characteristic-disc} is equal to a common zero of $f_{\tau,\hat z}, \pd{f_{\tau,\hat z}}{x_1},\ldots,\pd{f_{\tau,\hat z}}{x_n}$ in $x\in \Csn$. Thus, the singular locus $\Sing(\mathcal M_\Aa(\beta))$ is given by the Zariski closure of
      \begin{align}
        \bigcup_{\emptyset\neq \tau\subseteq\Conv(A)} \left\{ \hat z \in\mathbb C^N | \mathbf V \left(f_{\tau,\hat z},\frac{\partial f_{\tau,\hat z}}{\partial x_1},\ldots,\frac{\partial f_{\tau,\hat z}}{\partial x_n}\right) \neq\emptyset \text{ in } (\mathbb C^*)^n \right\} \point
      \end{align}
      But this is nothing else than the union of all $A$-discriminants
      \begin{align}
        \Sing(\mathcal M_\Aa(\beta)) = \bigcup_{\emptyset\neq \tau\subseteq\Conv(A)} \mathbf V (\Delta_{A\cap\tau} (f_\tau)) \point
      \end{align}
      The application of theorem \ref{thm:pAdet-factorization} concludes the proof. Since the $A$-discriminants only depend on the affine structure of $A$, the corollary also applies to configurations $\Aa$ which lying on a common hyperplane off the origin, but being not necessarily in the form $\{(1,a^{(1)}),\ldots,(1,a^{(N)})\}$.
    \end{proof}
    
    Thus, we have characterized the singular locus of $\Aa$-hypergeometric modules, which describes the possible singularities of the $\Aa$-hypergeometric functions, by the principal $A$-determinant. In general it is a hard problem to calculate these principal $A$-determinants. However, by the \HKP we have a way to describe these possible singularities very efficiently in an indirect manner.


  \section{Feynman integrals and Landau varieties}
    
    \subsection{Basic definitions}
      \label{sec:FeynmanIntegrals}
    
      After introducing all the methods and tools in the previous sections, let us now turn to the main object of interest in this work: the Feynman integral. Appearing in all perturbative quantum field theories, Feynman integrals are an indispensable building block for almost every prediction within these theories. In order to focus on the main issues, we will restrict our discussion to scalar Feynman integrals only. This restriction is vindicated by the fact, that in principle every other Feynman integral can be reduced to a linear combination of scalar Feynman integrals, e.g. by the techniques of Passarino and Veltman \cite{PassarinoOneloopCorrectionsAnnihilation1979}, Tarasov's method \cite{TarasovGeneralizedRecurrenceRelations1997, TarasovReductionFeynmanGraph1998} or by the Corolla polynomial \cite{KreimerPropertiesCorollaPolynomial2012}. However, it should not go unmentioned that the reduction can be very difficult in practice.
      
      Formally speaking, a Feynman integral maps a Feynman graph $\Gamma$ to a multivalued function $I_\Gamma$, which depends on several variables and parameters. As specific representations of these functions $I_\Gamma$, we can write down different kinds of integrals, each valid only on a restricted domain. Thus, we do not want to use the term ``Feynman integral'' to refer to individual integrals, but rather to the analytical, common continuation of these integrals to a maximal domain for the parameters and the variables. The integral representations of Feynman integrals can be classified by the integration variables into position space representation, momentum space representation and parametric space representations.
      
      In momentum space, we assign to every external edge (also called ``leg'') of the Feynman graph $\Gamma$ a momentum $p_1,\ldots,p_E$, which will be treated as a given variable. The internal edges $e_1,\ldots,e_n$ of the Feynman graph $\Gamma$ are assigned with momenta $q_i\in\mathbb R^d$, such that momentum conservation is satisfied at every vertex of $\Gamma$. The remaining momenta, which are not determined by this procedure will be denoted by $k_1,\ldots,k_L$, where $L$ is the number of loops, i.e. the first Betti number of $\Gamma$. The representation of the Feynman integral in momentum space is the integral over all these undetermined momenta $k_1,\ldots,k_L$
      \begin{align}
        I_\Gamma (d,\nu,p,m) = \int_{\mathbb R^{d\times L}} \prod_{j=1}^L \frac{d^d k_j}{\pi^{d/2}} \prod_{i=1}^n \frac{1}{D_i^{\nu_i}} \label{eq:FeynmanMomSp} \comma
      \end{align}
      where external momenta $p = (p_1,\ldots,p_E)^T\in\Omega_p\subseteq\mathbb C^{d\times E}$ and internal masses $m = (m_1,\ldots,m_n)^T\in\Omega_m\subseteq\mathbb C^n$ considered as variables, whereas the spacetime dimension $d$ and the indices $\nu=(\nu_1,\ldots,\nu_n)$ will be treated as parameters. The denominators of the integrand $D_i=q_i^2+m_i^2$ stand for the inverse propagators, which we attach to every edge $e_i$ of the graph $\Gamma$.
      
      Note, that all momenta $p_i,q_i,k_j$ are supposed to be $d$-dimensional vectors\footnote{To be precise the function $I_\Gamma$ only depends on scalar products $s_{ij}=p_ip_j$ of external momenta and we will proceed to consider $I_\Gamma$ as depending on scalar products $s_{ij}$ and squares of momenta $m_i^2$, which will both considered to be complex numbers. This will be more apparent in the parametric representations.}. By restricting the external momenta to be real numbers $p\in\mathbb R^{d\times E}$, the integral in (\ref{eq:FeynmanMomSp}) is often referred to be the ``Euclidean'' version of the Feynman integral, since all momenta are vectors in a real Euclidean space. However, in a physical context one is interested in a slightly different version of (\ref{eq:FeynmanMomSp}), where we replace $D_i$ by $\tilde D_i = -q_i^2 + m_i^2$ and suppose all momenta to be $d$-dimensional vectors of Minkowski space. Since all momenta only appear as squares, we can change from (\ref{eq:FeynmanMomSp}) to its ``Minkowskian'' version by considering one component of the momenta to be purely imaginary. This procedure is known in literature as Wick rotation. Thus, by considering the momenta $p$ to be complex-valued we can include both in (\ref{eq:FeynmanMomSp}): the Euclidean and the Minkowskian version.
      
      However, in the analytic continuation from real momenta to complex momenta, we will find a serious issue, which is the underlying reason for considering the approach of this article: for complex momenta $p$ we can not always ensure, that $D_i\neq 0$. This fact implies the kinematic singularities of Feynman integrals as well as the multivaluedness of Feynman integrals, since these singularities turn out to be branch points. In order to fix a principal sheet, it is a common practice to introduce a small imaginary part in the inverse propagators
      \begin{align}
        D_i = q_i^2 +m_i^2 - i\epsilon
      \end{align}
      with $\epsilon>0$, which we will sent to zero after integration. This so-called ``$i\epsilon$ prescription'' can be also comprised in a redefinition of the masses $ {m_i}^2  \mapsto m_i^2 -i\epsilon$. For simplicity we will drop this small imaginary part in the most notations. However, we will discuss this more in detail by means of the coamoeba in section \ref{sec:coamoebas}.    
      
      Also the indices $\nu\in\mathbb C^n$ will be continued to complex numbers, as well as the spacetime dimension $d$. However, we will not discuss the meaning of such that integrals as (\ref{eq:FeynmanMomSp}) with non-integer $d$ and we will refer to the parametric space integrals, which gives a precise meaning of complex-valued $d$. Equipped, with complex $d$ and $\nu$, Feynman integrals are also well prepared for regularizations, as dimensional and analytic regularization procedures. \\
      
      As stated above, the inverse propagators $D_i=q_i^2+m_i^2$ contain $q_i=\sum_{j=L}R_{ij}k_j+\sum_{j=1}^E S_{ij}p_j$ a linear combination of external momenta and loop momenta. Thus, we can sort the inverse propagators in terms being quadratic, linear or constant in the loop momenta $k=(k_1,\ldots,k_L)$
      \begin{align}
        \Lambda(k,p,x) := \sum_{i=1}^n x_i D_i =  k^T M k + 2 p^T B k + p^T C p + J \label{eq:LandauPsi}
      \end{align}
      with a symmetric $L\times L$ matrix $M=R^TXR$,  an $E\times L$ matrix $B=S^TXR$, a symmetric $E\times E$ matrix $C=S^TXS$ and $J=\sum_{i=1}^n x_i m_i^2$. We will call $x_1,\ldots,x_n$ the Schwinger parameters and we will collect them by $X=\diag(x_1,\ldots,x_n)$.
      
      In doing so, we can construct two graph polynomials in the Schwinger parameters: the first and the second Symanzik polynomial (see e.g. \cite{EdenAnalyticSmatrix1966, BognerFeynmanGraphPolynomials2010})
      \begin{align}
        \Uu  = \det (M)\,, \qquad \Ff = -p^T B \Adj (M) B^T p + \det (M)(p^TCp+J)  \point \label{eq:SymanzikDef1}
      \end{align}
      
      Alternatively, we can construct the Symanzik polynomials directly from the graphs by considering the set of all spanning trees $\mathcal T_1$ and the set of all spanning two-forests $\mathcal T_2$ of the Feynman graph $\Gamma$
      \begin{align}
        \Uu = \sum_{T\in\mathcal T_1} \prod_{e_i \notin T} x_i \qquad\text{,}\quad \Ff = \sum_{F\in\mathcal T_2} p_F^2 \prod_{e_i\notin F} x_i + \Uu \sum_{i=1}^n x_i m_i^2 \label{eq:SymanzikDef2}
      \end{align}
      where $p_F$ is the sum of momenta flowing from one part of the forest $F$ two the other part. For detailed information we refer to \cite{BognerFeynmanGraphPolynomials2010} and \cite{NakanishiGraphTheoryFeynman1971}. As obviously by (\ref{eq:SymanzikDef2}), Symanzik polynomials are homogeneous polynomials of degree $L$ and $L+1$, respectively. In addition to it the first Symanzik polynomial and the massless part of the second Symanzik polynomial are linear in each Schwinger parameter. Alternatively, we can consider the second Symanzik polynomial as the discriminant of $\Lambda(k,p,x) \Uu(x)$ with respect to $k$, i.e. we obtain the second Symanzik polynomial by eliminating $k$ in $\Lambda(k,p,x)\Uu(x)$ by means of the equation $\frac{\partial \Lambda}{\partial k_j} = 0$.    
      
      In parametric space, there are several representations of the Feynman integral known. The so-called Feynman representation express Feynman integral as
      \begin{align}
        I_\Gamma (d,\nu,p,m) = \frac{\Gamma(\omega)}{\Gamma(\nu)}\int_{\mathbb R^n_+} dx x^{\nu-1} \delta(1-H(x)) \frac{\Uu^{\omega-\frac{d}{2}}}{\Ff^\omega} \label{eq:FeynmanUF}
      \end{align}    
      where $\omega=\sum_{i=1}^n \nu_i - L \frac{d}{2}$ is the superficial degree of divergence and $H(x):= \sum_{i=1}^n h_i x_i$ is an arbitrary hyperplane in $\mathbb R^n$ with $h_i\geq 0$ not all zero. The freedom of the choice of this hyperplane is sometimes referred as Cheng-Wu theorem and expresses the projective nature of the integral (\ref{eq:FeynmanUF}) due to the homogenity of the Symanzik polynomials. For a proof of the equivalence of equation (\ref{eq:FeynmanMomSp}) and (\ref{eq:FeynmanUF}) we refer to \cite{ItzyksonQuantumFieldTheory1980} and \cite{PanzerFeynmanIntegralsHyperlogarithms2015}. In order to simplify the notation, we will use a multiindex notation, which implies in particular $\Gamma(\nu):=\prod_i \Gamma(\nu_i)$, $dx:= \prod_i dx_i$, $x^{\nu-1}:= \prod_i x_i^{\nu_i-1}$.
      
      Another parametric integral representation is due to Lee and Pomeransky \cite{LeeCriticalPointsNumber2013}
      \begin{align}
        I_\Gamma (d,\nu,p,m) = \frac{\Gamma\left(\frac{d}{2}\right)}{\Gamma(\nu)\Gamma\left(\frac{d}{2}-\omega\right)}\int_{\mathbb R^n_+} dx x^{\nu-1} \Gg^{-\frac{d}{2}} \label{eq:FeynmanG}
      \end{align}
      where $\Gg:=\Uu+\Ff$ is the sum of the first and the second Symanzik polynomial. We refer to \cite{BitounFeynmanIntegralRelations2019} for a proof. \\
      
      Instead of the external momenta $p$ and the masses $m$, the parametric representations indicate another choice of what we want to use as the variables of the Feynman integral. Thus, we want to use the coefficients of the Symanzik polynomials as the variables of Feynman integrals. Hence, we can write
      \begin{align}
        \Gg = \sum_{a\in A} z_j x^a \label{eq:Gsupport}
      \end{align}
      where $A\subset\mathbb Z^n$ is the set of exponents and $z\in\mathbb C^N$ are the new variables of the Feynman integral. To avoid redundancy we will always assume that $z_j\not\equiv 0$ and that all elements of $A$ are pairwise disjoint.
      
      In equation (\ref{eq:Gsupport}) we have introduced in fact a generalization of Feynman integrals by way of the back door. Equation (\ref{eq:Gsupport}) gives also coefficients to the first Symanzik polynomial and it is also implicitly assumed that the coefficients in the second Symanzik polynomial are independent of each other. We will call such a generalization of Feynman integral a \textit{generic Feynman integral}. It is obvious that we can consider the physical relevant case by specifying the variables $z$ in the generic Feynman integral. However, as we discuss in the following, such a limit from generic Feynman integrals to physical Feynman integrals will not always be unproblematic.\\
      
      The parametric representations (\ref{eq:FeynmanUF}) and (\ref{eq:FeynmanG}) belong both to the class of Euler-Mellin integrals \cite{BerkeschEulerMellinIntegrals2013}, which are defined as Mellin transforms of a product of polynomials up to some powers. As every Euler-Mellin integral, also Feynman integrals are $\Aa$-hypergeometric functions. For Feynman integrals this fact was first noted as a sidemark by Gelfand, Kapranov, Zelevinsky themselves \cite{GelfandHypergeometricFunctionsToral1989} and later implicitly used in \cite{NasrollahpoursamamiPeriodsFeynmanDiagrams2016}. Explicitly, it was shown independently in \cite{DeLaCruzFeynmanIntegralsAhypergeometric2019} and \cite{KlausenHypergeometricSeriesRepresentations2019}.
      
      \begin{theorem}[$A$-hypergeometric Feynman integrals \cite{KlausenHypergeometricSeriesRepresentations2019}]
        Let $A\subset\mathbb Z^n$ be the support of the sum of the Symanzik polynomials $\Gg=\Uu+\Ff$ according to (\ref{eq:Gsupport}). By $\Aa$ we denote the homogenization of $A$, i.e. we interpret $A$ as a set of column vectors building an $n\times N$ integer matrix and adding the row $(1,\ldots,1)$. Further, let $\nuu =(\nu_0,\nu)\in\mathbb C^{n+1}$ be the combination of the space-time dimension $\nu_0:=\frac{d}{2}$ and the indices $\nu$. Then every scalar Feynman integral $I_\Gamma(\nuu,z)$ without tadpoles\footnote{A ``tadpole'' in a Feynman graph means a loop consisting in only one edge. Note that the common nomenclature in graph theory differs slightly from the nomenclature of Feynman graphs.} satisfies all differential equations of the $\Aa$-hypergeometric system $H_\Aa(\nuu)$
        \begin{align}
          & E_i(\nuu) \bullet I_\Gamma(\nuu,z) = 0 \quad\text{for}\quad i=0,\ldots,n \\
          & \square_l \bullet I_\Gamma(\nuu,z) = 0 \quad\text{for all}\quad l\in\mathbb L = \ker_{\mathbb Z}(\Aa)\point
        \end{align}
      \end{theorem}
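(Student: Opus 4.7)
The plan is to work with the Lee-Pomeransky parametric representation (\ref{eq:FeynmanG}), since it places $\Gg = \Uu + \Ff$ on a single footing and exposes the homogenization $\Aa$ of the support $A$ directly. Writing
\[
I_\Gamma(\nuu,z) = C(\nuu) \int_{\mathbb{R}^n_+} dx\, x^{\nu-1}\, \Gg(z,x)^{-d/2}, \qquad \Gg(z,x) = \sum_{j=1}^N z_j x^{a^{(j)}},
\]
with $C(\nuu)=\Gamma(d/2)/(\Gamma(\nu)\Gamma(d/2-\omega))$, I would verify the two families of operators separately by differentiating under the integral sign. The conclusion then follows in the region of $\nuu$ where the representation converges, and extends to arbitrary $\nuu$ by the standard analytic continuation argument for Euler--Mellin integrals from \cite{BerkeschEulerMellinIntegrals2013}.

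For the Euler operators $E_i(\nuu)$, the key identity is
\[
\sum_{j=1}^N a_i^{(j)} z_j \partial_{z_j} \Gg^{-d/2} = -\tfrac{d}{2}\, \Gg^{-d/2-1} \sum_{j} a_i^{(j)} z_j x^{a^{(j)}} = x_i \partial_{x_i} \Gg^{-d/2},
\]
where for $i=0$ the right-hand side should be interpreted as $-\tfrac{d}{2}\Gg^{-d/2}$ using that every $a_0^{(j)}=1$ by homogenization. Plugging this into the integrand and using $\nu_0=d/2$ gives $E_0 \bullet I_\Gamma = 0$ directly. For $i\geq 1$ an integration by parts in $x_i$,
\[
\int_0^\infty dx_i\, x_i^{\nu_i}\, \partial_{x_i}\Gg^{-d/2} = -\nu_i \int_0^\infty dx_i\, x_i^{\nu_i-1}\Gg^{-d/2} + \bigl[x_i^{\nu_i}\Gg^{-d/2}\bigr]_0^\infty,
\]
yields exactly $-\nu_i I_\Gamma$, cancelling the $+\nu_i$ in $E_i(\nuu)$.

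For the toric operators $\square_l$, I would compute by induction
\[
\partial_{z_1}^{k_1}\cdots \partial_{z_N}^{k_N}\, \Gg^{-d/2} = \left(-\tfrac{d}{2}\right)_{|k|}\, \Gg^{-d/2-|k|}\, x^{\sum_{j} k_j a^{(j)}},
\]
where $|k|=\sum_j k_j$ and $(\alpha)_k$ denotes the Pochhammer symbol. Since $\Aa$ has a row of $1$'s, any $l\in\mathbb L=\ker_{\mathbb Z}\Aa$ satisfies $\sum_j l_j = 0$, so the two monomial pieces of $\square_l$ have equal total order $M=\sum_{l_j>0}l_j = -\sum_{l_j<0}l_j$, producing the same prefactor $(-d/2)_M \Gg^{-d/2-M}$. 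The remaining monomial factors are $x^{\sum_j l_j^+ a^{(j)}}$ and $x^{\sum_j l_j^- a^{(j)}}$, and these are equal because $\sum_j l_j a^{(j)}=0$. Hence $\square_l \bullet \Gg^{-d/2} = 0$ pointwise in $x$, and thus $\square_l \bullet I_\Gamma = 0$.

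The main obstacle is making the formal manipulations rigorous. The differentiation under the integral, and the vanishing of the boundary terms in the integration by parts for $E_i$, are only guaranteed in a suitable open region of the $(d,\nu)$-parameter space where the Euler--Mellin integral converges absolutely; outside of this region both sides of each differential equation must be interpreted via meromorphic continuation in $\nuu$. The hypothesis that $\Gamma$ has no tadpoles enters precisely here: a tadpole edge $e_i$ would force $\Gg$ to factor an overall $x_i$, so that $\Gg^{-d/2}$ has a non-integrable singularity at $x_i=0$ which contributes non-vanishing boundary terms and also makes $I_\Gamma$ scale-degenerate in $x_i$. Excluding tadpoles ensures that, after the standard meromorphic continuation, the boundary terms vanish and the differential equations hold as identities of holomorphic functions on the complement of $\Sing(\mathcal M_\Aa(\nuu))$.
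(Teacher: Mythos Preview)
Your argument is correct and is the standard route: differentiate $\Gg^{-d/2}$ in the $z_j$, convert the Euler part to $x_i\partial_{x_i}\Gg^{-d/2}$ and integrate by parts, and verify that $\square_l$ annihilates $\Gg^{-d/2}$ pointwise using $\sum_j l_j=0$ and $\sum_j l_j a^{(j)}=0$. The present paper does not supply its own proof of this theorem but cites it from \cite{KlausenHypergeometricSeriesRepresentations2019} (and implicitly from the general Euler--Mellin setting of \cite{GelfandGeneralizedEulerIntegrals1990, BerkeschEulerMellinIntegrals2013}), where precisely this computation is carried out; so there is nothing in the paper to compare your approach against, and your write-up would serve as a faithful reconstruction of the cited result.

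One small imprecision in your closing paragraph: an overall factor $x_i\mid\Gg$ coming from a tadpole edge only translates $\Newt(\Gg)$ and does not by itself make $\Gg^{-d/2}$ non-integrable at $x_i=0$; the convergence region $\nu\in\relint\!\left(\tfrac{d}{2}\Newt(\Gg)\right)$ is still non-empty, and for $\nuu$ in that region the boundary terms in your integration by parts vanish as before. The tadpole hypothesis in the paper is invoked for the well-definedness of the parametric representation (\ref{eq:FeynmanG}) and its equivalence with the momentum-space integral (\ref{eq:FeynmanMomSp}), not for the algebraic verification of the GKZ equations once (\ref{eq:FeynmanG}) is taken as the starting point. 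This does not affect the validity of your derivation of $E_i\bullet I_\Gamma=0$ and $\square_l\bullet I_\Gamma=0$.
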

      
      Also for the representation in (\ref{eq:FeynmanUF}) we can establish an $\Aa$-hypergeometric system in the following way. W.l.o.g. we will set $x_n=1$ by evaluating the delta distribution in (\ref{eq:FeynmanUF}). Denote by $A_\Uu$ and $A_\Ff$ the support of the first and the second Symanzik polynomial after setting $x_n=1$. In doing so, we can construct the following matrix      
      \begin{align}
        \Aa^\prime = \begin{pmatrix}
          1 & \cdots & 1 & 0 & \cdots & 0\\
          0 & \cdots & 0 & 1 & \cdots & 1\\
          \\
            & A_\Uu      &   &   &  A_\Ff &\\
            \\            
        \end{pmatrix}
      \end{align}
      which defines together with $\beta = \left(\frac{d}{2}-\omega,\omega,\nu_1,\ldots,\nu_{n-1}\right)$ the $\Aa$-hypergeometric system $H_{\Aa^\prime}(\beta)$ of (\ref{eq:FeynmanUF}). As expected the $\Aa$-hypergeometric systems for (\ref{eq:FeynmanUF}) and (\ref{eq:FeynmanG}) are equivalent, which can be verified by the unimodular matrix
      \begin{align}
        T = \left(\begin{array}{cccc}
                L+1 & -1 & \cdots & -1 \\
                -L  & 1  & \cdots & 1  \\
                0   &    &        & 0 \\  
                \vdots &   \multicolumn{2}{c}{\smash{\scalebox{1.5}{$\mathds{1}$}}}       & \vdots \\
                0      & &        & 0
            \end{array}\right)\quad\text{,}\qquad 
        T^{-1} =  \left(\begin{array}{ccccc}
                1 & 1 & 0 & \cdots & 0 \\
                0 & 0 & & & \\
                \vdots & \vdots &   \multicolumn{3}{c}{\smash{\scalebox{1.5}{$\mathds{1}$}}} \\
                0 & 0 & & \\
                L & L+1 & -1 & \cdots & -1
            \end{array}\right)
      \end{align}
      which transforms $\Aa^\prime = T \Aa$ and $\beta = T \nuu$. In the following we will mostly prefer the Lee-Pomeransky representation (\ref{eq:FeynmanG}) due to its plainer structure.

      \begin{remark}
        The correspondence between the polynomial $\Uu+\Ff$ and the polynomial $(\Uu \cdot \Ff)|_{x_n=1}$ which appears in the Lee-Pomeransky and in the Feynman representation, respectively, will be natural in the light of equivalence classes in Grothendieck rings. Without going into detail, we will refer to \cite[lem. 48 and cor. 49]{BitounFeynmanIntegralRelations2019}. This general correspondence leads also to the property that the $A$-discriminants and principal $A$-determinants of $\Uu+\Ff$ and $(\Uu \cdot \Ff)|_{x_n=1}$ coincide. The latter is also apparent by the fact, that the $A$-discriminants only depend on the affine geometry of the point configuration $A$.
      \end{remark}

    \subsection{Singularities of the Feynman integral}
    
      In the previous section we introduced several integrals, as representations of a more general function depending on parameters $\nuu=\left(\frac{d}{2},\nu\right)$ and variables $z$, which encode the dependence of masses $m$ and external momenta $p$. Thus, one of the first natural questions is whether these integrals will converge. If we exclude tadpole graphs\footnote{In the renormalization procedure, one can add a counter term, which removes precisely all tadpole contributions \cite{StermanIntroductionQuantumField1993}.}, it is well-known, that there always exists a complex domain of parameters $\nuu$ and variables $z$, such that the integral representations converge absolutely. For all $\Re z_i > 0$, which we will call the Euclidean sector, the Feynman integral in representation (\ref{eq:FeynmanG}) converges absolutely if and only if $\nu \in \relint \left(\frac{d}{2} \Newt(\Gg) \right)$ is in the relative interior of the dilated Newton polytope of $\Gg$ \cite{BerkeschEulerMellinIntegrals2013, SchultkaToricGeometryRegularization2018, KlausenHypergeometricSeriesRepresentations2019}. For the other representations of Feynman integrals we can derive similar conditions. 
      
      However, these regions of convergence will only cover a small part of the domain where we can analytically continue the Feynman integral representations, even though the Feynman integral is not an entire function. We can distinguish two different kind of singularities which will appear in the analytic continuation: singularities in the parameters $\nuu$ and singularities in the variables $z$. For the parameters $\nuu$ these singularities are known as UV- and IR-divergences and the Feynman integral has only poles in these parameters $\nuu$. Moreover, one can simply describe the possible poles in $\nuu$ by the facets of the Newton polytope $\Newt(\Uu+\Ff)$ by a clever use of integration by parts.
      
      \begin{theorem}[Meromorphic continuation \cite{KlausenHypergeometricSeriesRepresentations2019,BerkeschEulerMellinIntegrals2013}]
        Describe the Newton polytope $\Newt(\Gg)$ as an intersection of half-spaces according to equation (\ref{eq:HPolytope}). Then every non-tadpole Feynman integral $I_\Gamma(\nuu,z)$ with appropriate values $z$ can be written as
        \begin{align}
          I_\Gamma(\nuu,z) = \Phi_\Gamma(\nuu,z) \frac{\prod_{j=1}^k \Gamma( b_j \Re \nu_0 - m_j^T \cdot \Re \nu)}{\Gamma(\nu_0-\omega)\Gamma(\nu)}
        \end{align}
        where $\Phi_\Gamma(\nuu,z)$ is an entire function with respect to $\nuu\in\mathbb C^{n+1}$. As before we use $\nu_0 = \frac{d}{2}$ and $\nuu = (\nu_0,\nu)$.
      \end{theorem}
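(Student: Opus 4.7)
The plan is to start from the Lee-Pomeransky representation (\ref{eq:FeynmanG}) and treat the nontrivial factor as the Mellin transform of the single polynomial $\Gg^{-\nu_0}$, placing $I_\Gamma$ inside the class of Euler-Mellin integrals. Absolute convergence on the Euclidean sector holds precisely for $\nu \in \relint\bigl(\nu_0 \Newt(\Gg)\bigr)$, so the facet description $\{m_j^T\cdot a \leq b_j\}$ from (\ref{eq:HPolytope}) dictates the candidate locations of poles: each wall of the convergence polytope gives a hyperplane $b_j \Re\nu_0 = m_j^T \Re\nu$ on which a Gamma factor must be extracted.

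The mechanism for extracting these factors is integration by parts against the toric logarithmic vector field $v_j = \sum_{i=1}^n m_{j,i}\, x_i \partial_{x_i}$ associated with the outer normal of the $j$-th facet. Because $v_j(x^a) = (m_j^T a)\, x^a$, one has the decomposition $v_j(\Gg) = b_j\Gg - \tilde\Gg_j$ with $\tilde \Gg_j := \sum_{a\in A}(b_j - m_j^T a)z_a x^a$ supported strictly away from the $j$-th facet, while $v_j(x^\nu) = (m_j^T \nu)\, x^\nu$. Combining these and applying integration by parts on $\mathbb R_+^n$, with boundary terms vanishing for $\Re\nu$ in a sufficiently deep subset of the convergence region (this is where the non-tadpole assumption is used to prevent $\Gg$ from vanishing identically on any coordinate hyperplane), produces the Euler identity
\begin{equation*}
  (b_j \nu_0 - m_j^T \nu) \int_{\mathbb R^n_+} x^{\nu-1} \Gg^{-\nu_0}\,dx \;=\; \nu_0 \int_{\mathbb R^n_+} x^{\nu-1} \Gg^{-\nu_0 - 1} \tilde\Gg_j\,dx
\end{equation*}
in which the right-hand side has a strictly enlarged domain of convergence past the $j$-th wall.

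Iterating this identity at each facet with shifts $\ell = 0, 1, \ldots, N-1$ yields, after $kN$ steps, a relation of the shape $\prod_{j=1}^k \prod_{\ell=0}^{N-1}(b_j\nu_0 - m_j^T\nu - \ell) \cdot I_\Gamma(\nuu,z) = J_N(\nuu, z)$ with $J_N$ holomorphic on $\{m_j^T\nu \leq b_j \nu_0 - N\}_{j=1,\ldots,k}$. Rewriting each Pochhammer product on the left as a ratio $\Gamma(b_j\nu_0 - m_j^T\nu +1)/\Gamma(b_j\nu_0 - m_j^T\nu + 1 - N)$, absorbing the Lee-Pomeransky prefactor $\Gamma(\nu_0)/[\Gamma(\nu)\Gamma(\nu_0-\omega)]$, and letting $N \to \infty$ produces the advertised factorization. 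Entireness of the resulting $\Phi_\Gamma$ follows because the domains on which the $J_N$ are holomorphic exhaust $\mathbb C^{n+1}$, together with uniqueness of analytic continuation from the overlap of admissible regions.

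The main obstacle is the rigorous justification of the iterated integration by parts, and in particular the control of boundary terms at infinity, which requires a uniform growth estimate for $\Gg^{-\nu_0}$ along each facet direction. The non-tadpole hypothesis enters decisively here: a tadpole edge would make $\Gg$ degenerate on a coordinate axis, which would both invalidate the $v_j$ identity and introduce singularities not captured by the facet list. Once this is under control, verifying that the extracted Gamma factors do not collide to generate spurious zeros or poles in $\Phi_\Gamma$ is a combinatorial matter settled by the minimality of the half-space representation (\ref{eq:HPolytope}), so that distinct facets contribute distinct linear forms.
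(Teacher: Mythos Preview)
The paper does not supply its own proof of this theorem; it is quoted from the cited references and accompanied only by the remark that the poles ``can be simply described \ldots\ by a clever use of integration by parts.'' Your proposal \emph{is} that integration-by-parts argument and follows the Berkesch--Forsg\aa rd--Passare reference essentially verbatim, so there is no divergence in approach to discuss.

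Two places in your sketch deserve tightening. First, a single application of your Euler identity does not return a single integral but a finite linear combination $\sum_{a} (b_j - m_j^T a)\,z_a \int x^{\nu+a-1}\Gg^{-\nu_0-1}\,dx$, one summand per monomial in $\tilde\Gg_j$; the assertion that the convergence region strictly enlarges past the $j$-th wall relies on the fact that $b_j - m_j^T a > 0$ for every $a$ in the support of $\tilde\Gg_j$ (this is where ``supported strictly away from the $j$-th facet'' is used), and the iteration therefore produces a growing tree of shifted integrals rather than a single one. Second, ``letting $N\to\infty$'' is not literally the mechanism: one shows that for each compact $K\subset\mathbb C^{n+1}$ finitely many iterations already give a holomorphic representation on $K$, and entireness of $\Phi_\Gamma$ follows by gluing. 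Your reading of the non-tadpole hypothesis is slightly off: its role is to guarantee that $\Newt(\Gg)$ is full-dimensional, hence that the initial convergence region $\relint(\nu_0\,\Newt(\Gg))$ is nonempty; the boundary terms at $0$ and $\infty$ vanish for $\Re\nu$ in that open set by the standard Mellin-type estimates, not by any special non-degeneracy of $\Gg$ on coordinate hyperplanes.
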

      
      Hence, we can continue the Feynman integral meromorphically with respect to its parameters $d,\nu$ and we can easily give a necessary condition for its poles.\\
      
      Considerably more difficult is the situation for the variables $z$ of the Feynman integral. We will find certain combinations of $p$ and $m$, such that the Feynman integral has lacking analyticity or differentiability. Considering the momentum space representation (\ref{eq:FeynmanMomSp}), those singularities may appear if some inverse propagators $D_i$ vanish and additionally the integration contour is trapped in such a way, that we are not able to elude the singularity by deforming the contour in the complex plane. These situations are called pinches and if they appear the equations
      \begin{align}
        x_i D_i = 0 \qquad \text{for all} \quad i=1,\ldots,n \label{eq:MomLandau1}\\
        \frac{\partial}{\partial k_j} \sum_{i=1}^n x_i D_i = 0 \qquad \text{for all} \quad j=1,\ldots,L \label{eq:MomLandau2}
      \end{align}
      have a solution for $x\in\mathbb C^n\setminus\{0\}$ and $k\in\mathbb C^{L\times d}$. We will call all points $z$ admitting such a solution a \textit{Landau critical point}. Landau critical points do not depend on the choice of internal momenta or their orientation. The equations (\ref{eq:MomLandau1}), (\ref{eq:MomLandau2}) are called \textit{Landau equations} and were independently derived in 1959 by Bjorken \cite{BjorkenExperimentalTestsQuantum1959}, Landau \cite{LandauAnalyticPropertiesVertex1959} and Nakanishi \cite{NakanishiOrdinaryAnomalousThresholds1959}. We recommend \cite{MizeraCrossingSymmetryPlanar2021} for a comprehensive summary of the known research results in Landau's analysis of over 60 years. Further we will refer to \cite{EdenAnalyticSmatrix1966} for a classical and \cite{MuhlbauerMomentumSpaceLandau2020, CollinsNewCompleteProof2020} for modern derivations of Landau equations. 
      
      Unfortunately, strictly speaking the Landau equations (\ref{eq:MomLandau1}), (\ref{eq:MomLandau2}) are neither necessary nor sufficient conditions to have a singularity of the analytic continuated Feynman integral. Thus, there are on the one hand singularities which does not correspond to a solution of Landau equations. Those singularities are often called second-type singularities or non-Landau singularities and were found for the first time in \cite{CutkoskySingularitiesDiscontinuitiesFeynman1960}. And on the other hand, not all solutions of Landau equations resulting in a singularity \cite{CollinsNewCompleteProof2020}. However, Landau equations are necessary and sufficient for the appearance of a trapped contour \cite{CollinsNewCompleteProof2020} and can be necessary for certain restrictions on Feynman integrals. We will call the singularities coming from Landau equations anomalous thresholds, except for those singularities corresponding to unitarity cuts, which we will call normal thresholds. Furthermore, singularities with all $x_i\neq 0$ in (\ref{eq:MomLandau1}), (\ref{eq:MomLandau2}) will be called leading singularities and we will usually distinguish between solutions with real positive $x_i\geq 0$ and general complex $x_i\in\mathbb C$. 
      
      From a physical perspective, Landau equations determine when internal (virtual) particles going on-shell. Hence, the Feynman diagram describes then an interaction between real particles with a specific lifetime \cite{ColemanSingularitiesPhysicalRegion1965}.
     
      The extraordinary meaning for Feynman integrals owing the Landau singularities also from various methods, which construct the whole Feynman integral from these singularities. All these methods root more or less in the optical theorem and the corresponding unitarity cuts, introduced by Cutkosky \cite{CutkoskySingularitiesDiscontinuitiesFeynman1960} shortly after Landau's article. However, it should be mentioned that Cutkosky's rules are unproven up today. We refer to \cite{BlochCutkoskyRulesOuter2015} for the recent progress of giving a rigorous proof of Cutkosky's rules. Also for other techniques, as for example in sector decomposition, Landau's analysis plays an important role.
      
      The Landau equations stated in (\ref{eq:MomLandau1}), (\ref{eq:MomLandau2}) involve the integration variables in momentum space as well as the integration variables of parametric space. There are also equivalent equations, which are stated in the parametric variables only. Since the second Symanzik polynomial can be written as a discriminant of $\Lambda \Uu$ with respect to the loop momenta $k$, it is immediately clear, that those equations will be conditions on the second Symanzik polynomial $\Ff$. Instead of eliminating $k$ from (\ref{eq:MomLandau1}), (\ref{eq:MomLandau2}), we can show the Landau equations in parametric space also directly by considering the parametric integral representations \cite{NakanishiOrdinaryAnomalousThresholds1959}.
      
      \begin{theorem}[Parametric space Landau equations e.g. \cite{NakanishiGraphTheoryFeynman1971, EdenAnalyticSmatrix1966}] \label{thm:ParLandau}
        Under the assumption $\Uu\neq 0$, a point $z\in\mathbb C^N$ is a Landau critical point, if and only if the equations
        \begin{enumerate}
          \item $x_i \frac{\partial \Ff}{\partial x_i} = 0$ for $i=1,\ldots,n$
          \item $\Ff=0$
        \end{enumerate}
        have a solution in $x\in\mathbb P^{n-1}_{\mathbb C}$. The case $\Uu=0$ is connected with the second-type singularities, which we will examine later.
      \end{theorem}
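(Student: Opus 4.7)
The strategy is to eliminate the loop momenta $k$ from the momentum-space Landau equations (\ref{eq:MomLandau1})--(\ref{eq:MomLandau2}) by completing the square in $\Lambda(k,p,x)=\sum_i x_i D_i$ and translating the resulting conditions directly onto $\Ff$. Since $\Lambda$ is quadratic in $k$ with Hessian $2M$ and $\Uu=\det M\neq 0$, the unique $k$-critical point of $\Lambda$ is $k^*(x):=-M^{-1}B^T p$, and completing the square together with the definition (\ref{eq:SymanzikDef1}) of $\Ff$ yields the key identity
\begin{align}
\Lambda(k,p,x) = (k-k^*)^T M (k-k^*) + \frac{\Ff(x)}{\Uu(x)},
\end{align}
equivalently $\Uu\cdot\Lambda|_{k=k^*}=\Ff$. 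This matches the remark made earlier that $\Ff$ is the discriminant of $\Lambda\Uu$ with respect to $k$, and it will be the bridge between momentum and parametric space throughout.

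\textbf{Forward direction.} Given a momentum-space Landau critical point $(x,k)\in(\mathbb{C}^n\setminus\{0\})\times\mathbb{C}^{L\times d}$, the equations $\pd{\Lambda}{k_j}=2(Mk+B^T p)_j=0$ together with $\Uu\neq 0$ force $k=k^*(x)$. Summing $x_iD_i=0$ then gives $\Lambda|_{k^*}=0$, and the identity above yields $\Ff(x)=0$. For the remaining equations I would apply the chain rule to $\Ff(x)=\Uu(x)\,\Lambda(k^*(x),p,x)$; the contribution from $\partial_{x_i}k^*$ is annihilated by $\pd{\Lambda}{k_j}|_{k^*}=0$ (envelope theorem), leaving
\begin{align}
\pd{\Ff}{x_i} = (\partial_{x_i}\Uu)\,\Lambda|_{k^*} + \Uu\,D_i|_{k^*} = \frac{\partial_{x_i}\Uu}{\Uu}\,\Ff + \Uu\,D_i|_{k^*}.
\end{align}
At our critical point $\Ff=0$, so $\pd{\Ff}{x_i}=\Uu\,D_i$, and therefore $x_i\pd{\Ff}{x_i}=\Uu\cdot x_iD_i=0$ for every $i$, as required.

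\textbf{Converse and expected difficulty.} For the other direction, I would run the same computation backwards: starting from a representative $x\in\mathbb{C}^n\setminus\{0\}$ of a projective point with $\Ff(x)=0$ and $x_i\pd{\Ff}{x_i}=0$, I set $k:=k^*(x)$, so $\pd{\Lambda}{k_j}=0$ holds automatically. The identity $\Uu\Lambda|_{k^*}=\Ff=0$ combined with $\Uu\neq 0$ yields $\sum_ix_iD_i=\Lambda=0$, and the chain-rule formula (used with $\Ff=0$) gives $\Uu\cdot x_iD_i=x_i\pd{\Ff}{x_i}=0$ individually, so each $x_iD_i$ vanishes separately. The main technical point I expect is the envelope argument that cleanly decomposes $\pd{\Ff}{x_i}$ into the two pieces above; this really does require $\Uu\neq 0$, both so that $k^*$ is well-defined and differentiable and so that dividing out $\Uu$ to recover $x_iD_i=0$ is legitimate. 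A small pleasant byproduct is that condition 2 of the theorem is redundant given condition 1: by Euler's relation for the degree-$(L+1)$ homogeneous polynomial $\Ff$ one has $\sum_ix_i\pd{\Ff}{x_i}=(L+1)\Ff$, so the vanishing of every $x_i\pd{\Ff}{x_i}$ already forces $\Ff=0$ whenever $L\geq 0$.
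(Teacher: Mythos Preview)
Your proof is correct and follows essentially the same route as the paper: both arguments solve $\partial\Lambda/\partial k=0$ for $k=k^*=-M^{-1}B^Tp$, use the identity $\Uu\,\Lambda|_{k^*}=\Ff$, and obtain $x_i\partial_{x_i}\Ff=\Uu\,x_iD_i$ via the chain rule with the envelope-type cancellation of the $\partial k^*/\partial x_i$ terms. Your additional remark on the redundancy of condition~2 via Euler's relation is also noted in the paper immediately after the proof.
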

      \begin{proof}
        ``$\Rightarrow$'':
        Consider $\Lambda$ from equation (\ref{eq:LandauPsi}). We will find
        \begin{align}
          \frac{\partial \Lambda(k,p,x)}{\partial k} &= 2 Mk + 2 B^T p \point
        \end{align}
        By the assumption $\Uu\neq 0$, $M$ is invertible and thus $\pd{\Lambda}{k}=0$ implies $k=-M^{-1}B^T p$. Inserting this equation for $k$ in  (\ref{eq:LandauPsi}) and comparing with (\ref{eq:SymanzikDef1}) we will find $\Lambda(-M^{-1}B^Tp,p,x) = \Ff/\Uu$. Therefore, $\Lambda=0$ implies $\Ff=0$ and furthermore
        \begin{align}
          x_j \frac{\partial \Ff}{\partial x_j} = x_j \frac{\partial}{\partial x_j} \left( \Uu \Lambda \right) = \Uu x_j D_j = 0 \point
        \end{align}
        ``$\Leftarrow$'': Since the definition (\ref{eq:MomLandau1}),(\ref{eq:MomLandau2}) contains more variables as in parameter space, we can always find a value $k^\prime$, s.t. $\Lambda(k^\prime,p,x) = \Ff/\Uu$, without restricting the possible solutions for $x$. This can also be vindicated by the fact that the Feynman integral (\ref{eq:FeynmanMomSp}) is invariant under linear transformations. We conclude
        \begin{align}
          \frac{\partial\Lambda}{\partial k'} &=  2Mk^\prime + 2B^Tp = 0\\
           x_j D_j &= x_j \frac{\partial \Lambda}{\partial x_j} = x_j \frac{\partial \Uu^{-1}}{\partial x_j} \Ff + x_j \frac{\partial \Ff}{\partial x_j} \Uu = 0 \point
        \end{align}
      \end{proof}
      
      Note, that by Euler's theorem one of the $n+1$ equations in theorem \ref{thm:ParLandau} is redundant, which is the reason why we look for solutions in projective space. \\
      
      According to \cite{BrownPeriodsFeynmanIntegrals2010, PhamSingularitiesIntegrals2011, MuhlbauerMomentumSpaceLandau2020} we will call the variety defined by the closure of all Landau critical points the \textit{Landau variety} $\mathcal L(I_\Gamma)$. Due to Riemann's second removable theorem \cite{KaupHolomorphicFunctionsSeveral1983}, we are especially interested in the codimension one part of $\mathcal L(I_\Gamma)$, which we will denote by $\mathcal L_1(I_\Gamma)$. Based on the Landau equations in parameter space, we can directly read off the following theorem from the definition of the principal $A$-determinant.
    
      \begin{theorem}[Landau variety] \label{thm:LandauVar}
        Let $\Ff\in\mathbb C[x_1,\ldots,x_n]$ be the second Symanzik polynomial of a Feynman graph $\Gamma$ and let $A_\Ff\subset \mathbb Z^n$ be the support of $\Ff$. The Landau variety is given by the (simple) principal $A$-determinant of $\Ff$
        \begin{align}
          \mathcal L_1(I_\Gamma) = \mathbf V( E_{A_\Ff}(\Ff)) =  \mathbf V( \widehat E_{A_\Ff}(\Ff)) \point
        \end{align}
      \end{theorem}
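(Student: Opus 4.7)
The plan is to derive the equality directly from the parametric Landau equations in theorem \ref{thm:ParLandau}, recognise each stratum of solutions as the defining locus of an $A$-discriminant, and finally assemble them via the prime factorisation of theorem \ref{thm:pAdet-factorization}.

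First I would apply theorem \ref{thm:ParLandau}: a point $z$ is a Landau critical point precisely when the system $\Ff_z(x) = 0$ together with $x_i \, \partial \Ff_z / \partial x_i = 0$ for $i = 1, \ldots, n$ admits a projective solution $x \in \mathbb{P}^{n-1}_{\mathbb{C}}$. I would then stratify the projective solutions according to the index set $\tau = \{j : x_j \neq 0\}$; since $A_\Ff \subset \mathbb Z^n_{\geq 0}$, such a $\tau$ corresponds to a face of $\Newt(\Ff)$. On the stratum indexed by $\tau$ the equations $x_i \partial_i \Ff_z = 0$ for $i \notin \tau$ are trivial, while for $i \in \tau$ they reduce to $\partial_i \Ff_{z, \tau} = 0$ once one substitutes $x_j = 0$ for $j \notin \tau$. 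The vanishing of $\Ff_{z, \tau}$ itself then follows automatically from Euler's identity, since $\Ff_{z, \tau}$ inherits homogeneity of positive degree $L+1$ from $\Ff$.

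For each face $\tau$, the resulting condition on $z$ is that $\Ff_\tau$ and all its partial derivatives admit a common zero in the algebraic torus, which by the definition \eqref{eq:Adisc} is exactly the vanishing locus of $\Delta_{A_\Ff \cap \tau}(\Ff_\tau)$. Taking the union over all admissible faces and invoking the prime factorisation $E_{A_\Ff}(\Ff) = \pm \prod_\tau \Delta_{A_\Ff \cap \tau}(\Ff_\tau)^{\mu(A_\Ff, \tau)}$ from theorem \ref{thm:pAdet-factorization} yields $\mathcal{L}_1(I_\Gamma) = \mathbf{V}(E_{A_\Ff}(\Ff))$. The second equality with the simple principal $A$-determinant is immediate, since collapsing all multiplicities $\mu(A_\Ff, \tau)$ to one leaves the zero set unchanged.

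The main obstacle I anticipate is reconciling the two face collections on either side: the Landau stratification only produces faces of $\Newt(\Ff)$ cut out by coordinate subspaces, whereas the prime factorisation ranges over every face. I would close this gap using the translation invariance of $A$-discriminants under injective affine maps recorded in section \ref{sec:Adisc}: for a non-coordinate face $\tau$, every monomial of $\Ff_\tau$ shares a common factor $x^\alpha$, and after dividing out this factor the support becomes affinely equivalent to one already captured by some coordinate face, so the associated discriminant varieties coincide. Combined with the observation that genuinely defective configurations contribute trivial discriminants and drop out of the union, this bookkeeping should complete the identification.
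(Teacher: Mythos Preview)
Your overall strategy---stratify $\mathbb P^{n-1}_{\mathbb C}$ by the vanishing pattern of the $x_j$ and match each stratum to an $A$-discriminant via theorem~\ref{thm:pAdet-factorization}---is more elaborate than the paper's argument, which simply observes that the parametric Landau equations of theorem~\ref{thm:ParLandau} are literally the polynomials $\Ff, x_1\partial_1\Ff,\ldots,x_n\partial_n\Ff$ whose $A$-resultant defines $E_{A_\Ff}(\Ff)$ in equation~\eqref{eq:principalAdetResultant}, so that the two closures agree by definition.

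Your stratification does establish one inclusion cleanly: every coordinate index set determines a face of $\Newt(\Ff)$ (this is the lemma immediately following the theorem), so the union of the corresponding discriminant loci sits inside $\mathbf V(E_{A_\Ff}(\Ff))$, giving $\mathcal L_1(I_\Gamma)\subseteq\mathbf V(E_{A_\Ff}(\Ff))$. The gap is in the reverse inclusion. Your proposed fix---that for a non-coordinate face $\tau$ one divides $\Ff_\tau$ by its monomial gcd and lands, up to affine equivalence, on some coordinate-face truncation---is not correct in general. Dividing out $x^\alpha$ merely translates the support; there is no reason the result should be of the form $A_\Ff\cap\{x_i=0:i\in I\}$. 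In fact the paper explicitly remarks after the theorem that beyond $1$-loop and banana graphs the face decomposition contains pieces not visible from subgraphs, and attributes these precisely to the Zariski closure in the definition---so the non-coordinate-face discriminants are \emph{not} redundant with the coordinate-face ones.

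The clean way to close the gap is to bypass the face comparison for this direction: by the resultant definition \eqref{eq:principalAdetResultant}, the variety $\mathbf V(E_{A_\Ff}(\Ff))$ is the Zariski closure of those $z$ admitting a common zero of the Landau system in the torus $(\mathbb C^*)^n$. Any such zero, after projectivising via the homogeneity of $\Ff$, is in particular a solution in $\mathbb P^{n-1}_{\mathbb C}$ and hence a Landau critical point, which yields $\mathbf V(E_{A_\Ff}(\Ff))\subseteq\mathcal L(I_\Gamma)$ and thus the missing inclusion in codimension one. This is essentially the paper's one-line argument.
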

      
      However, we have to mention that in the definition of the Landau variety as well as in theorem \ref{thm:LandauVar} we will consider generic values $z\in\mathbb C^N$ of the coefficients in the second Symanzik polynomial. In the physical relevant case, there are relations among these coefficients and they are not necessarily generic enough. This is not only an issue in parametric representation, which involves the Symanzik polynomials. It also appears in momentum space, where the external momenta are treated as vectors in $d$-dimensional Minkowskian space. Thus, there can not be more than $d$ independent external momenta and additionally we suppose an overall momentum conservation. If the variables are not generic enough it can occur that the defining equation of the Landau variety is identical to zero. Thus, the Landau variety would be equal to the whole space $\mathbb C^N$. On the other hand, we know that there can not be a singularity with unbounded functional value for all points $z\in\mathbb C^N$ due to the convergence considerations from the previous section. Thus, in the limit to the physical relevant case, we want to exclude such that ``overall singularities'' in order to make the other singularities apparent. Therefore, we want to define
      \begin{align}
        \widehat E_{A_\Ff}^{ph}(\Ff) = \prod_{\substack{\tau\subseteq\Newt(\Ff) \\ \left.\Delta_{A_\Ff\cap\tau}(\Ff_\tau)\right|_{z\rightarrow z^{(ph)}} \neq 0}} \Delta_{A_\Ff\cap\tau}(\Ff_\tau)
      \end{align}
      a simple principal $A$-determinant which contains only the physical relevant parts of the principal $A$-determinant, i.e. we omit the parts, which vanish after inserting the physical restrictions $z^{(ph)}$ on the variables. Equivalently, we define $\mathcal L_1^{ph}(I_\Gamma) := \mathbf V( \widehat E_{A_\Ff}^{ph}(\Ff))$ as the physical relevant Landau variety. Note that the Landau variety $\mathcal L_1(I_\Gamma)$ is independent from the parameters $\nuu$, whereas the physical Landau variety $\mathcal L_1^{ph}(I_\Gamma)$ can depend on the parameters $\nuu$. For example specific choices of $d$ can change the relations between the external momenta.\\
      
      Usually, one splits the calculation of the Landau singularities in a leading singularity with all $x_i\neq 0$ and singularities with $x_i=0$ for $i\in I$, where $\emptyset\neq I\subsetneq \{1,\ldots,n\}$. The latters can be considered as leading singularities of subgraphs, since setting a Schwinger parameter $x_i=0$ in the second Symanzik polynomial to zero is the same as considering a subgraph, where the corresponding edge $e_i$ is shrinked. However, the principal $A$-determinant obeys a different splitting according to theorem \ref{thm:pAdet-factorization}. These decompositions coincide if the second Symanzik polynomial consists in all monomials of a given degree.
      
      \begin{lemma}
        For an index set $I\subseteq\{1,\ldots,n\}$ we call $\Ff_I(x):=\Ff(x)|_{\{x_i=0\}_{i\in I}}$ the subgraph polynomial associated to $I$. Every subgraph polynomial is also a truncated polynomial $\Ff_\tau$ with a face $\tau\subseteq\Newt(\Ff)$. The converse is true if $\Ff$ consists in all monomials of degree $L+1$. However, the converse is not true in general.
      \end{lemma}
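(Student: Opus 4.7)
The plan is to prove the three assertions of the lemma separately. For the first, I would exhibit the face explicitly: given $I \subseteq \{1,\ldots,n\}$, define the linear functional $\phi(a) = -\sum_{i\in I} a_i$. Since every exponent $a$ appearing in $\Ff$ has non-negative coordinates, $\phi(a) \leq 0$ on $\Newt(\Ff)$, with equality if and only if $a_i = 0$ for all $i \in I$. By the definition of a face in (\ref{eq:facedef}), the locus of maximizers of $\phi$ is then a face $\tau \subseteq \Newt(\Ff)$ whose associated lattice points are exactly the $a \in A_\Ff$ with $a_i = 0$ for $i \in I$, so $\Ff_\tau = \Ff_I$.

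For the second assertion, suppose $\Ff$ contains every monomial of degree $L+1$. Then $\Newt(\Ff)$ is the $(L+1)$-dilated standard simplex in $\mathbb{R}^n$, whose non-empty faces are in bijection with the non-empty subsets $J \subseteq \{1,\ldots,n\}$: the face indexed by $J$ is the convex hull of $\{(L+1)e_i : i\in J\}$, and its lattice points are precisely the exponents of degree $L+1$ supported on $J$. The hypothesis ensures that every such exponent carries a non-zero coefficient in $\Ff$, so the truncated polynomial $\Ff_\tau$ for this face coincides with the subgraph polynomial $\Ff_I$ for $I = \{1,\ldots,n\}\setminus J$; the full polytope corresponds to $I = \emptyset$.

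For the failure of the converse in general, I would present an explicit small counterexample: the two-loop sunrise (banana) with three massive propagators and one external momentum $p$, whose second Symanzik polynomial reads
\begin{align*}
  \Ff = \sum_{i=1}^{3} m_i^2 x_i^2 \Big(\sum_{j\neq i} x_j\Big) + \Big(p^2 + m_1^2 + m_2^2 + m_3^2\Big) x_1 x_2 x_3.
\end{align*}
The corner monomials $x_i^3$ are absent, so $\Newt(\Ff)$ is a hexagon strictly inside the $3$-dilated $2$-simplex. The edge $\{a_1 = 2\}$ yields $\Ff_\tau = m_1^2 x_1^2 (x_2+x_3)$, and direct inspection shows that no $\Ff_I$ with $I \subseteq \{1,2,3\}$ equals this polynomial: the only candidates are $\Ff$ itself (too many terms), the three polynomials obtained by setting a single $x_i$ to zero (each with a qualitatively different monomial pattern), and the zero polynomial.

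The main obstacle is minor and purely bookkeeping, namely enumerating the $2^n$ candidate subgraph polynomials and verifying that none matches the chosen $\Ff_\tau$. The conceptual explanation for the failure of the converse is transparent: faces of $\Newt(\Ff)$ are cut out by arbitrary linear functionals, whereas subgraph polynomials correspond only to functionals of the restricted form $\phi(a) = -\sum_{i\in I} a_i$. Once $\Ff$ lacks certain degree-$(L+1)$ monomials, $\Newt(\Ff)$ acquires new faces defined by functionals with mixed support, selecting monomial configurations that no coordinate-killing can reproduce.
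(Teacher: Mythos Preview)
Your proof is correct and follows the same approach as the paper for the first two assertions: the paper also exhibits the linear functional $\phi(v)=-\sum_{i\in I} v_i$ (written as $-\sum b_i v_i$ with $b_i\in\{0,1\}$) and then, for the converse under the full-support hypothesis, merely remarks that $\Newt(\Ff)$ is a simplex. Your treatment of the converse is considerably more explicit than the paper's, which leaves the face--subset correspondence to the reader.

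For the failure of the converse, the paper's proof gives no counterexample at all; it simply asserts the claim. Your two-loop sunrise example is correct and well chosen: the hexagonal Newton polytope has edges such as $\{a_1=2\}$ cut out by the functional $\phi(a)=a_1$, which is not of coordinate-killing form, and one checks by direct enumeration that $m_1^2 x_1^2(x_2+x_3)$ matches none of the eight $\Ff_I$. This goes beyond what the paper provides.
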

      \begin{proof}
        Choose $\phi(v) = - \sum_{i=1}^n b_i v_i$ with $b_i=1$ for all $i\in I$ and $b_i=0$ otherwise. This linear function takes its maximal value  $\max \phi(v) =0$ for precisely those values $v\in\mathbb R_+^n$ with $v_i=0$ for $i\in I$. Since all points of $\Newt(\Ff)$ are contained in the positive orthant $\mathbb R_+^n$, such a linear map $\phi$ defines the corresponding face $\tau$ according to (\ref{eq:facedef}).
        
        In case, where $\Ff$ consists in all possible monomials of degree $L+1$ the Newton polytope is an $n$-simplex.
      \end{proof}
      
      Thus, beyond $1$-loop graphs and banana graphs, which contain all monomials of a given degree in the second Symanzik polynomial, one may gets additional singularities from the truncated polynomials, which will be missed with the approach of the subgraphs. Although, considering the subgraphs seems evident at first sight, the structure of Landau varieties is somehow more subtle. These subtleties emerge mainly from the Zariski-closure in the definition of Landau varieties. \\
      
      Comparing theorem \ref{thm:LandauVar} with the results of section \ref{sec:AHyp} about $A$-hypergeometric functions, we would rather expect $\mathbf V(E_{A_\Gg}(\Uu+\Ff))$ instead of $\mathcal L_1(I_\Gamma)$ to be the singular locus of $I_\Gamma$. Directly from the factorization of the principle $A$-determinant we can see the relation of these two varieties.
      \begin{lemma}
        The Landau variety is contained in the singular locus of the $A$-hypergeometric function
        \begin{align}
          \mathbf V\big(E_{A_\Ff}(\Ff)\big) \subseteq \mathbf V \big(E_{A_\Gg}(\Uu+\Ff)\big)\point
        \end{align}
      \end{lemma}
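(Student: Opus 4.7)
The plan is to prove the inclusion by exhibiting each factor of $E_{A_\Ff}(\Ff)$ as a factor of $E_{A_\Gg}(\Uu+\Ff)$, using the prime factorization of principal $A$-determinants from theorem \ref{thm:pAdet-factorization}. Writing
\begin{align*}
  E_{A_\Ff}(\Ff) &= \pm \prod_{\tau\subseteq\Newt(\Ff)} \Delta_{A_\Ff\cap\tau}(\Ff_\tau)^{\mu(A_\Ff,\tau)} \comma\\
  E_{A_\Gg}(\Uu+\Ff) &= \pm \prod_{\sigma\subseteq\Newt(\Gg)} \Delta_{A_\Gg\cap\sigma}(\Gg_\sigma)^{\mu(A_\Gg,\sigma)} \point
\end{align*}
the task reduces to identifying, for every face $\tau\subseteq\Newt(\Ff)$, a face $\sigma\subseteq\Newt(\Gg)$ such that $\Delta_{A_\Ff\cap\tau}(\Ff_\tau)=\Delta_{A_\Gg\cap\sigma}(\Gg_\sigma)$.

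The key geometric observation is that $\Newt(\Ff)$ itself sits as a face of $\Newt(\Gg)$. Indeed, by equation (\ref{eq:SymanzikDef2}) the first Symanzik polynomial $\Uu$ is homogeneous of degree $L$ while $\Ff$ is homogeneous of degree $L+1$. Hence the lattice points in the support $A_\Uu$ lie on the affine hyperplane $\{v\in\mathbb R^n \,\rvert\, v_1+\ldots+v_n=L\}$, while those in $A_\Ff$ lie on $\{v_1+\ldots+v_n=L+1\}$. Applying the linear functional $\phi(v)=v_1+\ldots+v_n$ in the definition (\ref{eq:facedef}), we see that $\phi$ is maximized on $\Newt(\Gg)=\Conv(A_\Uu\cup A_\Ff)$ precisely on the subset $A_\Ff$. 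Therefore $\Newt(\Ff)$ is a (proper) face of $\Newt(\Gg)$, and consequently every face $\tau\subseteq\Newt(\Ff)$ is also a face of $\Newt(\Gg)$. On such faces $\tau$ one has $A_\Gg\cap\tau=A_\Ff\cap\tau$ and the truncations agree, $\Gg_\tau = \Ff_\tau$, so the discriminants coincide.

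From this, every prime factor of $\widehat E_{A_\Ff}(\Ff)$ appears as a prime factor of $\widehat E_{A_\Gg}(\Uu+\Ff)$, possibly with different multiplicities. Since both determinants generate the same varieties as their simple counterparts (see the discussion following theorem \ref{thm:pAdet-factorization}), it follows that
\begin{align*}
  \mathbf V(E_{A_\Ff}(\Ff)) = \mathbf V(\widehat E_{A_\Ff}(\Ff)) \subseteq \mathbf V(\widehat E_{A_\Gg}(\Uu+\Ff)) = \mathbf V(E_{A_\Gg}(\Uu+\Ff)) \comma
\end{align*}
which is the desired inclusion.

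The only delicate point is the identification $\Newt(\Ff)\subseteq\Newt(\Gg)$ as a face rather than merely as a subpolytope; this is where the homogeneity degrees of $\Uu$ and $\Ff$ enter crucially. Once this is settled, the factorization theorem does all of the remaining work. The extra factors appearing in $E_{A_\Gg}(\Uu+\Ff)$ correspond to faces of $\Newt(\Gg)$ containing lattice points from $A_\Uu$ — these are responsible for the additional singularities beyond the classical Landau locus, and will be interpreted as second-type singularities in the sequel.
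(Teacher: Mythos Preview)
Your proof is correct and follows essentially the same approach as the paper: both exploit the different homogeneity degrees of $\Uu$ and $\Ff$ to show that $\Newt(\Ff)$ is a face (indeed a facet) of $\Newt(\Gg)$, and then invoke the factorization theorem~\ref{thm:pAdet-factorization}. Your version is somewhat more explicit in constructing the linear functional and in handling the multiplicities via the simple principal $A$-determinants, but the underlying idea is identical.
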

      \begin{proof}
        $\Uu$ and $\Ff$ are homogeneous polynomials of different degrees. Therefore, $\Newt(\Uu+\Ff)$ has points on two different, parallel hyperplanes and thus $\Newt(\Uu)$ and $\Newt(\Ff)$ are two facets of $\Newt(\Uu+\Ff)$. By theorem \ref{thm:pAdet-factorization} we see that $\mathbf V (E_{A_\Gg}(\Uu+\Ff)) = \mathbf V (E_{A_\Ff}(\Ff)) \cup \mathbf V (E_{A_\Uu}(\Uu)) \cup \Delta_\Gg(\Uu+\Ff) \cup \mathbf V(R)$, where the remaining polynomial $R$, correspond to all discriminants coming from proper, mixed faces, i.e. faces $\tau\subsetneq \Newt(\Gg)$ having points of $\Uu$ and $\Ff$.
      \end{proof}
    
      Thus, we obtain what we already expected: The Landau variety covers not all kinematic singularities of the Feynman integrals and in general $\mathbf V(E_{A_\Ff}(\Ff))$ will be a proper subvariety of $\mathbf V (E_{A_\Gg}(\Gg))$. Based on the prime factorization of the principal $A$-determinant, we will divide the singular locus of the Feynman integral $\mathbf V (E_{A_\Gg}(\Gg))$ into four parts
      \begin{align}
         \mathbf V \big(E_{A_\Gg}(\Gg)\big) = \mathbf V \big(E_{A_\Ff}(\Ff)\big) \cup \mathbf V \big(E_{A_\Uu}(\Uu)\big) \cup \mathbf V \big(\Delta_{A_\Gg}(\Gg)\big) \cup \mathbf V (R) \point
      \end{align}
      Namely, we decompose $E_{A_\Gg}(\Gg)$ in a polynomial $E_{A_\Ff}(\Ff)$ generating the classical Landau variety according to theorem \ref{thm:LandauVar}, a polynomial $E_{A_\Uu}(\Uu)$ which is constant in the physical relevant case and a polynomial $\Delta_{A_\Gg}(\Gg)$, which we will associate to the second-type singularities. The remaining polynomial
      \begin{align}
         R := \prod_{\substack{\tau\subsetneq \Newt(\Uu+\Ff)\\ \tau \nsubseteq \Newt (\Uu), \tau \nsubseteq \Newt (\Ff)}} \Delta_{A\cap\tau} (\Gg_\tau) 
      \end{align}
      will correspond to second-type singularities of subgraphs and we will call the roots of $R$ the \textit{mixed type singularities of proper faces}.
      
      In the following section, we will analyze step by step these further contributions to the singular locus.
    
    \subsection{Second-type singularities}
   
      As aforementioned the defining polynomial of the singular locus $\Sing(\mathcal M_\Aa(\nuu))$ splits into several discriminants.  With the $A$-discriminant $\Delta_{A_\Gg}(\Uu+\Ff)$ we will associate the so-called second-type singularities. We have to remark, that the notion of second-type singularities differs slightly in various literature. Moreover, there is very little known about second-type singularities. Usually, there is been made a distinction between pure second-type singularities and mixed second-type singularities \cite{EdenAnalyticSmatrix1966, NakanishiGraphTheoryFeynman1971}. The pure second-type singularities do not depend on masses and can be expressed by Gram determinants, whereas the latters appear in higher loops. Second-type singularities are better understood in momentum space, whereas they are endpoint singularities at infinity \cite{MuhlbauerMomentumSpaceLandau2020}. In parametric space, second-type singularities are connected to the case where $\Uu=0$.\\
      
      By introducing a new variable $x_0$, we can change to the homogeneous setting $\Delta_{\tilde A_\Gg}(x_0 \Uu + \Ff)$ which has the same discriminant, since there is an appropriate injective, affine map connecting $A_\Gg$ with $\tilde A_\Gg$ according to section \ref{sec:Adisc}. Writing the corresponding polynomial equation system explicitly, the $A$-discriminant $\Delta_{A_\Gg}(\Gg)$ is the defining polynomial of the closure of coefficients $z\in\mathbb C^N$, such that the equations
      \begin{align}
        \Uu = 0, \quad \Ff_0 = 0, \quad \pd{\Ff_0}{x_i} + \pd{\Uu}{x_i} \left( x_0 + \sum_{j=1}^n x_j m_j^2 \right) = 0 \quad\text{for}\quad i=1,\ldots,n \label{eq:secondtype} \point
      \end{align}
      have a solution for $(x_0,x)\in(\mathbb C^*)^{n+1}$.  By $\Ff_0 = \sum_{F\in\mathcal T_2} p_F^2 \prod_{e_i\notin F} x_i$ we denote the massless part of the second Symanzik polynomial. Not all conditions herein (\ref{eq:secondtype}) are independent, since the polynomial $x_0\Uu+\Ff$ is homogeneous again. Thus, we can drop an equation from (\ref{eq:secondtype}). These equations for second-type singularities (\ref{eq:secondtype}) agree with the result in \cite{NakanishiGraphTheoryFeynman1971}.
      
      \begin{example}[\nth{2} type singularities of all banana graphs]
        Consider the family of massive $L$-loop $2$-point functions, which are also called banana graphs. These graphs having $n=L+1$ edges and the Symanzik polynomials
        \begin{align}
          \Uu &= x_1 \cdots x_n \left(\frac{1}{x_1} + \ldots + \frac{1}{x_n} \right) \\
          \Ff_0 &= p^2 x_1\cdots x_n \point
        \end{align}
        Applying the conditions of (\ref{eq:secondtype}) we will find the second-type singularity for all banana graphs to be
        \begin{align}
          p^2 = 0 \point
        \end{align}
      \end{example}
      \begin{example}[\nth{2} type singularities of all $1$-loop graphs] \label{ex:1loopSecondtype}
        A massive $1$-loop graph with $n$ edges has Symanzik polynomials
        \begin{align}
          \Uu &= x_1 + \ldots + x_n \\
          \Ff_0 &= \sum_{1\leq i < j \leq n} s_{ij} x_i x_j 
        \end{align}
        where $s_{ij} := \left(\sum_{k=i}^{j-1} p_k\right)^2$ for $i<j$ defines the dependence on external momenta. We will set $s_{ij}=s_{ji}$ for $i>j$ and $s_{ii}=0$. We obtain $\pd{\Uu}{x_j} = 1$ and $\pd{\Ff_0}{x_j} = \sum_{i\neq j} s_{ij} x_i$ for the derivatives. Since we can drop one equation from (\ref{eq:secondtype}) due to the homogenity of $x_0 \Uu +\Ff$, we will obtain a linear equation system in the $1$-loop case. Eliminating $x_0$ by subtracting equations, we can combine these conditions to a determinant
        \begin{align}
          \begin{vmatrix}
            \hspace{.3cm} 1 & 1 & \cdots & 1 & \\
            \multicolumn{4}{c}{\multirow{2}{*}{$\bigl(s_{ij} - s_{jn}\bigr)_{\scalebox{0.7}{$\substack{1\leq i \leq n-1 \\ 1\leq j \leq n}$}}$}} \\
            &
          \end{vmatrix} = 0 \label{eq:1loopSecondtype}
        \end{align}
        as the (not necessarily irreducible) defining polynomial of the second-type singularity. By the same argument as used in \cite{NakanishiGraphTheoryFeynman1971}, the condition (\ref{eq:1loopSecondtype}) is equivalent to the vanishing of the Gram determinant, which is usually set for the pure second-type singularity \cite{EdenAnalyticSmatrix1966}. Note that in the $1$-loop case, the second-type singularity does not depend on masses. Furthermore, for higher $n$, the external momenta satisfies certain relations, since there can not be more than $d$ linear independent vectors in $d$-dimensional Lorentz space. In addition the external momenta ensures a conservation law. Thus, for $n\geq d + 2$ the condition (\ref{eq:1loopSecondtype}) is satisfied for all physical external momenta. Hence, we will remove these contribution to the singular locus, when we restrict us to the physical relevant case.
      \end{example} 
      
      As the next part of the singular locus $\Sing(\mathcal M_\Aa(\nuu))$ we will consider the principal $A$-determinant of the first Symanzik polynomial $E_{A_\Uu}(\Uu)$. Since the coefficients of the first Symanzik polynomial are all equal to one, the principal $A$-determinant $E_{A_\Uu}(\Uu)$ can be either zero or one. In the first case, we will obtain similar to the example \ref{ex:1loopSecondtype} an always present contribution to the singular locus. That this constant contribution can not correspond to an actual singularity can be seen analogously, by the existence of a non-vanishing convergence domain of the Feynman integral. Therefore, we will exclude this contribution to the singular locus if we consider the physical relevant case. We want to emphasize that these removing of ``the unwanted zeros'' shows not only up for our special approach. Such a phenomenon does also appear in the ``classical way'' of treating Landau singularities, as for example in the previous discussed second-type singularities of $1$-loop graphs when the number of legs exceeds the spacetime dimension $d+1$. 

      \begin{lemma}
        For all $1$-loop graphs and all $L$-loop banana graphs we obtain
        \begin{align}
          E_{A_\Uu}(\Uu) = 1 \point
        \end{align}
      \end{lemma}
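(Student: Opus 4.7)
The plan is to show that in both families the Newton polytope of $\Uu$ is a unit simplex, after which $E_{A_\Uu}(\Uu)$ can be read off directly from Theorem~\ref{thm:NewtSec}.

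First I would identify the supports. For the $1$-loop graph, $\Uu = x_1+\cdots+x_n$ has support $A_\Uu = \{e_1,\ldots,e_n\}$, which is affinely independent and lies on the hyperplane $\sum_k v_k = 1$. For the $L$-loop banana with $n=L+1$, the support of $\Uu = \sum_i\prod_{j\neq i} x_j$ is $\{(1,\ldots,1)-e_i\}_{i=1}^n$, whose differences $v_i-v_1 = e_1-e_i$ are again independent. Hence $\Newt(\Uu)$ is an $(n-1)$-simplex with exactly $n$ vertices in both cases. A short computation then shows that the sublattice generated by $\{e_1-e_i\}_{i=2}^n$ coincides with the full sum-zero sublattice $\{v\in\mathbb Z^n:\sum_k v_k=0\}$, so the normalised lattice volume of $\Newt(\Uu)$ equals $1$.

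Second, I would apply Theorem~\ref{thm:NewtSec}. A simplex with $n$ vertices admits a unique triangulation, namely itself, with weight map $\omega_T(a)=\vol(\sigma)=1$ for every $a\in A_\Uu$. The secondary polytope $\Sigma(A_\Uu)$ is therefore a single point, so $\Newt(E_{A_\Uu}(\Uu))$ is also a point and $E_{A_\Uu}(\Uu)$ reduces to a single monomial in the indeterminate coefficients $z_i$, whose numerical prefactor is $\pm\vol(\sigma)^{\vol(\sigma)}=\pm 1$. Substituting the physical coefficients $z_i=1$ of $\Uu$ and fixing the sign via the convention of~\cite[Ch.~10.1G]{GelfandDiscriminantsResultantsMultidimensional1994} then yields $E_{A_\Uu}(\Uu)=1$.

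The only genuine calculation here is the lattice volume, and this is where I expect the greatest care to be needed -- in particular, for banana graphs one must verify that $\{e_1-e_i\}_{i=2}^n$ really generates the full sum-zero sublattice rather than a proper sublattice (which would give volume strictly greater than $1$). Everything else is automatic. As a cross-check one could instead argue via Theorem~\ref{thm:pAdet-factorization}: every face of a simplex configuration is again a simplex configuration, and for any simplex configuration $A$ one has $\Delta_A=1$, because the linear system obtained from $f=0$ and $x_j\partial_j f=0$ in the unknowns $w_i:=z_i x^{a^{(i)}}$ has coefficient matrix of maximal rank and thus forces all $z_i=0$, so $\codim\overline{\nabla}_0>1$; the product of face-discriminants in Theorem~\ref{thm:pAdet-factorization} is then trivially $1$.
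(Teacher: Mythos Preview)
Your argument is correct, but it takes a genuinely different route from the paper. The paper proceeds by a direct check: since all coefficients of $\Uu$ are fixed to $1$, the value $E_{A_\Uu}(\Uu)$ is a constant (either $0$ or nonzero, the latter being called ``$1$''); noting that $\Newt(\Uu)$ is a simplex in both families, the paper asserts it suffices to verify that the system $\Uu=\partial_{x_1}\Uu=\cdots=\partial_{x_n}\Uu=0$ has no solution in $(\mathbb C^*)^n$, and then does so by a one-line computation in each case (for bananas the key identity is $\Uu-x_i\,\partial_{x_i}\Uu=\pm x_1\cdots x_n/x_i$, which never vanishes on the torus). Your approach via Theorem~\ref{thm:NewtSec} is more structural: you never touch the defining equations, instead reading off $E_{A_\Uu}=\pm\prod_a z_a$ from the fact that a unit-volume simplex has a one-point secondary polytope, and then specialising $z_a=1$. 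This buys you an explicit handle on the sign and makes the argument purely combinatorial, at the price of having to verify the lattice-volume computation (which is indeed the only place care is needed, and you flag it correctly). The paper's direct check, by contrast, is shorter and does not depend on the simplex having unit volume. Your cross-check through Theorem~\ref{thm:pAdet-factorization} and defectiveness of simplex configurations is closer in spirit to the paper's reasoning, though the paper leaves that reduction implicit and only treats the full face explicitly.
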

      \begin{proof}
        Since $\Uu$ has no free coefficients, $E_{A_\Uu}(\Uu)$ is either $0$ or $1$. In both cases $\Newt(\Uu)$ describes a simplex. Therefore, in order to proof $E_{A_\Uu}(\Uu)=1$ it is sufficient to show, that $\Uu=\pd{\Uu}{x_1} = \ldots=\pd{\Uu}{x_n}=0$ contains a contradiction. For $1$-loop graphs this contradiction is obvious since $\pd{\Uu}{x_i}=1$.
        
        For $L$-loop banana graphs consider
        \begin{align}
          \Uu - x_i \pd{\Uu}{x_i} = - \frac{x_1 \cdots x_n}{x_i} = 0
        \end{align}
        which has no solution for $x\in(\mathbb C^*)^n$.
      \end{proof}
      
      We want to emphasize, that this lemma is not true for general Feynman graphs. The dunce's cap graph is the simplest example, which will have a vanishing principal $A$-determinant of the first Symanzik polynomial. Thus, in those cases the physical Feynman integral is a (potential) branch point of a generic Feynman integral. Simultaneously, the dimension of solution space of a physical Feynman integrals is smaller than the dimension of the solution space of generic Feynman integrals. However, we can always consider in principle the limit from generic to physical Feynman integrals. In \cite{KlausenHypergeometricSeriesRepresentations2019} it was indicated how such a limit can be computed in general. In our approach we want always assume the generic Feynman integral first and the limit to the physical case afterwards in order to be able to use theorem \ref{thm:CKKT}. Vice versa, we could also specifying the $\mathcal D$-module to the physical relevant case first and considering the characteristic variety second. However, it is an algorithmically challenging task to restrict $\mathcal D$-ideals to specific values \cite{SaitoGrobnerDeformationsHypergeometric2000}. \\
      
      The discriminants of proper mixed faces $R$ can be associated with second-type singularities of subgraphs as it can be observed in the following example.
      
      \begin{example}[$R$ for the triangle graph]
        We will determine the discriminants of proper mixed faces of the triangle graph. The Symanzik polynomials for the triangle graph are
        \begin{align}
          \Uu &= x_1 + x_2 + x_3\\
          \Ff &= p_1^2 x_2 x_3 + p_2^2 x_1 x_3 + p_3^2 x_1 x_2 + \Uu (x_1 m_1^2 + x_2 m_2^2 + x_3 m_3^2) \point
        \end{align}
        The corresponding Newton polytope $\Newt(\Uu+\Ff)$ has in total $21$ faces, where one face is the Newton polytope $\Newt(\Uu+\Ff)$ itself, $7$ faces corresponding to faces of $\Newt(\Uu)$, another $7$ corresponding to faces of $\Newt(\Ff)$ and the remaining $6$ faces are proper mixed faces. The truncated polynomials corresponding to these $6$ faces are up to permutations  $1\leftrightarrow 2 \leftrightarrow 3$ given by
        \begin{align}
          r_1 &= m_2^2 x_2^2 + m_3^2 x_3^2 + x_2 + x_3 + (p_1^2+m_2^2+m_3^2) x_2 x_3 \\
          h_1 &= m_1^2 x_1^2 + x_1 \point
        \end{align}
        As $h_1 = \pd{h_1}{x_1} = 0$ has no common solution we have $\Delta(h_1) = 1$ for the $A$-discriminant of $h_1$ as well as for the symmetric permutations. Considering $r_1=\pd{r_1}{x_2}=\pd{r_1}{x_3} = 0$, this is nothing else than the sum of Symanzik polynomials of the $1$-loop bubble graph and we obtain $p_1^2=0$ and similarly for all the permutations. Thus, we get
        \begin{align}
          R = p_1^2 p_2^2 p_3^2
        \end{align}
        as the contribution to the singular locus of the triangle graph from proper mixed faces. 
      \end{example}
      
      We want to mention, that also for the discriminants of proper mixed faces $R$ there can be contributions which are identically zero in the physical relevant case. The simplest example where such a behaviour appears is the $2$-loop $2$-point function also known as the sunset graph. Again, we will define a physical relevant singular locus, where we have to remove these overall contributions.
      
      Although, one does not usually consider the contribution of $R$ to the singular locus of Feynman integrals in literature, we want to remark, that they can give a non-trivial contribution to the singular locus.

  \section{Coamoebas} \label{sec:coamoebas}
  
    In the previous section we only were asking if there are values of $p$ and $m$ such that there exists a multiple root of the Symanzik polynomials anywhere in the complex plane. However, when applied to Feynman integral representations, e.g. the Lee-Pomeransky representation (\ref{eq:FeynmanG}) (or more generally to Euler-Mellin integrals) we only should worry if these multiple roots lying inside the integration region $\mathbb R^n_+$. Thus, in this section we want to study the position of singularities in the space of Schwinger parameters $x$. Hence, we will also consider the multivalued structure of Feynman integrals. In order to better investigate the nature of the multiple roots, we will introduce the concept of coamoebas. Coamoebas were introduced by Mikael Passare as related objects to amoebas, which go back to Gelfand, Kapranov and Zelevinsky \cite{GelfandDiscriminantsResultantsMultidimensional1994}. Amoebas as well as coamoebas provide a link between algebraic geometry and tropical geometry \cite{NisseAmoebasCoamoebasLinear2016}. The relations between coamoebas and Euler-Mellin integrals were investigated in \cite{NilssonMellinTransformsMultivariate2010, BerkeschEulerMellinIntegrals2013}, which the following section is mostly based on. For further reading about coamoebas we refer to \cite{NisseGeometricCombinatorialStructure2009, JohanssonCoamoebas2010, ForsgardHypersurfaceCoamoebasIntegral2012, JohanssonArgumentCycleCoamoeba2013, PassareDiscriminantCoamoebasHomology2012, NisseHigherConvexityCoamoeba2015, ForsgardOrderMapHypersurface2015, ForsgardTropicalAspectsReal2015}. \\
     
    In the following we will consider the more general case of Euler-Mellin integrals, which we will define according to \cite{BerkeschEulerMellinIntegrals2013} as
    \begin{align}
      \mathscr M_f(s,t) := \Gamma(t) \int_{\mathbb R^n_+} dx\, x^{s-1} f^{-t} \label{eq:EulerMellin}
    \end{align}
    where $s\in \mathbb C^n$ and $t\in \mathbb C^k$ are complex parameters. Further, we denote $f^{-t}:= f_1^{-t_1}\ldots f_k^{-t_k}$ for powers of polynomials $f_i\in\mathbb C[x_1,\ldots,x_n]$ and we write $f:=f_1 \cdots f_k$. Parametric Feynman integrals can be treated as a special case of Euler-Mellin integrals with $k=1$ in the Lee-Pomeransky representation (\ref{eq:FeynmanG}) or with $k=2$ in the Feynman representation (\ref{eq:FeynmanUF}) for a convenient hyperplane $H(x)$.
    
    A polynomial $f\in\mathbb C[x_1,\ldots,x_n]$ with fixed coefficients is called \textit{\CNV on} $X\subseteq \mathbb C^n$ if for all faces $\tau\subseteq\operatorname{Newt}(f)$ of the Newton polytope, the truncated polynomials $f_\tau$ do not vanish on $X$. We can consider the vanishing of some truncated polynomial as a necessary condition in the approach of the principal $A$-determinant. Thus, if a polynomial $f$ is \CNV on a set $X$, roots of the principal $A$-determinant $E_A(f)$ will correspond to a solution $x\notin X$ outside of this set $X$. 
    
    Note that the integral (\ref{eq:EulerMellin}) gets ill-defined, when $f$ is not \CNV on the positive orthant $(0,\infty)^n$. Hence, in \cite{BerkeschEulerMellinIntegrals2013, NilssonMellinTransformsMultivariate2010} a slightly more general version of Euler-Mellin integrals was introduced, where we rotate the original integration contour in the complex plane
    \begin{align}
      \mathscr M_f^\theta(s,t) := \Gamma(t) \int_{\Arg^{-1} \theta} dx\, x^{s-1} f(x)^{-t} = e^{is\theta}\, \Gamma(t)  \int_{\mathbb R_+^n} dx\, x^{s-1} f(x\cdot e^{i\theta})^{-t} \label{eq:EulerMellinTheta}
    \end{align}
    with the component-wise argument map $\Arg (x) := (\arg x_1,\ldots,\arg x_n)$. For short we write $f(x\cdot e^{i\theta}) := f(x_1 e^{i\theta_1},\ldots,x_n e^{i\theta_n})$ and we will call (\ref{eq:EulerMellin}) the \textit{$\theta$-analogue Euler-Mellin integral}. Deforming the integration contour slightly in cases where poles of the integrand hit the integration contour, is the same procedure as ``Feynman's $i\epsilon$ prescription''. As aforementioned in section \ref{sec:FeynmanIntegrals} the $i\epsilon$ prescription is equivalent to a redefinition of masses $m_i^2 \mapsto m_i^2 - i\epsilon$, where the sign is motivated by causality \cite{WeinzierlArtComputingLoop2006}. Thus, we obtain 
    \begin{align}
      I_\Gamma^\epsilon (\nuu,z) &= \frac{\Gamma(\nu_0)}{\Gamma(\nu)\Gamma\left(\nu_0-\omega\right)}\int_{\mathbb R^n_+} dx x^{\nu-1} (\Uu+\Ff-i\epsilon \Uu \sum_i x_i)^{-\nu_0} \nonumber \\
       &\stackrel{|\epsilon|\ll 1}{=} \frac{\Gamma(\nu_0)}{\Gamma(\nu)\Gamma\left(\nu_0-\omega\right)}\int_{\mathbb R^n_+} dx x^{\nu-1} (\Uu+\Ff-i\epsilon)^{-\nu_0}
    \end{align}
    for the $i\epsilon$ prescripted Feynman integral, because $\Uu\sum_ix_i >0$ is positive in the orthant $(0,\infty)^n$. Equality of these two integrals is given in the limit $\epsilon\rightarrow 0$. 
    
    By choosing $\theta=(\epsilon,\ldots,\epsilon)$ we get the same result for the $\theta$-analogue Euler-Mellin integral. By the homogenity of the Symanzik polynomials we obtain
    \begin{align}
      I_\Gamma^\epsilon (\nuu,z) &= \frac{\Gamma(\nu_0)}{\Gamma(\nu)\Gamma\left(\nu_0-\omega\right)}\int_{\Arg^{-1}(\epsilon,\ldots,\epsilon)} dx\, x^{\nu-1} (\Uu+\Ff)^{-\nu_0} \nonumber\\
      &=\frac{\Gamma(\nu_0)}{\Gamma(\nu)\Gamma\left(\nu_0-\omega\right)}\int_{\mathbb R^n_+} dx\, x^{\nu-1} e^{i\epsilon(\omega-\nu_0)}(\Uu e^{-i\epsilon} +\Ff)^{-\nu_0} \nonumber\\
       &\stackrel{|\epsilon|\ll 1}{=}  \frac{\Gamma(\nu_0)}{\Gamma(\nu)\Gamma\left(\nu_0-\omega\right)}\int_{\mathbb R^n_+} dx\, x^{\nu-1} (\Uu + \Ff - i\epsilon)^{-\nu_0} \point
    \end{align}
    Thus, in the limit $\epsilon\rightarrow 0$ the $\theta$-analogue Euler-Mellin integral and the $i\epsilon$ prescripted Feynman integral coincide.\\
    
    In order to get a well-defined integral in (\ref{eq:EulerMellinTheta}) we have to track the poles of the integrand. Denote by $\mathcal Z_f := \{ x\in(\mathbb C^*)^n | f(x) = 0\}$ the set of (non-zero) roots of a polynomial $f$. To analyze when these poles meet the integration contour, let
    \begin{align}
      \mathcal C_f := \operatorname{Arg}(\mathcal Z_f) \subseteq \mathbb T^n \label{eq:defCoamoeba}
    \end{align}
    be the argument of the zero locus. We call $\mathcal C_f$ the \textit{coamoeba} of $f$. Since the argument map is a periodic function we restrict the discussion to the $n$-dimensional real torus $\mathbb T^n := (\mathbb R/2\pi\mathbb Z)^n$. The coamoeba is closely related to the \CNV of $f$.
    
    \begin{lemma}[\cite{BerkeschEulerMellinIntegrals2013}]
      For $\theta\in\mathbb T^n$, the polynomial $f(x)$ is \CNV on $\Arg^{-1} (\theta)$ if and only if $\theta \notin \overline{\mathcal C_f}$.
    \end{lemma}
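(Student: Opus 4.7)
The plan is to reduce the claim to the identity $\overline{\mathcal{C}_f} = \bigcup_{\tau \subseteq \Newt(f)} \mathcal{C}_{f_\tau}$, where the union runs over all (nonempty) faces including $\tau = \Newt(f)$ itself. Given this identity, the lemma is immediate: by the definition of \CNV the polynomial $f$ is \CNV on $\Arg^{-1}(\theta)$ precisely when no $f_\tau$ has a zero with argument $\theta$, i.e.\ $\theta \notin \bigcup_\tau \mathcal{C}_{f_\tau} = \overline{\mathcal{C}_f}$.

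For the inclusion $\bigcup_\tau \mathcal{C}_{f_\tau} \subseteq \overline{\mathcal{C}_f}$, the case $\tau = \Newt(f)$ is trivial since $f_\tau = f$. For a proper face $\tau$, pick a vector $v \in \mathbb{R}^n$ such that the linear form $a \mapsto \langle v, a\rangle$ is maximized on $\tau$, attaining some value $r$, with strict inequality for $a \in A \setminus \tau$. For $y \in \mathcal{Z}_{f_\tau}$, define the rescaled family $g_t(z) := t^{-r} f(t^v \cdot z)$, where $t^v \cdot z = (t^{v_1}z_1,\dots,t^{v_n}z_n)$. A direct expansion gives
\begin{align}
  g_t(z) = f_\tau(z) + \sum_{a\in A\setminus\tau} z_a\, t^{\langle v,a\rangle - r} z^a,
\end{align}
and the second sum tends to zero uniformly on compact subsets of $(\mathbb{C}^*)^n$ as $t \to \infty$. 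Since zero sets of holomorphic functions behave upper semicontinuously under such locally uniform convergence (Hurwitz's theorem, applied after restricting to a generic complex line through a regular point of $\mathcal{Z}_{f_\tau}$ close to $y$), for every neighborhood of $y$ there exist $z(t)$ in it with $g_t(z(t)) = 0$ when $t$ is large. The points $x(t) := t^v \cdot z(t) \in \mathcal{Z}_f$ then satisfy $\Arg(x(t)) = \Arg(z(t)) \to \Arg(y)$, placing $\Arg(y)$ in $\overline{\mathcal{C}_f}$.

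For the reverse inclusion $\overline{\mathcal{C}_f} \subseteq \bigcup_\tau \mathcal{C}_{f_\tau}$, let $\theta \in \overline{\mathcal{C}_f}$ and pick $x^{(k)} \in \mathcal{Z}_f$ with $\Arg(x^{(k)}) \to \theta$. Write $x^{(k)} = \exp(u^{(k)} + i\theta^{(k)})$. If $\{u^{(k)}\}$ admits a bounded subsequence converging to $u^*$, then $f(\exp(u^* + i\theta)) = 0$ by continuity and $\theta \in \mathcal{C}_f = \mathcal{C}_{f_{\Newt(f)}}$. Otherwise $\|u^{(k)}\| \to \infty$, and passing to a subsequence we may assume $u^{(k)}/\|u^{(k)}\|$ converges to some unit vector $v$. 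Set $r := \max_{a \in A} \langle v,a\rangle$ and let $\tau$ be the face where this maximum is attained. Decompose $u^{(k)} = t_k v + w^{(k)}$ with $t_k = \|u^{(k)}\|$ and $w^{(k)}/t_k \to 0$; then $y^{(k)} := \exp(w^{(k)} + i\theta^{(k)})$ satisfies
\begin{align}
  0 = e^{-t_k r} f(x^{(k)}) = f_\tau(y^{(k)}) + \sum_{a \notin \tau} z_a\, e^{t_k(\langle v,a\rangle - r)} (y^{(k)})^a,
\end{align}
where every exponent $\langle v,a\rangle - r$ is strictly negative. If $\{y^{(k)}\}$ has a bounded subsequence with limit $y^*$ (necessarily $\Arg(y^*) = \theta$), the correction term vanishes, giving $f_\tau(y^*) = 0$ and $\theta \in \mathcal{C}_{f_\tau}$. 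Otherwise we iterate the decomposition on the polynomial $f_\tau$ and the sequence $w^{(k)}$; at each step the relevant Newton polytope strictly drops dimension, so after finitely many iterations the procedure terminates with a face $\tau^* \subseteq \tau$ and a genuine zero of $f_{\tau^*}$ with argument $\theta$.

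The main technical obstacle is the upper-semicontinuity step in the first direction: one must produce actual zeros of $g_t$ near $y$, not merely show $g_t(y) \to 0$. The cleanest route is to replace $y$ by a nearby regular point $y'$ of the hypersurface $\mathcal{Z}_{f_\tau}$ (whose argument is still arbitrarily close to $\theta$ by continuity of $\Arg$), restrict to a complex line transverse to $\mathcal{Z}_{f_\tau}$ at $y'$, and apply a one-variable Rouché argument on a small disc. The remaining steps — the explicit expansion of $g_t$, the choice of a supporting direction $v$ for $\tau$, and the iteration in the second direction — are routine once one is comfortable passing between Newton polytopes and asymptotic cones of the log-modulus map.
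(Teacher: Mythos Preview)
The paper does not prove this lemma; it is quoted from \cite{BerkeschEulerMellinIntegrals2013} without argument. So there is no ``paper's proof'' to compare against. That said, your reduction to the identity
\[
  \overline{\mathcal C_f}\;=\;\bigcup_{\tau\subseteq\Newt(f)}\mathcal C_{f_\tau}
\]
is exactly the standard route (this identity is essentially how the closed coamoeba is described in the coamoeba literature, e.g.\ Nilsson--Passare and Johansson), and your treatment of the inclusion $\bigcup_\tau \mathcal C_{f_\tau}\subseteq\overline{\mathcal C_f}$ via the rescaling $g_t(z)=t^{-r}f(t^v\!\cdot z)$ together with a Rouch\'e/Hurwitz argument near a smooth point of $\mathcal Z_{f_\tau}$ is correct. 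Your remark that one must pass to a nearby regular point before intersecting with a transverse line is the right way to handle the degenerate case.

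There is one genuine soft spot, in the reverse inclusion. You assert that ``at each step the relevant Newton polytope strictly drops dimension'', but with your decomposition $u^{(k)}=t_k v+w^{(k)}$, $t_k=\|u^{(k)}\|$, nothing prevents the next limiting direction $v'=\lim w^{(k)}/\|w^{(k)}\|$ from being orthogonal to the affine span of $\tau$, in which case $\langle v',\cdot\rangle$ is constant on $\tau$ and the face does not shrink. Two clean fixes: (i) at each stage replace $w^{(k)}$ by its orthogonal projection onto the linear span of $\tau-\tau$ before normalising (the component perpendicular to that span can be absorbed into a harmless rescaling of $f_\tau$ and does not affect the equation $f_\tau(y^{(k)})+o(1)=0$); then $v'$ lies in that span, hence defines a proper face of $\tau$. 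Or (ii) bypass the iteration altogether by proving the quantitative form used in \cite{BerkeschEulerMellinIntegrals2013}: if $f$ is \CNV on $\Arg^{-1}(\theta)$ then there is $\delta>0$ with $|f(x)|\ge\delta\sum_{a\in A}|z_a x^a|$ on $\Arg^{-1}(\theta)$, which immediately gives an open neighbourhood of $\theta$ disjoint from $\mathcal C_f$. Either repair is routine; the overall strategy is sound.
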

    
    Applied to Feynman integrals, this lemma gives a criterion, whether the $i\epsilon$ prescripted Feynman integral representation (\ref{eq:FeynmanG}) $\lim_{\epsilon\rightarrow 0^+} I^\epsilon_\Gamma(\nuu,z)$ differs from the original Feynman integral $I_\Gamma(\nuu,z)$.
    
    \begin{lemma} \label{lem:nonphysical}
      Assume that $z\in\Sing(\mathcal M_\Aa(\nuu))=\mathbf V(E_A(\Gg))$ is a singular point. If $0\notin \overline{\mathcal C_\Gg}$, the common solution of $G=x_1 \pd{G}{x_1} = \ldots = x_n \pd{G}{x_n} = 0$ generating the singular point according to section \ref{sec:AResultantsDiscriminants} does not lie on the integration contour $x \notin\mathbb R^n_+$.
    \end{lemma}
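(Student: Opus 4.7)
The plan is to prove the lemma by direct contraposition using the definition of the coamoeba in equation (\ref{eq:defCoamoeba}). The essential observation is that a point in the strictly positive orthant $\mathbb R^n_+$ has argument equal to $0\in\mathbb T^n$, so any root of $\Gg$ that lies on the integration contour must produce the origin as a point of $\mathcal C_\Gg$.

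First I would unpack the hypothesis. By definition of the principal $A$-determinant (equation (\ref{eq:principalAdetResultant})) together with the characterization of $\mathbf V(E_A(\Gg))$ via the Zariski closure of $\mathscr Z$, the assumption $z\in\mathbf V(E_A(\Gg))$ guarantees (for the component of the singular locus in question) the existence of a point $\hat x\in(\mathbb C^*)^n$ solving
\begin{align}
  \Gg(\hat x) = \hat x_1 \frac{\partial \Gg}{\partial x_1}(\hat x) = \ldots = \hat x_n \frac{\partial \Gg}{\partial x_n}(\hat x) = 0 \point
\end{align}
In particular $\hat x$ has only nonzero complex coordinates, and the first of these equations says $\hat x \in \mathcal Z_\Gg$.

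Next I would perform the contrapositive step. Suppose, for contradiction, that $\hat x \in \mathbb R_+^n$. Combined with $\hat x \in (\mathbb C^*)^n$ this means every coordinate $\hat x_i$ is a strictly positive real number, so the componentwise argument map returns
\begin{align}
  \Arg(\hat x) = (\arg \hat x_1,\ldots,\arg \hat x_n) = (0,\ldots,0) \in \mathbb T^n \point
\end{align}
Since $\hat x\in\mathcal Z_\Gg$, we have $0 = \Arg(\hat x) \in \Arg(\mathcal Z_\Gg) = \mathcal C_\Gg \subseteq \overline{\mathcal C_\Gg}$, which contradicts the hypothesis $0\notin\overline{\mathcal C_\Gg}$. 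Hence $\hat x \notin \mathbb R_+^n$, as claimed.

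I do not foresee any real obstacle here: the argument is essentially a one-line application of the definition of the coamoeba to the solution guaranteed by $\mathbf V(E_A(\Gg))$. The only delicate point worth mentioning in the write-up is that one should be explicit about working with the open positive orthant inside the torus $(\mathbb C^*)^n$, so that the argument map is unambiguously defined at $\hat x$; boundary points of $\mathbb R_{\geq 0}^n$ where some coordinate vanishes are already excluded by $\hat x\in(\mathbb C^*)^n$, so no separate treatment is required.
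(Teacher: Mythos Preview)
Your argument is correct and essentially matches the paper's approach. The paper does not spell out a proof but presents the lemma as an immediate consequence of the preceding result from \cite{BerkeschEulerMellinIntegrals2013} (that $\theta\notin\overline{\mathcal C_f}$ is equivalent to $f$ being completely non-vanishing on $\Arg^{-1}(\theta)$); your direct use of the definition $\mathcal C_\Gg=\Arg(\mathcal Z_\Gg)$ is the same observation stripped to its core, since only the full-polytope part of the \CNV condition is needed for the system $\Gg=x_i\partial_i\Gg=0$ as stated.
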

    
    Thus, we want to distinguish between singular points $z$ corresponding to solutions $x\in\mathbb R_+^n$ and solutions $x\notin\mathbb R^n_+$. The latters are often called unphysical singularities or pseudo thresholds. Those singular points will not lie on the principal sheet (or ``physical sheet'') defined by the integral representation (\ref{eq:FeynmanG}).
    
    Hence, we can use lemma \ref{lem:nonphysical} as a criterion of pseudo thresholds. But we have to remark, that the coamoeba is not an easily accessible object and it will be hard in general to apply lemma \ref{lem:nonphysical} to practical problems. However, coamoebas have many interesting properties, which can be used to give simpler criteria. Without going into details, we will mention a few of the properties. Exemplarily, we have drawn the coamoeba for the massive $1$-loop selfenergy graph for different kinematic regions in figure \ref{fig:coamoebaBubble}.
    
    \begin{figure}[h]
      \begin{center}
      \begin{tabular}{cc}
        \includegraphics[width=5cm]{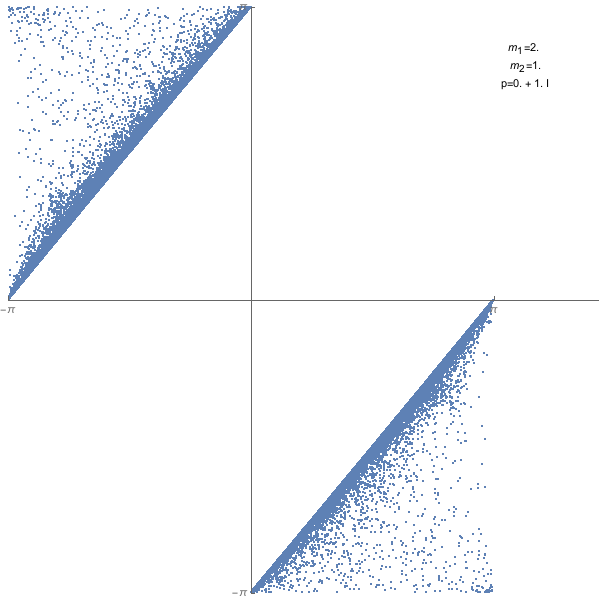} & \includegraphics[width=5cm]{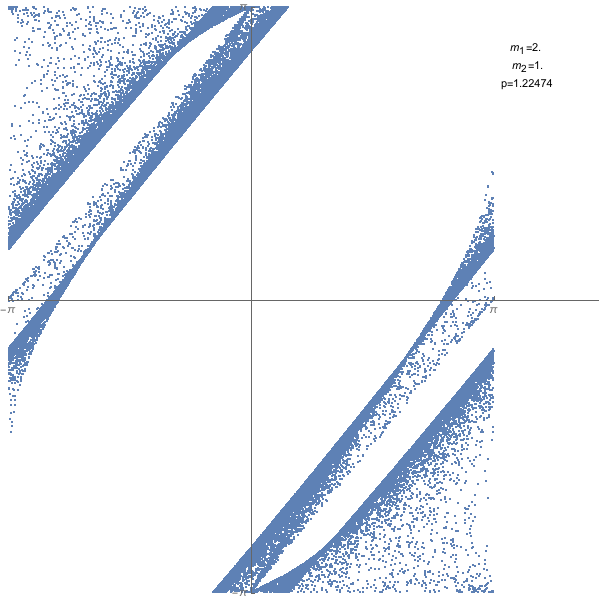}  \\
        $p^2<(m_1-m_2)^2$ & $(m_1-m_2)^2 < p^2 < (m_1+m_2)^2$  \\ \vspace{1em}\\
        \includegraphics[width=5cm]{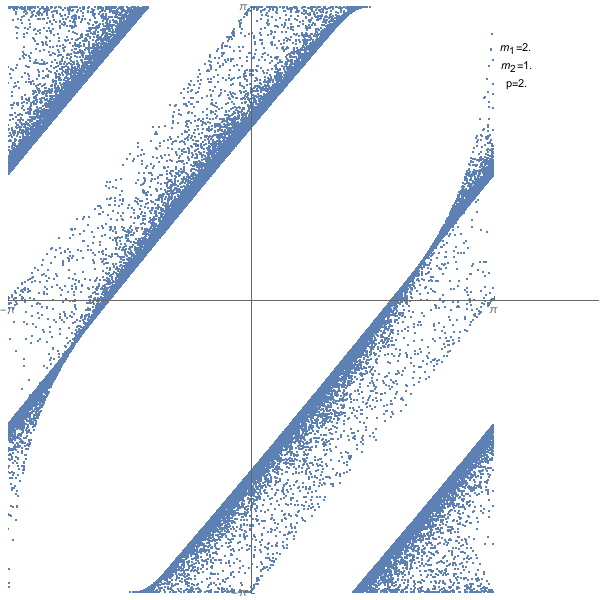}  & \includegraphics[width=5cm]{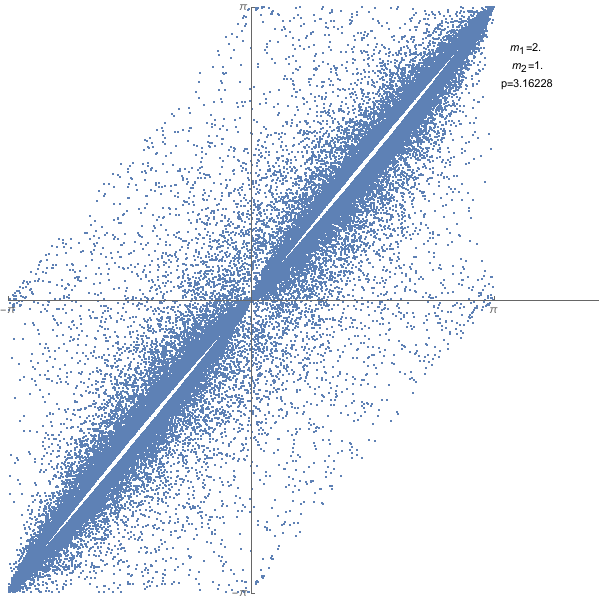} \\
        $(m_1-m_2)^2 < p^2 < (m_1+m_2)^2$ & $p^2>(m_1+m_2)^2$ \\
      \end{tabular}
    \end{center}
    \caption{Example to illustrate the behaviour of the coamoeba. Depicted is the coamoeba $\mathcal C_\Gg$ of the $1$-loop bubble graph $\Gg = x_1+x_2 + (-p^2+m_1^2+m_2^2)x_1 x_2 + m_1^2 x_1^2 + m_2^2 x_2^2$, which has two thresholds $p^2 = (m_1 \pm m_2)^2$. We draw the coamoeba for different kinematic regions with ascending momentum $p^2$. For every crossing of a singularity, the structure of the coamoeba changes. Note, that for the pseudo threshold $p^2 =(m_1-m_2)^2$ the origin is not included in the closure of the coamoeba, whereas for the normal threshold $p^2=(m_1+m_2)^2$ the origin is included. The graphs were produced by sampling random points with Mathematica \cite{WolframResearchIncMathematicaVersion12}.}
    \label{fig:coamoebaBubble}
    \end{figure}
    
    We are mainly interested in the complement of the closure of the coamoeba $\Cc:=\mathbb T^n\setminus\overline{\mathcal C_f}$. It is well known \cite{NisseGeometricCombinatorialStructure2009}, that this complement $\Cc$ is structured into a finite number of connected, convex components and we will denote such a connected component by $\Theta$. Moreover, the number of connected components of $\Cc$ will be bounded by $\vol (\Newt (f))$. We want to stress out, that the integral (\ref{eq:EulerMellinTheta}) only depends on a choice of a connected component $\Theta$, as one can see simply by a homotopic deformation of the integration contour. Thus, for all $\theta\in\mathbb T^n$ inside the same connected component $\Theta\subset \Cc$, the value of $\mathscr M^\theta_f(s,t)$ stays the same, whereas the value may change for another component $\Theta^\prime$. This implies also, that we can avoid the limit in the $i\epsilon$ prescription by choosing a small (but not infinitesimally small) value for $\epsilon$. By choosing another connected component $\Theta$ we will therefore obtain the values of another branch of the Feynman integral. We refer to \cite{HwaHomologyFeynmanIntegrals1966} for the relation between different contours and branches of integrals. However, we have to remark that the coamoeba can not generate the full fundamental group, as the coamoeba can not distinguish between poles of the integrand having the same argument.
    
    However, the coamoeba is sufficient to describe the discontinuity of Feynman integrals in the sense of \cite{CutkoskySingularitiesDiscontinuitiesFeynman1960} as a difference between two $\theta$-analogue Euler-Mellin integrals
    \begin{align}
      \operatorname{Disc} I_\Gamma(\nuu,z) =  I_\Gamma^\epsilon (\nuu,z) - I_\Gamma^{-\epsilon}(\nuu,z)
    \end{align}
    where $\epsilon > 0$ is an adequate small (but not infinitesimally small) real number. \\

    For a practical usage there are different ways to approximate coamoebas. Restricting us to the faces of dimension $1$ we obtain the \textit{shell} of $f$
    \begin{align}
      \mathcal H_f = \bigcup_{\substack{\tau\subseteq\Newt(f) \\ \dim \tau = 1}} \mathcal C_{f_\tau} \point
    \end{align}
    Each cell of the hyperplane arrangement of $\mathcal H_f$ contains at most one connected component of $\Cc$ \cite{ForsgardHypersurfaceCoamoebasIntegral2012}. Thus, $\mathcal H_f$ carries the rough structure of $\mathcal C_f$. Another way, for approximating the coamoeba, is the so-called lopsided coamoeba $\mathcal L \mathcal C_f$ and we have $\mathcal C_f \subseteq \mathcal L \mathcal C_f$. The lopsided coamoeba can be calculated by the set of trinomials $\operatorname{Tri}(f)$ which can be formed by removing all but three monomials from $f$ \cite{ForsgardHypersurfaceCoamoebasIntegral2012}
    \begin{align}
      \overline{\mathcal L \mathcal C}_f = \bigcup_{g\in\operatorname{Tri}(f)} \overline{\mathcal C}_g \point
    \end{align}
    However, there are more effective algorithms to determine lopsided coamoebas. We refer to \cite{ForsgardHypersurfaceCoamoebasIntegral2012, ForsgardTropicalAspectsReal2015} for definition and more detailed discussion. \\
    
    We want to close this section by a result of \cite{BerkeschEulerMellinIntegrals2013}, which guarantees the analytic continuation with respect to the kinematic variables $z$.
    \begin{theorem}[analytic continuation \cite{BerkeschEulerMellinIntegrals2013}]
      Let $\Sing(\mathcal M_\Aa(\nuu)) = \mathbf V( E_A(\Gg)) \subset \mathbb C^N$ be the singular locus of $\mathscr M_\Gg^\Theta$, $z\in\mathbb C^N\setminus\Sing(\mathcal M_\Aa(\nuu))$ a point outside the singular locus and $\Theta$ a connected component of $\mathbb T^n\setminus\overline{\mathcal C_\Gg}$. Then $\mathscr M_\Gg^\Theta(\nuu,z)$ has a multivalued analytic continuation to $\mathbb C^{n+1}\times (\mathbb C^N\setminus\Sing(\mathcal M_\Aa(\nuu))$, which is everywhere $\Aa$-hypergeometric.
    \end{theorem}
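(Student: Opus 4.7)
The plan is to first prove holomorphy of $\mathscr M_\Gg^\Theta$ on a nonempty open subset of $\mathbb C^{n+1}\times(\mathbb C^N\setminus\Sing(\mathcal M_\Aa(\nuu)))$, to verify the $\Aa$-hypergeometric differential equations there, and then to invoke holonomicity together with Theorem \ref{thm:CKKT} to extend the function multivaluedly. For Step 1, pick any $\theta\in\Theta$; since $\Theta$ lies in the complement of $\overline{\mathcal C_\Gg}$, the polynomial $\Gg$ is \CNV on $\Arg^{-1}(\theta)$, so $\Gg(x\cdot e^{i\theta})$ is bounded away from zero on compact subsets of $\mathbb R_+^n$ and its asymptotics at $0$ and $\infty$ are controlled by the truncations along the faces of $\Newt(\Gg)$. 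The Euler-Mellin convergence criterion recalled earlier (analogous to the $\nu\in\relint(\frac{d}{2}\Newt(\Gg))$ condition) then gives absolute convergence of $\mathscr M_\Gg^\theta(\nuu,z)$ on a nonempty tube $U\times V$ with $U\subset\mathbb C^{n+1}$ and $V\subset\mathbb C^N\setminus\Sing(\mathcal M_\Aa(\nuu))$; dominated convergence yields holomorphy, and a homotopy argument inside $\Theta$ shows that the resulting function depends only on the connected component, not on the chosen representative $\theta$.

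For Step 2, absolute convergence justifies differentiation under the integral, which pulls a factor $x^{a^{(j)}}$ out of $\partial_{z_j}\Gg^{-\nu_0}$. For any $l\in\mathbb L=\ker_{\mathbb Z}\Aa$, the defining relations $\sum_j l_j a^{(j)}=0$ together with $\sum_j l_j=0$ (the latter from the homogenization $\Aa$) force the two iterated derivatives
\begin{align}
\prod_{l_j>0}\partial_{z_j}^{l_j}\,\mathscr M_\Gg^\Theta
\;=\;
\prod_{l_j<0}\partial_{z_j}^{-l_j}\,\mathscr M_\Gg^\Theta
\end{align}
to produce the same integrand, so that $\square_l\bullet\mathscr M_\Gg^\Theta=0$. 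The Euler operators $E_i(\nuu)$ follow from the quasi-homogeneity of $\Gg$ in $(z,x)$ combined with integration by parts in $x_i$: one computes $\sum_j a_i^{(j)}z_j\partial_{z_j}\mathscr M_\Gg^\Theta=-\nu_0\int x^{\nu-1}\Gg^{-\nu_0-1}\,x_i\partial_{x_i}\Gg$ and then reduces this to $-\nu_i\mathscr M_\Gg^\Theta$ provided the boundary contributions vanish, which is the case on the tube $U$ by the \CNV property along the faces of $\Newt(\Gg)$.

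For Step 3, $H_\Aa(\nuu)$ is holonomic, and the corollary in Section \ref{sec:AHyp} identifies its singular locus with $\mathbf V(E_A(\Gg))$. Theorem \ref{thm:CKKT} then extends every local holomorphic solution to a multivalued holomorphic function on the complement of the singular locus; the initial piece $\mathscr M_\Gg^\Theta|_{U\times V}$ serves as the seed, and two different representatives $\theta,\theta'\in\Theta$ give the same branch by the homotopy argument of Step 1. Continuation in the parameter direction to all of $\mathbb C^{n+1}$ comes from the meromorphic-continuation theorem of the previous subsection, which represents $\mathscr M_\Gg^\Theta$ as an entire function in $\nuu$ times an explicit product of $\Gamma$-factors. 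The delicate point will be Step 2: one must control the boundary contributions of the integration by parts on the rotated contour $\Arg^{-1}(\theta)$ and verify that the phase shifts introduced by the deformation do not spoil the algebraic identities $\sum_j l_j a^{(j)}=0$. Once $\Aa$-hypergeometricity holds on $U\times V$, the holonomic machinery of Section \ref{sec:AHyp} takes over and the continuation is essentially automatic.
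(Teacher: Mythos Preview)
The paper does not supply its own proof of this theorem: it is quoted verbatim as a result of Berkesch--Forsg{\aa}rd--Passare and immediately followed by commentary, with no proof environment. So there is nothing in the paper to compare your argument against line by line.

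That said, your three-step outline is essentially the strategy of the cited reference and is sound. The verification of the toric operators $\square_l$ is correct as you wrote it: iterated $z$-derivatives of $\Gg^{-\nu_0}$ produce Pochhammer factors and monomials $x^{\sum l_j a^{(j)}}$, and the two conditions $\sum_j l_j a^{(j)}=0$ and $\sum_j l_j=0$ (the latter from the first row of $\Aa$) make the two sides of $\square_l$ match both in the $x$-monomial and in the Pochhammer length. The Euler operators indeed reduce to an integration by parts in $x_i$, and the boundary terms vanish precisely because of the \CNV property on $\Arg^{-1}(\theta)$ together with the convergence tube for $\nu$; your caveat about this ``delicate point'' is well placed but not an obstruction. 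For Step~3, what you are really using is that the solution sheaf of a holonomic system is a local system of rank $\vol(\Conv(\Aa))$ on the complement of the singular locus, so any local solution continues along paths; Theorem~\ref{thm:CKKT} encodes this, and your appeal to it is legitimate.

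One small point: you invoke the meromorphic-continuation theorem to push the domain to all of $\mathbb C^{n+1}$ in the parameter $\nuu$. That theorem gives a \emph{meromorphic} (not holomorphic) extension in $\nuu$, with poles governed by the $\Gamma$-factors attached to the facets of $\Newt(\Gg)$. The statement as written in the paper says ``analytic continuation to $\mathbb C^{n+1}\times(\ldots)$'', which should be read as multivalued analytic off the union of the singular locus in $z$ and the polar hyperplanes in $\nuu$; your sketch already reflects this correctly, but it is worth being explicit that the $\Gamma$-poles are not removed by the continuation.
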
 
    
    Hence, we can continue the generic Feynman integral analytically for its parameters $\nuu$ as well as for its variables $z$. This result can also be transferred to the non-generic Feynman integral with its physical restrictions on variables $z$, due to the same convergence considerations as in the previous section \ref{sec:FeynmanIntegrals}.

  \section{Non-trivial example: double-edged triangle graph} \label{sec:example}
  
    \begin{wrapfigure}{r}{0.4\textwidth} \label{fig:duncescap}
      \begin{center}
        \includegraphics[trim=0 0 0 1.5cm, clip]{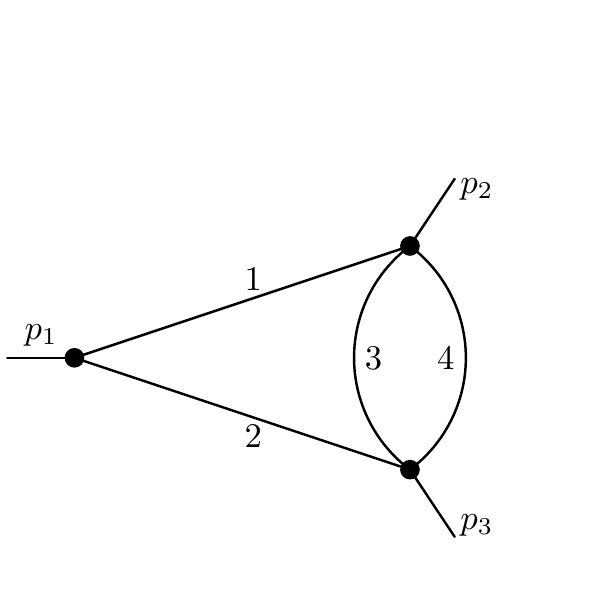}
        \caption{Double-edged triangle graph or dunce's cap graph}
      \end{center}
    \end{wrapfigure}
    In order to demonstrate the methods described before, we want to calculate the leading Landau variety of a massive $2$-loop $3$-point function according to figure \ref{fig:duncescap}, which is also known as ``double-edged triangle graph'' or as ``dunce's cap''. To our knowledge this Landau variety was not published before. Using \HKP we will give the Landau variety in a parametrized form. Compared with the standard methods of eliminating variables from the Landau equations, the \HKP can be calculated very fast. We have to mention, that the reason for the effectiveness of the \HKP lies in a different representation of the result. To determine the defining polynomial of the Landau variety from the parametrized form is still a very time-consuming task. However, for many approaches the parametrized form of Landau varieties can be even more convenient, since the parametrization specifies the Landau singularities directly. \\
    
    The leading Landau variety is nothing else than the hypersurface $\{\Delta_{\Aa}(\Ff) = 0\}$. For the Feynman graph in figure \ref{fig:duncescap} the second Symanzik polynomial is given by
    \begin{align}
      \Ff &= s_1 x_1 x_2 (x_3 + x_4) + s_2 x_1 x_3 x_4 + s_3 x_2 x_3 x_4 \nonumber \\
      & + m_1^2 x_1^2 (x_3 + x_4) + m_2^2 x_2^2 (x_3 + x_4) + m_3^2 x_3^2 (x_1 + x_2 + x_4) + m_4^2 x_4^2 (x_1 + x_2 + x_3)
    \end{align}
    where we abbreviate $s_1 = - p_1^2 + m_1^2 + m_2^2$, $s_2= - p_2^2 + m_1^2 + m_3^2 + m_4^2$ and $s_3 = - p_3^2 + m_2^2 + m_3^2 + m_4^2$. Thus, we can read off the matrix $\Aa$
    {\footnotesize \begin{align}
      \Aa = \left(
             \begin{array}{cccccccccccccc}
               1 & 1 & 1 & 0 & 2 & 2 & 0 & 0 & 1 & 0 & 0 & 1 & 0 & 0 \\
               1 & 1 & 0 & 1 & 0 & 0 & 2 & 2 & 0 & 1 & 0 & 0 & 1 & 0 \\
               1 & 0 & 1 & 1 & 1 & 0 & 1 & 0 & 2 & 2 & 2 & 0 & 0 & 1 \\
               0 & 1 & 1 & 1 & 0 & 1 & 0 & 1 & 0 & 0 & 1 & 2 & 2 & 2 \\
             \end{array} \right) \comma
    \end{align}}
    whose column vectors form the exponents of $\Ff$. We have several possibilities to choose a Gale dual of $\Aa$, e.g.
    {\footnotesize \begin{align}
      \mathcal B = \left(
             \begin{array}{cccccccccc}
               1 & 1 & 1 & 0 & -1 & -1 & 0 & -1 & 0 & -1 \\
               0 & -1 & -1 & 1 & 1 & 1 & -1 & 0 & -1 & 0 \\
               -1 & 0 & -1 & -1 & 0 & -1 & 1 & 1 & -1 & -1 \\
               -1 & -1 & 0 & -1 & -1 & 0 & -1 & -1 & 1 & 1 \\
               0 & 0 & 0 & 0 & 0 & 0 & 0 & 0 & 0 & 1 \\
               0 & 0 & 0 & 0 & 0 & 0 & 0 & 0 & 1 & 0 \\
               0 & 0 & 0 & 0 & 0 & 0 & 0 & 1 & 0 & 0 \\
               0 & 0 & 0 & 0 & 0 & 0 & 1 & 0 & 0 & 0 \\
               0 & 0 & 0 & 0 & 0 & 1 & 0 & 0 & 0 & 0 \\
               0 & 0 & 0 & 0 & 1 & 0 & 0 & 0 & 0 & 0 \\
               0 & 0 & 0 & 1 & 0 & 0 & 0 & 0 & 0 & 0 \\
               0 & 0 & 1 & 0 & 0 & 0 & 0 & 0 & 0 & 0 \\
               0 & 1 & 0 & 0 & 0 & 0 & 0 & 0 & 0 & 0 \\
               1 & 0 & 0 & 0 & 0 & 0 & 0 & 0 & 0 & 0 \\
             \end{array}\right) \point
    \end{align}}
      
    According to section \ref{sec:Adisc} a Gale dual directly gives rise to a parametrization of generic Feynman integrals. Let $z_1,\ldots,z_{14}$ be the coefficients of a generic Feynman integral, i.e. coefficients of the second Symanzik polynomial $\Ff$. Then, the generic discriminant hypersurface $\{\Delta_{\Aa_\Ff}(\Ff)=0\}$ is parametrized by $t\in\mathbb P_{\mathbb C}^{10-1}$
    \begin{align}
      y_1 &= \frac{z_1 z_{14}}{z_3 z_4} = \frac{R_1 t_1}{R_3 R_4} \quad,\qquad    
      y_2 = \frac{z_1 z_{13}}{z_2 z_4} = - \frac{R_1 t_2}{R_2 R_4} \quad,\qquad  
      y_3 = \frac{z_1 z_{12}}{z_2 z_3} = - \frac{R_1 t_3}{R_2 R_3} \nonumber\\
      y_4 &= \frac{z_2 z_{11}}{z_3 z_4} = \frac{R_2 t_4}{R_3 R_4} \quad,\qquad 
      y_5 = \frac{z_2 z_{10}}{z_1 z_4} = - \frac{R_2 t_5}{R_1 R_4} \quad,\qquad 
      y_6 = \frac{z_2 z_9}{z_1 z_3} = - \frac{R_2 t_6}{R_1 R_3} \nonumber\\
      y_7 &= \frac{z_3 z_8}{z_2 z_4} = \frac{R_3 t_7}{R_2 R_4} \quad,\qquad 
      y_8 = \frac{z_3 z_7}{z_1 z_4} = \frac{R_3 t_8}{R_1 R_4} \quad,\qquad 
      y_9 = \frac{z_4 z_6}{z_2 z_3} = \frac{R_4 t_9}{R_2 R_3} \nonumber\\
      y_{10} &= \frac{z_4 z_5}{z_1 z_3} = \frac{R_4 t_{10}}{R_1 R_3} \label{eq:duncescapHKPgeneric}
    \end{align}
    where
    \begin{align}
      R_1 &= t_1+t_2+t_3-t_5-t_6-t_8-t_{10} \quad,\qquad   
      R_2 = -t_2-t_3+t_4+t_5+t_6-t_7-t_9 \\
      R_3 &= t_1+t_3+t_4+t_6-t_7-t_8+t_9+t_{10} \quad,\qquad   
      R_4 = t_1+t_2+t_4+t_5+t_7+t_8-t_9-t_{10} \nonumber \point
    \end{align} 
    
    However, up to now this \HKP gives the zero locus of discriminants for polynomials with generic coefficients. Thus, in order to adjust this parametrization to the physical relevant case we have to include the constraints given by the relations between coefficients of the Symanzik polynomials. This can be accomplished e.g. by Mathematica \cite{WolframResearchIncMathematicaVersion12} using the command ``Reduce'' or by Macaulay2 \cite{GraysonMacaulay2SoftwareSystem}. For algorithmical reasons it can be more efficient to reduce the effective variables $y_1,\ldots,y_{10}$, the linear forms $R_1,\ldots,R_4$ and the parameters $t_1,\ldots,t_{10}$ step by step. In doing so, the Landau variety splits into two parts
    \begin{align}
      \frac{m_2^2}{m_1^2} &= \frac{t_6^2 t_8}{t_5^2 t_{10}} \quad,\qquad 
      \frac{m_3^2}{m_1^2} = -\frac{t_6^2 t_9}{\left(t_5+t_6\right) \left(t_9+t_{10}\right)t_{10} } \quad,\qquad 
      \frac{m_4^2}{m_1^2} = -\frac{t_3 t_6 t_{10}}{\left(t_5+t_6\right) \left(t_9+t_{10}\right)t_9 } \nonumber\\
      \frac{s_1}{m_1^2} &= -\frac{t_5 \left(t_6 t_9+t_3 t_{10}\right)+t_6 \left(t_6 t_9+t_8 t_9+t_3 t_{10}+t_9t_{10}\right)}{t_5 t_9 t_{10}} \nonumber \\
      \frac{s_2}{m_1^2} &= -\frac{t_5 \left(t_9+t_{10}\right) \left(t_6 t_9+t_3 t_{10}\right) + t_6 \left[t_6 t_9^2+t_8 t_9 \left(t_9+t_{10}\right) -t_{10} \left(t_9^2+t_{10} t_9-t_3 t_{10}\right)\right]}{\left(t_5+t_6\right)  \left(t_9+t_{10}\right)t_9 t_{10}} \nonumber \\
      \frac{s_3}{m_1^2} &= -\frac{t_6 \left[t_6 \left(t_9+t_{10}\right) \left(t_6 t_9-t_8 t_9+t_3 t_{10} + t_9 t_{10}\right)+t_5 \left(t_6 t_9^2+t_3 t_{10}^2\right)\right]}{\left(t_5+t_6\right) \left(t_9+t_{10}\right) t_5 t_9 t_{10} } \label{eq:duncesCapResult1}
    \end{align}
    and 
    \begin{align}
      \frac{m_2^2}{m_1^2} &=  \frac{t_6^2 t_8}{t_5^2 t_{10}} \quad,\qquad 
      \frac{m_3^2}{m_1^2} = \frac{t_6^2 t_9}{t_4 t_{10}^2} \quad,\qquad 
      \frac{m_4^2}{m_1^2} = \frac{t_6^2}{t_4 t_9} \quad,\qquad
      \frac{s_1}{m_1^2} = \frac{t_6 \left(\left(t_4-t_9\right) t_{10} -t_8 t_9\right)}{t_5 t_9 t_{10}} \nonumber \\
      \frac{s_2}{m_1^2} &= -\frac{t_6 \left[t_{10} \left(t_4 \left(t_9+t_{10}\right)+t_9 \left(2 t_6+t_9+t_{10}\right)\right)-t_8 t_9 \left(t_9+t_{10}\right)\right]}{t_4 t_9 t_{10}^2} \nonumber\\
      \frac{s_3}{m_1^2} &= -\frac{t_6^2 \left[t_8 t_9 \left(t_9+t_{10}\right)+t_{10} \left(t_4 \left(t_9+t_{10}\right)-t_9 \left(-2 t_5+t_9+t_{10}\right)\right)\right]}{t_4 t_5 t_9 t_{10}^2} \point \label{eq:duncesCapResult2}
    \end{align}
    Hence, the leading Landau variety of the dunce's cap graph will be given by the values of (\ref{eq:duncesCapResult1}) and (\ref{eq:duncesCapResult2}) for all values $t\in\mathbb P^{6-1}_{\mathbb C}$. We dispense with renaming of the parameters $t$ in order to ensure the reproducibility of the results from (\ref{eq:duncescapHKPgeneric}).

  \section{Conclusion and Outlook}
  
    We have propounded the kinematic singularities of scalar Feynman integrals from the perspective of $\Aa$-hypergeometric theory. This point of view provides a mathematically rigorous description of those singularities by means of principal $A$-determinants, which are polynomials in the coefficients of Symanzik polynomials. More precisely, it turns out that the singular locus of Feynman integrals is the variety defined by the principal $A$-determinant of the sum of the Symanzik polynomials $\Gg=\Uu+\Ff$
    \begin{align}
      \Sing(\mathcal M_\Aa(\nuu)) = \mathbf V (E_A(\Gg)) = \mathbf V \Big(\Delta_A(\Gg) \cdot E_{A_\Uu}(\Uu) \cdot E_{A_\Ff}(\Ff) \cdot R\Big) \point
    \end{align}
    This principal $A$-determinant factorizes in several $A$-discriminants, each one corresponding to a face of the Newton polytope $\Newt(\Gg)$. Hence, we can group those $A$-discriminants to four different partitions. The $A$-discriminant of the full polytope $\Delta_A(\Gg)$ can be identified with the second-type singularities. The two facets $\Newt(\Uu)$ and $\Newt(\Ff)$ of $\Newt(\Gg)$ correspond to a physically non-relevant part $E_{A_\Uu}(\Uu)$ and the so-called Landau variety $E_{A_\Ff}(\Ff)$, respectively. All remaining $A$-discriminants constitute a polynomial $R$, which can be associated to second-type singularities of subgraphs. We have to mention that the factorization may include parts of the singular locus, which were not included in previous approaches. This is due to the fact that, except for simple Feynman graphs, not all truncated polynomials have an equivalent polynomial coming from a subgraph. It is an interesting question for future research if those additional parts will result in a non-trivial contribution. However, these additional parts would appear beyond $1$-loop graphs and banana graphs.

    Apart from the description of the singular locus, we also introduced a powerful tool to determine the singular locus: the \HKp. Thus, the calculation of a Gale dual is sufficient to obtain a parametrization of the hypersurface defined by $A$-discriminants. Clearly, such a representation of a variety differs from a representation via defining polynomials. However, such a representation can be even more convenient for many approaches, as we describe the singularities directly. Also having in mind, that a representation of Landau varieties by a defining polynomial will be an incommensurable effort for almost all Feynman graphs, we want to advertise the usage of \HKp.
    
    In order to study the monodromy of Feynman integrals, we introduced an Euler-Mellin integral with a rotated integration contour. This integral is the analogue to Feynman's $i\epsilon$ prescription and its behaviour is substantially determined by the coamoeba of the polynomial $\Gg$. From the shape of the coamoeba of $\Gg$, we can also conclude the nature of the singularities of Feynman integrals, e.g. we can distinguish between normal/anomalous thresholds and pseudo thresholds. We sketched also several ways to approximate the coamoeba in order to derive efficient algorithms. However, the application of coamoebas to Feynman integrals leaves many questions open and will surely be a worthwhile focus for future research.
    
    \pagebreak
    \appendix
  
  \section{A short guide to Macaulay2}
  
    For the calculation of $A$-discriminants and related objects it will be convenient to use a program, which is specialised on algebraic geometry e.g. Macaulay2 \cite{GraysonMacaulay2SoftwareSystem}. There are two additional libraries for the calculation of classical discriminants and resultants \cite{StaglianoPackageComputationsClassical2018} and $A$-discriminants and $A$-resultants \cite{StaglianoPackageComputationsSparse2020}. In order to calculate also (simple) principal $A$-determinants from the libraries \cite{StaglianoPackageComputationsClassical2018,StaglianoPackageComputationsSparse2020}, we will present an elementary package \verb|Landau.m2| below, which is adjusted to the approach of Feynman integrals. 
    
    In order to demonstrate the usage, we will show the calculation of the Landau variety of the triangle graph.
    {\tiny
    \begin{verbatim}
$ M2 --no-preload
Macaulay2, version 1.18

i1 : loadPackage "Landau";

i2 : QQ[s1,s2,s3,b1,b2,b3][x1,x2,x3]; U=x1+x2+x3; F = s1*x2*x3 + s2*x1*x3 + s3*x1*x2 + b1*x1^2 + b2*x2^2 + b3*x3^2;

i5 : principalAdet(U+F)

        2              2                       2                                                                                                                          
o5 = {s1  - 2s1*s2 + s2  - 2s1*s3 - 2s2*s3 + s3  + 4s1*b1 + 4s2*b2 - 4b1*b2 + 4s3*b3 - 4b1*b3 - 4b2*b3, - s1 + b2 + b3, 1, 1, - s2 + b1 + b3, 
     -----------------------------------------------------------------------------------------------------------------------------------------
                                         2                              2       2       2                                 2                  2
     1, b3, - s3 + b1 + b2, 1, 1, 1, - s1  + 4b2*b3, b2, - s1*s2*s3 + s1 b1 + s2 b2 + s3 b3 - 4b1*b2*b3, 1, 1, 1, b1, - s2  + 4b1*b3, 1, - s3  + 4b1*b2}
    \end{verbatim}}

    Package \verb|Landau.m2|
    {\tiny
    \begin{Verbatim}[frame=single]
-- Landau - small package to calculate Landau varieties, i.e. the principal A-determinant, mostly fitted to Feynman integrals 
-- Instructions:
-- the packages Polyhedra and Resultants have to be installed. If they are not installed, run: installPackage "Polyhedra"
-- this file have to be stored in a Macaulay2 path. The Macaulay2 paths can be displayed by the command: path
-- this package can be used by the command: loadPackage "Landau"

newPackage(
        "Landau",
        Version => "2.1", 
        Date => "August 24, 2021",
        Authors => {{Name => "Rene P. Klausen", 
                  Email => "klausen@physik.hu-berlin.de"}},
        Headline => "Calculating Landau varieties by means of principal A-determinants",
        DebuggingMode => true
        )
        
export {"principalAdet","generalDiscriminant","allTruncs"}

needsPackage "Polyhedra"
needsPackage "Resultants"
needsPackage "SparseResultants"

-- general stuff
ListTimes = (L1,L2) -> apply(L1,L2, (i,j) -> i*j)

-- truncation of polynomials
poly2A = f -> transpose(matrix(exponents(f)))
ptsOfFace = (A,face) -> apply(splice({0..numgens(source(A))-1}),i -> if contains(face,convexHull(submatrix(A,{i})))==true then 1 else 0 )    
faceTruncation = (f,A,face) -> sum(ListTimes(terms(f),ptsOfFace(A,face)))
truncatedPolynomial = (f,k) -> (A := poly2A(f); P := convexHull(A); face := facesAsPolyhedra(k,P); for i from 0 to #face-1 
      list faceTruncation(f,A,face_i)) 
allTruncs = f -> (n := numgens(ring(f)); mingle delete({},for i from 0 to n list truncatedPolynomial(f,i)))

-- fit rings
usedVars = (f,R) -> (gR := gens(R); delete("del",for i from 0 to #gR-1 list if diff(gR_i,f)==0 then "del" else i ))
fitRing = f -> (R := ring(f); substitute(f,first(selectVariables(usedVars(f,R),R))))
factorOut = (f,var) -> if pseudoRemainder(f,var)==0 then f//var else f
completeFactorOut = (f,varList) -> ((for i from 0 to #varList-1 do f = factorOut(factorOut(f,varList_i),varList_i)); f) 
      -- at most for quadratic expressions (as in Symanzik polynomials)
dehomogenize = f -> (if isHomogeneous(f) then (sub(f,last(gens(ring f))=>1) ) else f);

--principal A determinant
generalDiscriminant = f -> (try (
  m:= # terms f;
  if m == 1 then (
    print("vertex type"); 
    coeff := ((coefficients(f))_1)_(0,0);
    sub(coeff, coefficientRing ring f) )
  else (
    f = fitRing f;
    n:= numgens ring f;
    if m-1 <= n then (
      print("dense discriminant"); 
      f = fitRing dehomogenize f; 
      f = completeFactorOut(f, gens ring f); 
      f = fitRing dehomogenize f;
      denseDiscriminant f )
    else (
      print("sparse discriminant"); 
      sparseDiscriminant f) )
  ) else (print("NN"); "NN")
);
principalAdet = f -> apply(allTruncs(f),generalDiscriminant);

beginDocumentation()
document { 
        Key => Landau,
        Headline => "Calculating Landau varieties by means of principal A-determinants",
        EM "Landau", " is a basic package to calculate Landau singularities."
        }
document {
        Key => {allTruncs},
        Headline => "all truncated polynomials",
        Usage => "allTruncs(f)",
        Inputs => {"a polynomial f"},
        Outputs => {"a list of all truncated polynomials"},
        EXAMPLE lines ///
           QQ[m1,m2,s][x1,x2]; F = m1*x1^2 + m2*x2^2 + s*x1*x2;
           allTruncs F
        ///
        }
document {
        Key => {generalDiscriminant},
        Headline => "calculate the A-discriminant",
        Usage => "generalDiscriminant f",
        Inputs => {"f a polynomial"},
        Outputs => {"A-discriminant"},
        EXAMPLE lines ///
           QQ[m1,m2,s][x1,x2]; F = m1*x1^2 + m2*x2^2 + s*x1*x2;
           generalDiscriminant F
        ///
        }     
document {
        Key => {principalAdet},
        Headline => "calculate the simple principal A-determinant",
        Usage => "principalAdet f",
        Inputs => {"f a polynomial"},
        Outputs => {"A list of all A-discriminants of all truncated polynomials, the (simple) principal A-determinant is the product of all 
            list elements. Additionally the used method for every A-discriminant is printed on screen."},
        EXAMPLE lines ///
           QQ[m1,m2,s][x1,x2]; F = m1*x1^2 + m2*x2^2 + s*x1*x2;
           principalAdet F
        ///
        }             
end--
    \end{Verbatim}
    }

    \printbibliography

\end{document}